\newtheorem{theorem}{Theorem}
\newtheorem{lemma}[theorem]{Lemma}
\newtheorem{corollary}[theorem]{Corollary}
\newtheorem{algorithm}[theorem]{Algorithm}
\DeclareMathOperator*{\argmax}{arg\,max}
\DeclareMathOperator*{\argmin}{arg\,min}
\title{\huge{An LP approach for Solving Two-Player Zero-Sum Repeated Bayesian Games}}
\author{Lichun Li, Cedric Langbort and Jeff Shamma}
\begin{document}

\maketitle

\begin{abstract}
This paper studies two-player zero-sum repeated Bayesian games in which every player has a private type that is unknown to the other player, and the initial probability of the type of every player is publicly known. The types of players are independently chosen according to the initial probabilities, and are kept the same all through the game. At every stage, players simultaneously choose actions, and announce their actions publicly. For finite horizon cases, an explicit linear program is provided to compute players' security strategies. Moreover, based on the existing results in \cite{rosenberg1998duality}, this paper shows that a player's sufficient statistics, which is independent of the strategy of the other player, consists of the belief over the player's own type, the regret with respect to the other player's type, and the stage. Explicit linear programs are provided to compute the initial regrets, and the security strategies that only depends on the sufficient statistics. For discounted cases, following the same idea in the finite horizon, this paper shows that a player's sufficient statistics consists of the belief of the player's own type and the anti-discounted regret with respect to the other player's type. Besides, an approximated security strategy depending on the sufficient statistics is provided, and an explicit linear program to compute the approximated security strategy is given. This paper also obtains a bound on the performance difference between the approximated security strategy and the security strategy.
\end{abstract}

\section{Introduction}
In many strategic decision-making situations, players do not have complete information about the other decision-makers' characteristics or payoffs. This kind of situations is typically modeled as as so-called Bayesian game, in which each player has a type whose realization is privately observed, although a prior distribution is common knowledge.

In most prior work in the literature of Bayesian games, one player's belief on the types of the other players plays an important role in figuring out the Nash equilibrium or the perfect Bayesian equilibrium. Generally speaking, a player's belief of the other players' type depends on the other players' strategies. A special class of Bayesian games in which the common information based beliefs are strategy independent was considered in \cite{nayyar2014common}. Because of the decoupling between the strategies and the beliefs, a backward induction algorithm was given to find Nash equilibria of the game. The cases when the beliefs of the players are strategy dependent were considered in \cite{ouyang2017dynamic,vasal2016systematic}. Both papers used perfect Bayesian equilibrium as their solution concept. Perfect Bayesian equilibrium consists of a strategy profile and a belief system such that the strategies are sequential rational given the belief system and the belief system is consistent given the strategy profile \cite{fudenberg1991perfect}. Based on the common information based belief system, \cite{ouyang2017dynamic} studied common information based perfect Bayesian equilibrium, and \cite{vasal2016systematic} studied structured perfect Bayesian equilibrium. Backward recursive formulas were given in both papers to find the corresponding perfect Bayesian equilibrium.

Bayesian games are also called games with incomplete information, and were studied in \cite{aumann1995repeated,rosenberg1998duality,zamir1992repeated}. This prior work mainly studied two-player zero-sum repeated Bayesian games, which are the most closely related to the present work. In two-player zero-sum repeated Bayesian games, minmax value, maxmin value, and game value are examined. Minmax value and maxmin value are also called the security level of the minimizer and the maximizer, respectively \cite{bacsar1998dynamic}. The strategy that guarantees the security level of the maximizer is called the security strategy of the maximizer, and the strategy assuring the security level of the minimizer is called the security strategy of the minimizer. When minmax value equals to maxmin value, we say the game has a value. In other words, there exists a Nash equilibrium, which is the security strategy pair. It was shown that the Nash equilibrium exists for both finite horizon and discounted two-player zero-sum repeated Bayesian games, but may not exist in infinite horizon average payoff two-player zero-sum repeated Bayesian games \cite{aumann1995repeated}. Later, \cite{rosenberg1998duality,sorin2002first} provided backward recursive formulas to compute the game value for finite horizon case and discounted case. In the same paper, dual games of two-player zero-sum repeated Bayesian games were also studied. Besides backward recursive formulas, it was shown that the security strategy of a player in the dual game with special initial parameters is also the player's security strategy in the primal game, and that the security strategy only depends on the sufficient statistics consisting of the belief on the player's own type, a real vector with the same size as that of the other player's type set, and the stage if this is a finite horizon game. The physical meaning of the real vector in the primal game was not clear in \cite{rosenberg1998duality,sorin2002first}. 

This paper adopts the same game model as the one used in \cite{aumann1995repeated,sorin2002first,zamir1992repeated,rosenberg1998duality}, and focuses on developing prescriptive methods for players, i.e. computing the security strategies of players. The main contribution of this paper includes three aspects. First, this paper clarifies that the real vector in a player's self-dependent sufficient statistics is the player's regret on the other player's type. Given the type $k$ of the other player, a player's regret on $k$ is the difference between the expected total payoff realized so far and the expected total payoff over all time using the security strategy if the other player is of type $k$. Second, explicit linear program formulations are provided to compute the initial condition of the self-dependent sufficient statistics and the security strategies. Third, in discounted cases when approximated security strategies are provided, the performance difference between the approximated security strategy and the security strategy is shown to be bounded.

This paper considers the following two-player zero-sum repeated Bayesian games. Each player has his own type that is only known to himself. The one-stage payoff function depends on both players' types and actions. At the beginning of the game, Nature chooses each player's type independently according to some publicly known probability, and send the type to the corresponding player. The players then choose their actions simultaneously based on their own types and both players' history actions at every stage. The reward of the maximizer is the sum of the one-stage payoffs all over the stages in a finite stage game, and the sum of discounted one-stage payoffs in a discounted game.

For the finite horizon case, we first provide an explicit linear program to compute players' security strategies based on the idea of realization plan in sequence form. Since the computed security strategy depends on both players' history actions, and the dimension of the history action space grows exponentially with the time horizon, prohibitive memory may be needed to record the strategies associated with the exponentially growing history action space as the time horizon increases. To save the memory, we provide the self-dependent sufficient statistics for players. Every player only needs to record the sufficient statistics whose size is time invariant, and computes the corresponding security strategy at every step. Based on the study in \cite{rosenberg1998duality}, this paper shows that the sufficient statistics consists of the player's belief on his/her own type and the player's regret on the other player's type. Besides, explicit linear programs are presented to compute the initial regrets in the game, and the security strategies based on the self-dependent sufficient statistics. Simulation results show that the two security strategies developed from different methods achieve the same game value.

For the discounted case, the self-dependent sufficient statistics consists of the belief on a player's own type and the \emph{anti-discounted} regret on the other player's type. An approximated security strategy that depends on the self-dependent sufficient statistics is provided, and a linear program is given to compute the approximated security strategy. Moreover, the performance difference between the approximated security strategy and the security strategy is studied, and a bound on the performance difference is presented.

The remainder of this paper is organized as follows. Section II presents the main results for finite horizon games. Section III discusses discounted games. Section IV demonstrates the main results on a jamming problem in underwater sensor networks. Finally, section V provides some future work.

\section{$T$-Stage Repeated Bayesian Games}
Let $\mathbb{R}^n$ denote the $n$-dimensional real space. For a finite set $K$, $|K|$ and $\Delta(K)$ denotes its cardinality and the set of probability distributions over $K$, respectively. Symbols $\mathbf{1}$ and $\mathbf{0}$ denote appropriately dimensional column vectors whose elements are all $1$ and $0$, respectively. Let $v(0),v(1),\ldots$ be a sequence of real values. We adopt the convention that $\sum_{t=1}^0 v(t)=0$, and $\prod_{t=1}^0 v(t)=1$. The supremum norm of a function $f: D\rightarrow \mathbb{R}$ is defined as $\|f\|_{\sup}=\sup_{x\in D}|f(x)|$, where $D$ is a non-empty set.

A two-player zero-sum repeated Bayesian game is specified by the seven-tuple $(K,L,A,B,M,p_0,q_0)$, where
\begin{itemize}
  \item $K$ and $L$ are non-empty finite sets, called player 1 and 2's type sets, respectively.
  \item $A$ and $B$ are non-empty finite sets, called player 1 and 2's action sets, respectively.
  \item $M: K\times L \times A \times B \rightarrow \mathbb{R}$ is the one-stage payoff function. $M^{kl}$ indicates the payoff matrix given player 1's type $k\in K$ and player 2's type $l\in L$. The element $M^{kl}_{a,b}$ of matrix $M^{kl}$, also denoted as $M(k,l,a,b)$, is the payoff given player 1's type $k\in K$ and action $a\in A$, and player 2's type $l\in L$ and action $b\in B$.
  \item $p_0\in \Delta(K)$ and $q_0\in \Delta(L)$ are the initial probabilities on $K$ and $L$, respectively. Without loss of generality, we assume $p_0^k,q_0^l>0$ for any $k\in K$ and $l\in L$.
\end{itemize}

A $T$-stage repeated Bayesian game is played as follows. Let $a_t,\ b_t$ denote player 1 and player 2's actions at stage $t=1,\ldots,T$, respectively. At stage $1$, $k$ and $l$ are chosen independently according to $p_0$ and $q_0$, and communicated to player 1 and 2, respectively. After the types are chosen, at stage $t=1,\ldots,T$, each player chooses his action independently, and announces it publicly. The payoff of player 1 at stage $t$ is $M(k,l,a_t,b_t)$. At stage $t=1,\ldots,T$, player 1 and 2's history action sequences $h_t^A$ and $h_t^B$ are defined as $h_t^A=(a_1,\ldots,a_{t-1})$ and $h_t^B=(b_1,\ldots,b_{t-1})$, and their history action spaces are defined as $H^A_{t}=A^{t-1}$ and $H^B_t=B^{t-1}$, respectively. We assume that $H_1^A,H_1^B,H_0^A,H_0^B=\emptyset$. With a little abuse of the terminology $\in$, we use $a_s\in h_t^A$ and $h_s^A\in h_t^A$ to indicate $a_s$ and $h_s^A$ are player 1's action and history action sequence at stage $s$ in the history action sequence $h_t^A$ for any $s=1,\ldots,t-1$. Similarly, $b_s\in h_t^B$ and $h_s^B\in h_t^B$ means that $b_s$ and $h_s^B$ are player 2's action and history action sequence at stage $s$ in the history action sequence $h_t^B$ for any $s=1,\ldots,t-1$.

A behavior strategy for player 1 is an element of $\sigma=(\sigma_t)_{t=1}^T$, where $\sigma_t$ is a map from $K\times H^A_t \times H^B_t$ to $\Delta(A)$. Similarly, a behavior strategy for player 2 is an element of $\tau=(\tau_t)_{t=1}^T$, where $\tau_t$ is a map from $L\times H^A_t \times H^B_t$ to $\Delta(B)$. Denote by $\Sigma$ and $\mathcal{T}$ the sets of strategies of player 1 and 2, respectively. Denote by $\sigma_t^{a_t}(\cdot,\cdot,\cdot)$ and $\tau_t^{b_t}(\cdot,\cdot,\cdot)$ the probabilities of playing $a_t$ and $b_t$ at stage $t$, respectively.

A quadruple $(p_0,q_0,\sigma,\tau)$ induces a probability distribution $P_{p_0,q_0,\sigma,\tau}$ on the set $\Omega=K\times L\times (A\times B)^T$ of plays. $\mathbb{E}_{p_0,q_0,\sigma,\tau}$ stands for the corresponding expectation. The payoff with initial probabilities $p_0,q_0$ and strategies $\sigma,\tau$ of the $T$-stage repeated Bayesian game is defined as $\gamma_T(p_0,q_0,\sigma,\tau)=\mathbb{E}_{p_0,q_0,\sigma,\tau}\left(\sum_{t=1}^{T}M(k,l,a_t,b_t) \right).$

The $T$-stage game $\Gamma_T(p_0,q_0)$ is defined as a two-player zero-sum repeated Bayesian game equipped with initial distribution $p_0$ and $q_0$, strategy spaces $\Sigma$ and $\mathcal{T}$, and payoff function $\gamma_T(p_0,q_0,\sigma,\tau)$. In this game, player 1 wants to \emph{maximize} the payoff $\gamma_T(p_0,\sigma,\tau)$, while player 2 wants to \emph{minimize} it.

Consider a $T$-stage game $\Gamma_T(p_0,q_0)$. The security level $\underline{V}_T(p_0,q_0)$ of player 1 is defined as $\underline{V}_T(p_0,q_0)=\max_{\sigma\in\Sigma}\min_{\tau\in\mathcal{T}}\gamma_T(p_0,q_0,\sigma,\tau)$, and the strategy $\sigma^*\in \Sigma$ which achieves player 1's security level is called the security strategy of player 1. Similarly, the security level $\overline{V}_T(p_0,q_0)$ of player 2 is defined as $\overline{V}_T(p_0,q_0)=\min_{\tau\in\mathcal{T}}\max_{\sigma\in\Sigma}\gamma_T(p_0,q_0\sigma,\tau)$, and the strategy $\tau^*\in \mathcal{T}$ which achieves player 2's security level is called the security strategy of player 2. When $\underline{V}_T(p_0,q_0)=\overline{V}_T(p_0,q_0)$, we say game $\Gamma_T(p_0,q_0)$ has a value, i.e. the game has a Nash equilibirum. Since game $\Gamma_T(p_0,q_0)$ is a finite game, it always has a value denoted by $V_T(p_0,q_0)$ \cite{sorin2002first}.

\subsection{LP formulations for players' security strategies}
A $T$-stage Bayesian repeated game is a finite game, and its security strategy can be computed by solving a linear program based on the sequence form \cite{von1996efficient}. The linear program provided in \cite{von1996efficient}, however, can not be directly used, because in our case, the strategies of both players depend on their own types which is not the same situation as in \cite{von1996efficient}. Therefore, we adopt the idea of \emph{realization plan} in the sequence form, and construct an explicit linear program for $T$-stage Bayesian repeated games.

Let us first introduce the \emph{realization plan}. Define player 1 and 2's realization plan $x_{k,h_t^A,h_t^B}^{a_t}$ and $y_{l,h_t^A,h_t^B}^{b_t}$ as
\begin{align}
  x_{k,h_t^A,h_t^B}^{a_t}=p^k\prod_{s=1}^t \sigma_s^{a_s}(k,h_s^A,h_s^B), \forall t=0,\ldots,T\label{eq: x}\\
  y_{l,h_t^A,h_t^B}^{b_t}=q^l\prod_{s=1}^t \tau_s^{b_s}(l,h_s^A,h_s^B),\forall t=0,\ldots,T\label{eq: y}
\end{align}
where $a_s,h_s^A\in h_t^A$ and $b_s,h_s^A\in h_t^A$ for all $s=1,\ldots,t-1$. It is easy to verify that the joint probability $P(k,l,h_{t+1}^A,h_{t+1}^B)$ satisfies $P(k,l,h_{t+1}^A,h_{t+1}^B)=x_{k,h_t^A,h_t^B}^{a_t}y_{l,h_t^A,h_t^B}^{b_t}$, where $a_t,h_t^A\in h_{t+1}^A$ and $b_t,h_{t}^B\in h_{t+1}^B$. Let $x_{t}=(x_{k,h_t^A,h_t^B})_{k\in K,h_t^A\in H_t^A,h_t^B\in H_t^B}$ and $y_{t}=(y_{l,h_t^A,h_t^B})_{l\in L,h_t^A\in H_t^A,h_t^B\in H_t^B}$. Denote by $x=(x_{t})_{t=1}^T$ and $y=(y_t)_{t=1}^T$ player 1 and 2's realization plans over all the $T$-stage Bayesian game. Player 1's realization plan $x$ satisfies constraint (\ref{eq: x constraint   1}-\ref{eq: x constraint 3}), and the corresponding set is denoted by $X$. Similarly, player 2's realization plan $y$ satisfies constraint (\ref{eq: y constraint   1}-\ref{eq: y constraint 3}), and the corresponding set is denoted by $Y$.
\begin{align}
  \mathbf{1}^T x_{k,h_t^A,h_t^B}=&x_{k,h_{t-1}^A,h_{t-1}^B}^{a_{t-1}}, \label{eq: x constraint 1}\\
  x_{k,h_t^A,h_t^B} \geq &\mathbf{0},\label{eq: x constraint 3}
\end{align}
for all $t=1,\ldots,T$, $k\in K$, $h_t^A\in H_t^A$, and $h_t^B\in H_t^B.$
\begin{align}
  \mathbf{1}^T y_{l,h_t^A,h_t^B}=&y_{l,h_{t-1}^A,h_{t-1}^B}^{b_{t-1}}, \label{eq: y constraint 1}\\
  y_{l,h_t^A,h_t^B}\geq &\mathbf{0},\label{eq: y constraint 3}
\end{align}
for all $t=1,\ldots,T$, $l\in L$, $h_t^A\in H_t^A$, and $h_t^B\in H_t^B.$

With perfect recall, for either player, looking for a security strategy is the same as looking for a realization plan that achieves the security level of the player \cite{von1996efficient}.

Given player 1's realization plan, define player 1's weighted future security payoff $u^{a_t,b_t}_{l,h_t^A,h_t^B}(x)$ for $t=0,\ldots,T$ as
\begin{align}
  &u^{a_t,b_t}_{l,h_t^A,h_t^B}(x)=\min_{\tau_{t+1:T}(l)\in \mathcal{T}_{t+1:T}(l)} \sum_{k\in K}x_{k,h_t^A,h_t^B}^{a_t} \nonumber\\
  &\mathbb{E}\left(\sum_{s=t+1}^T M(k,l,a_s,b_s)|k,l,h_{t+1}^A,h_{t+1}^B\right), \label{eq: u}
\end{align}
where $h_{t+1}^A=(h_t^A,a_t)$, $h_{t+1}^B=(h_t^B,b_t)$, $\tau_{t+1:T}(l)=(\tau_s(l,:,:))_{s=t+1}^T$, and $\mathcal{T}_{t+1:T}(l)$ is the set of player 2's behavior strategies from $t+1$ to $T$ given player 2's type $l$. The pairs $(h_t^A,a_t)$ and $(h_t^B,b_t)$ indicate concatenation. Similarly, define player 2's weighted future security payoff $w^{a_t,b_t}_{k,h_t^A,h_t^B}(y)$ as
\begin{align}
  &w^{a_t,b_t}_{k,h_t^A,h_t^B}(y)=\max_{\sigma_{t+1:T}(k)\in \Sigma_{t+1:T}(k)} \sum_{l\in L}y_{l,h_t^A,h_t^B}^{b_t}\nonumber\\
  &\mathbb{E}\left(\sum_{s=t+1}^T M(k,l,a_s,b_s)|k,l,h_{t+1}^A,h_{t+1}^B\right), \label{eq: w}
\end{align}
for $t=0,\ldots,T$ where $\sigma_{t+1:T}(k)=(\sigma_s(k,:,:))_{s=t+1}^T$, and $\Sigma_{t+1:T}(k)$ is the set of player 1's strategies from stage $t+1$ to $T$ given player 1's type $k\in K$.

For $t=1,\ldots,T$, $u_{l,h_t^A,h_t^B}(x),w_{k,h_t^A,h_t^B}(y)$ are $|A|\times |B|$ matrices whose elements are $u_{l,h_t^A,h_t^B}^{a_t,b_t}(x)$ and $w_{k,h_t^A,h_t^B}^{a_t,b_t}(y)$, respectively. For $t=0$, since $a_t,b_t,h_t^A,h_t^B\in \emptyset$, $u_{l,h_t^A,h_t^B}(x)$ and $w_{k,h_t^A,h_t^B}(y)$ are scalars, and denoted as $u_{l,0}(x)$ and $w_{k,0}(y)$, respectively. Define $u_t(x)=(u_{l,h_t^A,h_t^B}(x))_{l\in L,h_t^A\in H_t^A,h_t^B\in H_t^B}$, and $u(x)=(u_t(x))_{t=1}^{T-1}$. Similarly, define $w_t(y)=(w_{k,h_t^A,h_t^B}^{a_t,b_t}(y))_{k\in K,h_t^A\in H_t^A,h_t^B\in H_t^B}$, and $w(y)=(w_t(y))_{t=1}^{T-1}$. For the convenience of the rest of this paper, let $U$ and $W$ be the real spaces of appropriate dimensions which $u$ and $w$ take values in. The weighted future security payoffs $u,w$ satisfy backward recursive formulas.
\begin{lemma}
  \label{lemma: weighted future security payoff} Consider a $T$-stage Bayesian game $\Gamma_T(p,q)$.
Player 1 and 2's weighted future security payoffs $u_{l,h_t^A,h_t^B}^{a_t,b_t}(x)$ and $w_{k,h_t^A,h_t^B}^{a_t,b_t}(y)$ defined in (\ref{eq:   u}) and (\ref{eq:   w}) satisfy
\begin{align}
 &u_{l,h_t^A,h_t^B}^{a_t,b_t}(x)= \min_{\tau_{t+1}(l,h_{t+1}^A,h_{t+1}^B)\in \Delta(B)}\left(\sum_{k\in K}x_{k,h_{t+1}^A,h_{t+1}^B}^TM^{kl}\right. \nonumber\\
 &\left.+\mathbf{1}^Tu_{l,h_{t+1}^A,h_{t+1}^B}(x)\right)
 \tau_{t+1}(l,h_{t+1}^A,h_{t+1}^B),\label{eq: u recursive} \\
 &w_{k,h_t^A,h_t^B}^{a_t,b_t}(y)= \max_{\sigma_{t+1}(k,h_{t+1}^A,h_{t+1}^B)\in \Delta(A)}\sigma_{t+1}(k,h_{t+1}^A,h_{t+1}^B)^T\nonumber\\
 &\left(\sum_{l\in L}M^{kl}y_{l,h_{t+1}^A,h_{t+1}^B}+w_{k,h_{t+1}^A,h_{t+1}^B}(y)\mathbf{1}\right),\label{eq: w recursive}
\end{align}
for all $t=0,\ldots,T-1$, where $h_{t+1}^A=(h_t^A,a_t)$, $h_{t+1}^B=(h_t^A,b_t)$, $x_{k,h_{t+1}^A,h_{t+1}^B}\in \mathbb{R}^{|A|}$ is player 1's realization plan whose element is defined as in (\ref{eq: x}), and $y_{l,h_{t+1}^A,h_{t+1}^B}\in \mathbb{R}^{|B|}$ is player 2's realization plan whose element is defined as in (\ref{eq: y}). Here, $(h_t^A,a_t)$ and $(h_t^A,b_t)$ indicate concatenation.
\end{lemma}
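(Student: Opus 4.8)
The plan is to establish the recursion for $u$ by a one-step decomposition of the defining minimization in (\ref{eq: u}); the formula for $w$ then follows by the symmetric argument with $\min$ replaced by $\max$ and the roles of the two players interchanged. I would start from the definition of $u_{l,h_t^A,h_t^B}^{a_t,b_t}(x)$, splitting off the stage-$(t+1)$ summand $M(k,l,a_{t+1},b_{t+1})$ from the tail $\sum_{s=t+2}^T M(k,l,a_s,b_s)$, and then apply the law of total expectation to condition the inner expectation on the stage-$(t+1)$ actions $(a_{t+1},b_{t+1})$, i.e.\ on the extended histories $h_{t+2}^A=(h_{t+1}^A,a_{t+1})$ and $h_{t+2}^B=(h_{t+1}^B,b_{t+1})$.

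First I would treat the stage-$(t+1)$ term. Writing out the expectation over $a_{t+1}\sim\sigma_{t+1}(k,h_{t+1}^A,h_{t+1}^B)$ and $b_{t+1}\sim\tau_{t+1}(l,h_{t+1}^A,h_{t+1}^B)$ and using the realization-plan identity $x_{k,h_t^A,h_t^B}^{a_t}\,\sigma_{t+1}^{a_{t+1}}(k,h_{t+1}^A,h_{t+1}^B)=x_{k,h_{t+1}^A,h_{t+1}^B}^{a_{t+1}}$ implied by (\ref{eq: x}), the $\sigma_{t+1}$ factor is absorbed into the realization plan at stage $t+1$. Summing over $k$ and $a_{t+1}$ collapses this term into the row vector $\sum_{k\in K}x_{k,h_{t+1}^A,h_{t+1}^B}^T M^{kl}$ contracted against $\tau_{t+1}(l,h_{t+1}^A,h_{t+1}^B)$, which is the first summand on the right-hand side of (\ref{eq: u recursive}).

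Then I would treat the tail term. The same realization-plan identity absorbs $\sigma_{t+1}$ so that, for each fixed $(a_{t+1},b_{t+1})$, the inner object is exactly the quantity appearing inside the definition (\ref{eq: u}) of $u_{l,h_{t+1}^A,h_{t+1}^B}^{a_{t+1},b_{t+1}}(x)$, but with its minimization over $\tau_{t+2:T}(l)$ not yet performed. The crux of the argument is the interchange of minimizations: I would split $\tau_{t+1:T}(l)=(\tau_{t+1},\tau_{t+2:T}(l))$ and argue that, because the continuation strategies on distinct branches $(a_{t+1},b_{t+1})$ index disjoint subtrees of the game (perfect recall) and enter the objective only through the nonnegative weights $\tau_{t+1}^{b_{t+1}}(\cdots)\ge 0$, the tail minimization can be performed branch-by-branch and pushed inside the sum over $(a_{t+1},b_{t+1})$. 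This identifies the minimized tail with $u_{l,h_{t+1}^A,h_{t+1}^B}^{a_{t+1},b_{t+1}}(x)$ and leaves only the outer minimization over $\tau_{t+1}(l,h_{t+1}^A,h_{t+1}^B)\in\Delta(B)$, yielding the term $\mathbf{1}^T u_{l,h_{t+1}^A,h_{t+1}^B}(x)\,\tau_{t+1}(\cdots)$. I expect this interchange to be the main obstacle, since it is where the dynamic-programming structure must be justified rigorously; the boundary case $t=T-1$ is handled automatically by the empty-sum convention, under which $u_{l,h_T^A,h_T^B}(x)=\mathbf{0}$. Assembling the two terms and collecting the matrix factors gives (\ref{eq: u recursive}), and the analogous computation with the $\max$ operator, using (\ref{eq: y}) and (\ref{eq: w}), gives (\ref{eq: w recursive}).
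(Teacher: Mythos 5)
Your proposal is correct and follows essentially the same route as the paper: a one-step decomposition of the defining minimization, absorption of $\sigma_{t+1}$ into the realization plan via (\ref{eq: x}), and the branch-by-branch interchange of the tail minimization with the sum over stage-$(t+1)$ action pairs, with the $t=T-1$ case reducing to $u_{l,h_T^A,h_T^B}(x)=\mathbf{0}$. The paper merely packages this as a backward induction (base case $T-1$, then the step from $t$ to $t-1$), but the substantive identity it proves at each stage is exactly your decomposition, so the two arguments coincide.
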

\begin{proof}
According to equation (\ref{eq:   u}), we have
\begin{align*}
  &u_{l,h_{T-1}^A,h_{T-1}^B}^{a_{T-1},b_{T-1}}(x)\\
  =&\min_{\tau_T(l,h_T^A,h_T^B)\in \Delta(B)}\sum_{k\in K}x_{k,h_{T-1}^A,h_{T-1}^B}^{a_{T-1}}\sigma_T^T(k,h_T^A,h_T^B)M^{kl}\\
  &\tau_T(l,h_T^A,h_T^B)\\
  =&\min_{\tau_T(l,h_T^A,h_T^B)\in \Delta(B)}\sum_{k\in K}x_{k,h_{T}^A,h_{T}^B}^TM^{kl}\tau_T(l,h_T^A,h_T^B) \\
  =&\min_{\tau_T(l,h_T^A,h_T^B)\in \Delta(B)}\left(\sum_{k\in K}x_{k,h_{T}^A,h_{T}^B}^TM^{kl}+\mathbf{1}^Tu_{l,h_{T}^A,h_{T}^B}(x)\right)\\
  &\tau_T(l,h_T^A,h_T^B).
\end{align*}
The last equality holds because $u_{l,h_{T}^A,h_{T}^B}(x)$ is a zero matrix.

Suppose equation (\ref{eq: u recursive}) holds for all $t=1,\ldots,T-1$. Consider the case of $t-1$.
\begin{align*}
  &u_{l,h_{t-1}^A,h_{t-1}^B}^{a_{t-1},b_{t-1}}(x)\\
  =&\min_{\tau_{t:T}(l)\in\mathcal{T}_{t:T}(l)}\sum_{k\in K}x_{k,h_{t-1}^A,h_{t-1}^B}^{a_{t-1}}\left(\sigma^T_t(k,h_t^A,h_t^B)M^{kl}\right.\\
  & \left.\tau_t(l,h_t^A,h_t^B)+\sum_{a_t\in A}\sum_{b_t\in B}P(a_t,b_t|k,l,h_t^A,h_t^B)\right.\\
  &\left.\mathbb{E}(\sum_{s=t+1}^TM(k,l,a_s,b_s)|k,l,h_t^A,h_t^B,a_t,b_t)\right)\\
  =& \min_{\tau_t(l,h_t^A,h_t^B)\in\Delta(B)}\left\{\sum_{k\in K}x_{k,h_t^A,h_t^B}^TM^{kl}\tau_t(l,h_t^A,h_t^B)\right.\\
  &\left.+\sum_{a_t\in A}\sum_{b_t\in B} \tau_t^{b_t}(l,h_t^A,h_t^B)\min_{\tau_{t+1:T}(l)\in\mathcal{T}_{t+1:T}(l)} \sum_{k\in K} x_{k,h_t^A,h_t^B}^{a_t}\right.\\
  &\left.\mathbb{E}\left(\sum_{s=t+1}^TM(k,l,a_s,b_s)|k,l,h_{t+1}^A,h_{t+1}^B\right)\right\}\\
  =& \min_{\tau_t(l,h_t^A,h_t^B)\in\Delta(B)}\left(\sum_{k\in K}x_{k,h_t^A,h_t^B}^TM^{kl}+\mathbf{1}^T u_{l,h_t^A,h_t^B}(x)\right)\\
  &\tau_t(l,h_t^A,h_t^B).
\end{align*}
Therefore, equation (\ref{eq: u recursive}) holds for all $t=0,\ldots,T-1$.

Following the same steps, equation (\ref{eq: w recursive}) can be shown.
\end{proof}

Now, Let us present the explicit LP formulations. 
\begin{theorem}
\label{theorem: LP primal game T stage}
Consider a $T$-stage repeated Bayesian game $\Gamma_T(p,q)$. The game value $V_T(p,q)$ satisfies
\begin{align}
  &V_T(p,q)=\max_{x\in X,u\in U,u_{:,0}\in\mathbb{R}^{|L|}}\sum_{l\in L} q^l u_{l,0} \label{eq: LP player 1}\\
  s.t. & \sum_{k\in K}{M^{kl}}^Tx_{k,h_1^A,h_1^B}+{u_{l,h_1^A,h_1^B}}^T\mathbf{1} \geq u_{l,0} \mathbf{1},\forall l\in L, \label{eq: LP player 1-1}\\
  & \sum_{k\in K}{M^{kl}}^Tx_{k,h_{t+1}^A,h_{t+1}^B}+{u_{l,h_{t+1}^A,h_{t+1}^B}}^T\mathbf{1} \geq u_{l,h_{t}^A,h_{t}^B}^{a_{t},b_{t}}\mathbf{1}, \nonumber\\
  &\forall t=1,\ldots,T-1,l\in L,h_t^A\in H_t^A,h_t^B\in H_t^B, \label{eq: LP plyaer 1-2}
\end{align}
where $u_{l,h_T^A,h_T^B}$ is a zero matrix for all $l\in L$, $X$ is a set including all real vectors satisfying (\ref{eq: x constraint 1}-\ref{eq: x constraint 3}), and $U$ is a real space of appropriate dimension. Player 1's security strategy $\sigma^{a_t*}_t(k,h_t^A,h_t^B)$ for all $t=1,\ldots,T$,
$k\in K$, $h_t^A\in H_t^A$, $h_t^B\in H_t^B$, and $ a_t\in A$ satisfies
\begin{align}
{\sigma^{a_t}_t}^*(k,h_t^A,h_t^B)={x_{k,h_t^A,h_t^B}^{a_t*}}/{x_{k,h_{t-1}^A,h_{t-1}^B}^{a_{t-1}*}}. \label{eq: player 1's security strategy}
\end{align}

Dually, the game value $V_T(p,q)$ also satisfies
\begin{align}
  &V_T(p,q)=\min_{y\in Y,w\in W,w_{:,0}\in\mathbb{R}^{|K|}}\sum_{k\in K} p^k w_{k,0} \label{eq: LP player 2}\\
  s.t. & \sum_{l\in L} M^{kl} y_{l,h_1^A,h_1^B}+w_{k,h_1^A,h_1^B}\mathbf{1} \leq w_{k,0}\mathbf{1},\forall k\in K, \label{eq: LP player 2 1}\\
  & \sum_{l\in L} M^{kl} y_{l,h_{t+1}^A,h_{t+1}^B}+w_{k,h_{t+1}^A,h_{t+1}^B}\mathbf{1} \leq w_{k,h_{t}^A,h_{t}^B}^{a_{t},b_{t}}\mathbf{1},\nonumber\\
  &\forall t=1,\ldots,T-1,k\in K,h_t^A\in H_t^A,h_t^B\in H_t^B, \label{eq: LP player 2 2}
\end{align}
where $w_{k,h_T^A,h_T^B}$ is a zero matrix for all $k\in K$, $Y$ is a set including all real vectors satisfying (\ref{eq: y constraint 1}-\ref{eq: y constraint 3}), and $W$ is a real space of appropriate dimension. Player 2's security strategy $\tau^{b_t*}_t(l,h_t^A,h_t^B)$ for all $t=1,\ldots,T$,
$l\in L$, $h_t^A\in H_t^A$, $h_t^B\in H_t^B$, and $ a_t\in A$ satisfies
\begin{align}
\tau^{b_t*}_t(l,h_t^A,h_t^B)={y_{l,h_t^A,h_t^B}^{b_t*}}{y_{l,h_{t-1}^A,h_{t-1}^B}^{b_{t-1}*}}.\label{eq: player 2's security strategy}
\end{align}
\end{theorem}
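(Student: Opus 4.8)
The plan is to show that programs (\ref{eq: LP player 1}) and (\ref{eq: LP player 2}) compute the security levels $\underline{V}_T(p,q)$ and $\overline{V}_T(p,q)$ respectively, and then invoke the fact (stated in the excerpt) that the finite game has a value, so that both equal $V_T(p,q)$. I would treat the maximizer's program (\ref{eq: LP player 1}) in detail; the minimizer's case is the mirror image. Throughout I lean on the recursion of Lemma \ref{lemma: weighted future security payoff} and on the realization-plan correspondence under perfect recall \cite{von1996efficient}.

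First I would reduce the security level to a maximization over realization plans. Because player 2 observes its own type $l$ and may choose its continuation independently for each $l$, the inner minimization in $\underline{V}_T=\max_\sigma\min_\tau\gamma_T$ decomposes across types:
\begin{align*}
\min_\tau\gamma_T(p,q,\sigma,\tau)=\sum_{l\in L}q^l\min_{\tau(l)}\sum_{k\in K}p^k\,\mathbb{E}_{\sigma,\tau(l)}\!\left(\sum_{s=1}^T M(k,l,a_s,b_s)\,\middle|\,k,l\right).
\end{align*}
By the definition (\ref{eq: u}) at $t=0$, where the $t=0$ realization weight equals $p^k$, the inner bracket is exactly $u_{l,0}(x)$ for the realization plan $x$ induced by $\sigma$ through (\ref{eq: x}). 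Since maximizing over $\sigma\in\Sigma$ is the same as maximizing over $x\in X$ \cite{von1996efficient}, I obtain $\underline{V}_T(p,q)=\max_{x\in X}\sum_{l\in L}q^l u_{l,0}(x)$.

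Next I would show that, for a fixed feasible $x$, the epigraph-style constraints (\ref{eq: LP player 1-1})--(\ref{eq: LP plyaer 1-2}) together with maximization over $u$ reproduce $u_{l,0}(x)$. Each constraint asserts that $u_{l,h_t^A,h_t^B}^{a_t,b_t}$ lies below the minimum component of the vector $\sum_{k}{M^{kl}}^T x_{k,h_{t+1}^A,h_{t+1}^B}+u_{l,h_{t+1}^A,h_{t+1}^B}^T\mathbf{1}$, which is precisely the quantity minimized over $\tau_{t+1}\in\Delta(B)$ in Lemma \ref{lemma: weighted future security payoff}. Two observations close the step: (i) the recursive values $u(x)$ are feasible and attain each of these upper bounds, by Lemma \ref{lemma: weighted future security payoff}; and (ii) a backward induction starting from the fixed terminal matrices $u_{l,h_T^A,h_T^B}=\mathbf{0}$ shows that the passage from downstream to upstream bounds is monotone (the term $\mathbf{1}^T u_{l,h_{t+1}^A,h_{t+1}^B}$ enters with coefficient $+1$), so every feasible $u$ satisfies $u_{l,0}\le u_{l,0}(x)$. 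Hence, for fixed $x$, $\max_u\sum_l q^l u_{l,0}=\sum_l q^l u_{l,0}(x)$, and maximizing further over $x\in X$ gives $\underline{V}_T(p,q)$, which equals $V_T(p,q)$ because the game has a value.

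Finally, the recovery formula (\ref{eq: player 1's security strategy}) is immediate from (\ref{eq: x}): along any history reached with positive probability, $x_{k,h_t^A,h_t^B}^{a_t}/x_{k,h_{t-1}^A,h_{t-1}^B}^{a_{t-1}}=\sigma_t^{a_t}(k,h_t^A,h_t^B)$, so the optimal $x^*$ encodes a security strategy. Program (\ref{eq: LP player 2}) and formula (\ref{eq: player 2's security strategy}) then follow by the symmetric argument with $(\max,\min)$ and $(x,u,\ge)$ replaced by $(\min,\max)$ and $(y,w,\le)$, yielding $\overline{V}_T(p,q)=V_T(p,q)$; alternatively, one can check directly that the two programs are linear-programming duals. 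I expect the main obstacle to be the tightness of the relaxation — rigorously justifying, via the backward induction, that the inequality constraints are active in the correct pattern at optimum — together with verifying that the type-wise decomposition of player 2's minimization is legitimate, i.e. that fixing player 1's strategy makes the conditional law given $(k,l)$ independent across the distinct type-$l$ continuations.
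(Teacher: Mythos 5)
Your proposal is correct and follows essentially the same route as the paper: reduce $\underline{V}_T$ to $\max_{x\in X}\sum_l q^l u_{l,0}(x)$ via realization plans, convert each inner minimization over $\Delta(B)$ in the recursion of Lemma \ref{lemma: weighted future security payoff} to an epigraph LP, and collapse the nested LPs into one by a backward induction exploiting that the downstream values enter the upstream constraints with positive sign (the paper phrases this as a two-way feasibility comparison between the nested and merged LPs, which is the same monotonicity fact). The tightness step you flag as the main obstacle is exactly the argument the paper spells out for $t=T-2$, so your outline is complete.
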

\begin{proof}
Equation (\ref{eq:   u}) indicates that $V_T(p,q)=\max_{x\in X}\sum_{l\in L}q^lu_{l,0}(x)$, where $u_{l,0}(x)$ satisfies (\ref{eq: u recursive}).

According to the duality theory in LP problem, equation (\ref{eq: u recursive}) can be rewritten as
\begin{align}
&u_{l,h_t^A,h_t^B}^{a_t,b_t}(x)=\max_{u_{l,h_t^A,h_t^B}^{a_t,b_t}\in \mathbb{R}} u_{l,h_t^A,h_t^B}^{a_t,b_t} \label{eq: u(x) tend 0}\\
s.t.&\sum_{k\in K}{M^{kl}}^Tx_{k,h_{t+1}^A,h_{t+1}^B}+u_{l,h_{t+1}^A,h_{t+1}^B}^T(x)\mathbf{1} \geq u_{l,h_{t}^A,h_{t}^B}^{a_{t},b_{t}}\mathbf{1}, \nonumber\\
& \forall t=0,\ldots,T-1. \label{eq: u(x) tend 1}
\end{align}

For $t=T-1$, since $u_{l,h_T^A,h_T^B}(x)$ is a zero matrix, we have
\begin{align}
u_{l,h_{T-1}^A,h_{T-1}^B}^{a_{T-1},b_{T-1}}(x)&=\max_{u_{l,h_{T-1}^A,h_{T-1}^B}^{a_{T-1},b_{T-1}}\in \mathbb{R}} u_{l,h_{T-1}^A,h_{T-1}^B}^{a_{T-1},b_{T-1}} , \label{eq: u(x) tend 2}\\
s.t.&\sum_{k\in K}{M^{kl}}^Tx_{k,h_T^A,h_T^B} \geq u_{l,h_{T-1}^A,h_{T-1}^B}^{a_{T-1},b_{T-1}}\mathbf{1}. \label{eq: u(x) tend 3}
\end{align}

For $t=T-2$, we define
\begin{align}
  &\hat{u}_{l,h_{T-2}^A,h_{T-2}^B}^{a_{T-2},b_{T-2}}(x)\\
  =&\max_{\substack{u_{l,h_{T-2}^A,h_{T-2}^B}^{a_{T-2},b_{T-2}}\in \mathbb{R}\\u_{l,(h_{T-2}^A,a_{T-2}),(h_{T-2}^B,b_{T-2})}\in \mathbb{R}^{|A|\times|B|}}} u_{l,h_{T-2}^A,h_{T-2}^B}^{a_{T-2},b_{T-2}} \label{eq: u(x) tend 4}\\
  s.t.& \sum_{k\in K} {M^{kl}}^T x_{k,h_{T-1}^A,h_{T-1}^B} + u_{l,(h_{T-2}^A,a_{T-2}),(h_{T-2}^B,b_{T-2})}^T \mathbf{1} \nonumber\\
   &\geq u_{l,h_{T-2}^A,h_{T-2}^B}^{a_{T-2},b_{T-2}} \mathbf{1}, \label{eq: u(x) tend 5}\\
  & \sum_{k\in K} {M^{kl}}^T x_{k,h_T^A,h_T^B} \geq u_{l,(h_{T-2}^A,a_{T-2}),(h_{T-2}^B,b_{T-2})}^{a_{T-1},b_{T-1}} \mathbf{1}, \nonumber \\
  &\forall a_{T-1}\in A,b_{T-1}\in B. \label{eq: u(x) tend 6}
\end{align}
We will show that $u_{l,h_{T-2}^A,h_{T-2}^B}^{a_{T-2},b_{T-2}}(x)=\hat{u}_{l,h_{T-2}^A,h_{T-2}^B}^{a_{T-2},b_{T-2}}(x)$. Equation (\ref{eq: u(x) tend 0}) implies that
\begin{align}
 &u_{l,h_{T-2}^A,h_{T-2}^B}^{a_{T-2},b_{T-2}}(x)=\max_{u_{l,h_{T-2}^A,h_{T-2}^B}^{a_{T-2},b_{T-2}}\in \mathbb{R}} u_{l,h_{T-2}^A,h_{T-2}^B}^{a_{T-2},b_{T-2}} \label{eq: u(x) tend 7}\\
  s.t.& \sum_{k\in K} {M^{kl}}^T x_{k,h_{T-1}^A,h_{T-1}^B} + u_{l,(h_{T-2}^A,a_{T-2}),(h_{T-2}^B,b_{T-2})}^{*T} \mathbf{1} \nonumber\\
  &\geq u_{l,h_{T-2}^A,h_{T-2}^B}^{a_{T-2},b_{T-2}} \mathbf{1}, \label{eq: u(x) tend 8}
\end{align}
where the element in $u_{l,(h_{T-2}^A,a_{T-2}),(h_{T-2}^B,b_{T-2})}^{*}$ is the corresponding maximum of LP (\ref{eq: u(x) tend 2}-\ref{eq: u(x) tend 3}).

Let ${u_{l,h_{T-2}^A,h_{T-2}^B}^{a_{T-2},b_{T-2}}}^\star$, $u_{l,(h_{T-2}^A,a_{T-2}),(h_{T-2}^B,b_{T-2})}^{\star}$ be the optimal solution to LP problem (\ref{eq: u(x) tend 4}-\ref{eq: u(x) tend 6}). Since $u_{l,(h_{T-2}^A,a_{T-2}),(h_{T-2}^B,b_{T-2})}^{\star}$ satisfies equation (\ref{eq: u(x) tend 6}) and hence (\ref{eq: u(x) tend 3}), we have ${u_{l,(h_{T-2}^A,a_{T-2}),(h_{T-2}^B,b_{T-2})}^{a_{T-1},b_{T-1}}}^*\geq {u_{l,(h_{T-2}^A,a_{T-2}),(h_{T-2}^B,b_{T-2})}^{a_{T-1},b_{T-1}}}^\star$ for any $a_{T-1}\in A,$ $b_{T-1}\in B$. Together with equation (\ref{eq: u(x)    tend 5}), we show that ${u_{l,h_{T-2}^A,h_{T-2}^B}^{a_{T-2},b_{T-2}}}^\star$ satisfies equation (\ref{eq: u(x)   tend 8}). Therefore, ${u_{l,h_{T-2}^A,h_{T-2}^B}^{a_{T-2},b_{T-2}}}^\star$, $u_{l,(h_{T-2}^A,a_{T-2}),(h_{T-2}^B,b_{T-2})}^{\star}$ is a feasible solution to the nested LP problem (\ref{eq: u(x) tend  7}-\ref{eq: u(x)   tend 8}), and $\hat{u}_{l,h_{T-2}^A,h_{T-2}^B}^{a_{T-2},b_{T-2}}(x)\leq u_{l,h_{T-2}^A,h_{T-2}^B}^{a_{T-2},b_{T-2}}(x).$

Meanwhile, let ${u_{l,h_{T-2}^A,h_{T-2}^B}^{a_{T-2},b_{T-2}}}^*, u_{l,(h_{T-2}^A,a_{T-2}),(h_{T-2}^B,b_{T-2})}^{*}$ be the optimal solution to the nested LP (\ref{eq: u(x) tend 7}-\ref{eq: u(x) tend 8}). It is easy to check that ${u_{l,h_{T-2}^A,h_{T-2}^B}^{a_{T-2},b_{T-2}}}^*, u_{l,(h_{T-2}^A,a_{T-2}),(h_{T-2}^B,b_{T-2})}^{*}$ is a feasible solution to LP (\ref{eq: u(x) tend 4}-\ref{eq: u(x) tend 6}), and hence $u_{l,h_{T-2}^A,h_{T-2}^B}^{a_{T-2},b_{T-2}}(x)\leq \hat{u}_{l,h_{T-2}^A,h_{T-2}^B}^{a_{T-2},b_{T-2}}(x).$ Therefore,  $u_{l,h_{T-2}^A,h_{T-2}^B}^{a_{T-2},b_{T-2}}(x)= \hat{u}_{l,h_{T-2}^A,h_{T-2}^B}^{a_{T-2},b_{T-2}}(x).$

Following the same steps, we can show the case for $t=T-3,\ldots,0$, and have
\begin{align}
  &u_{l,0}(x)=\max_{u_{l,0}\in \mathbb{R},u\in U}u_{l,0} \label{eq: u(x)}\\
  s.t.& \sum_{k\in K}{M^{kl}}^Tx_{k,h_1^A,h_1^B}+u_{l,h_1^A,h_1^B}^T\mathbf{1} \geq u_{l,0} \mathbf{1}, \label{eq: u(x) 1}\\
  &\sum_{k\in K}{M^{kl}}^Tx_{k,h_{t+1}^A,h_{t+1}^B}+u_{l,h_{t+1}^A,h_{t+1}^B}^T\mathbf{1} \geq u_{l,h_{t}^A,h_{t}^B}^{a_{t},b_{t}}\mathbf{1}, \nonumber\\
  &\forall t=1,\ldots,T-1, h_t^A\in H_t^A, h_t^B\in H_t^B,\label{eq: u(x) 2}
\end{align}
and equation (\ref{eq: LP player   1}-\ref{eq: LP plyaer 1-2}) is shown. From the definition of $x$ in (\ref{eq: x}), we derive player 1's security strategy as in (\ref{eq: player 1's security strategy}).

Following the same steps, we show equation (\ref{eq: LP player   2}-\ref{eq: LP player 2 2}) is true, and player 2's security strategy is computed as in (\ref{eq: player 2's security strategy}).
\end{proof}

Notice that the sizes of the LP formulations in (\ref{eq: LP player   1}-\ref{eq: LP plyaer 1-2}) and (\ref{eq: LP player   2}-\ref{eq: LP player 2 2}) are both linear in the size of the game tree, i.e. linear in the sizes of both players' type sets, polynomial in the sizes of both players' action sets, and exponential in the time horizon. 

\subsection{Security strategies based on fixed-sized sufficient statistics and dual games}
Theorem \ref{theorem: LP primal game T stage} provides LP formulations to compute both players' security strategies which depend on both players' history actions. Notice that history action space grows exponentially on time horizon which makes this LP formulation undesirable as the horizon length grows. As time horizon gets long, players need a great amount of memories to record players' history action space and the corresponding security strategy. In order to remedy this drawback, we now consider another type of security strategies, which depend on fixed-sized sufficient statistics to save memories.

Our starting point is a result of \cite{de1996repeated,sorin2002first}, which showed that a player's security strategy in the dual game with some special initial parameters is also the player's security strategy in the primal game, and the security strategy only depends on a fixed-sized sufficient statistics. This subsection clarifies what the special initial parameters in dual games mean in the primal game, and give LP formulation and algorithms to compute the initial parameters and the corresponding security strategies.

First of all, we would like to introduce \emph{two dual games} of a $T$-stage repeated Bayesian game $\Gamma_T(p,q)$. Game $\Gamma_T(p,q)$'s type 1 dual game $\tilde{\Gamma}^1_T(\mu,q)$ is defined with respect to its first parameter $p$, where $\mu\in \mathbb{R}^{|K|}$ is called the initial regret with respect to player 1's type. The dual game $\tilde{\Gamma}^1_T(\mu,q)$ is played as follows. Player 1 chooses $k$ without informing player 2. Independently, nature chooses player 2's type according to $q$, and announces it to player 2 only. From stage $1$ to $T$, knowing both players' history actions, both players choose actions simultaneously. Let $p$ be player 1's strategy to choose his own type, and $\sigma\in\Sigma$ and $\tau\in \mathcal{T}$ be player 1 and 2's strategies to choose actions. Player 1's payoff $\tilde{\gamma}^1_T(\mu,q,p,\sigma,\tau)$ is defined as $\tilde{\gamma}^1_T(\mu,q,p,\sigma,\tau)=\mathbb{E}_{p,q,\sigma,\tau}\left(\mu^k+\sum_{t=1}^T M(k,l,a_t,b_t) \right)$.
We can see that the main difference between the type 1 dual game and the primal game is that in type 1 dual game, player 1 has an initial regret instead of a initial probability $p$, and he himself instead of the nature chooses his own type.

Similarly, the type 2 dual game $\tilde{\Gamma}_T^2(p,\nu)$ is defined with respect to the second parameter $q$, where $\nu\in\mathbb{R}^{|L|}$ is called the initial regret with respect to player 2's type. The dual game $\tilde{\Gamma}_T^2(p,\nu)$ is played as follows. Player 2 chooses $l$ without informing player 1. Meanwhile, player 1's type is chosen according to $p$, and is only announced to player 1. From stage $1$ to $T$, knowing both players' history actions, both players choose actions independently. Let $q$ be player 2's strategy to choose his type $l$. Player 1's payoff $\tilde{\gamma}^2_T(p,\nu,q,\sigma,\tau)$ is defined as $\tilde{\gamma}^2_T(p,\nu,q,\sigma,\tau)=\mathbb{E}_{p,q,\sigma,\tau}\left(\nu^l+\sum_{t=1}^T M(k,l,a_t,b_t)\right).$
In both dual games, player 1 wants to maximize the payoff, while player 2 wants to minimize it.

Both dual games are finite, and hence have game values denoted by $\tilde{V}_T^1(\mu,q)$ and $\tilde{V}^2_T(p,\nu)$. They are related to the game value of the primal game in the following way \cite{sorin2002first}.
\begin{align}
  \tilde{V}_T^1(\mu,q)=&\max_{p\in \Delta(K)}\{V_T(p,q)+p^T\mu\},\label{eq: game value relation 1, T stage}\\
  V_T(p,q)=&\min_{\mu\in\mathbb{R}^{|K|}}\{\tilde{V}_T^1(\mu,q)-p^T\mu\},\label{eq: game value relation 2, T stage}\\
  \tilde{V}_T^2(p,\nu)=&\min_{q\in \Delta(L)}\{V_T(p,q)+q^T\nu\},\label{eq: game value relation 3, T stage}\\
  V_T(p,q)=&\max_{\nu\in\mathbb{R}^{|L|}}\{\tilde{V}_T^2(q,\nu)-q^T\nu\}. \label{eq: game value relation 4, T stage}
\end{align}
Let $\mu^*$ and $\nu^*$ be the solutions to the optimal problems on the right hand side of (\ref{eq: game value relation 2,   T stage}) and (\ref{eq: game value relation 4,   T stage}), respectively. Player 2's security strategy in the type 1 dual game $\tilde{\Gamma}_T^1(\mu^*,q)$ is his security strategy in the primal game $\Gamma_T(p,q)$, and player 1's security strategy in the type 2 dual game $\tilde{\Gamma}_T^2(p,\nu^*)$ is also his security strategy in the primal game $\Gamma_T(p,q)$ \cite{rosenberg1998duality}.

The next questions are what $\mu^*$ and $\nu^*$ are, and how to compute them. To answer these questions, we have the following lemma.
\begin{lemma}
\label{lemma: optimal solution T stage}
Consider a $T$-stage repeated Bayesian game $\Gamma_T(p,q)$. Let $\sigma^*_{p,q}$ and $\tau^*_{p,q}$ be player 1 and 2's security strategies in $\Gamma_T(p,q)$, respectively. Denote by $x^*_{p,q}$ and $y^*_{p,q}$ the corresponding optimal realization plans of player 1 and 2. The optimal solution $\mu^*$ to the optimal problem $\min_{\mu\in\mathbb{R}^{|K|}}\{\tilde{V}_T^1(\mu,q)-p^T\mu\}$ is
\begin{align}
  \mu^{*k}=-w_{k,0}(y^*_{p,q}), \forall k\in K \label{eq: initial regret mu}
\end{align}
where $w_{k,0}(y^*_{p,q})=w_{k,h_0^A,h_0^B}^{a_0,b_0}(y^*_{p,q})$, which is defined in (\ref{eq: w}) and computed according to the linear program (\ref{eq: LP player 2}-\ref{eq: LP player 2 2}).

The optimal solution $\nu^*$ to the optimal problem $\max_{\nu\in\mathbb{R}^{|L|}}\{\tilde{V}_T^2(q,\nu)-q^T\nu\}$ is
\begin{align}
 \nu^{*l}=-u_{l,0}(x^*_{p,q}), \forall l\in L \label{eq: initial regret nu}
\end{align}
where $u_{l,0}(x^*_{p,q})=u_{l,h_0^A,h_0^B}^{a_0,b_0}(x^*_{p,q})$, which is defined in (\ref{eq: u}) and computed according to the linear program (\ref{eq: LP player   1}-\ref{eq: LP plyaer 1-2}).
\end{lemma}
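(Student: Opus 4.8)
The plan is to prove the claim about $\mu^*$ directly from the dual LP representation of $V_T(p,q)$ in Theorem~\ref{theorem: LP primal game T stage} together with the conjugacy relation (\ref{eq: game value relation 1, T stage}); the statement for $\nu^*$ will then follow by the symmetric argument using the primal LP and (\ref{eq: game value relation 3, T stage})--(\ref{eq: game value relation 4, T stage}). The key structural observation, which I would record first, is that in the minimization (\ref{eq: LP player 2})--(\ref{eq: LP player 2 2}) the feasible set for $(y,w)$ --- namely $y\in Y$ together with (\ref{eq: LP player 2 1})--(\ref{eq: LP player 2 2}) --- does not involve $p$ at all; the prior $p$ enters only linearly, through the objective $\sum_{k\in K}p^k w_{k,0}$. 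Consequently $V_T(\cdot,q)$ is a pointwise minimum of affine functions of $p$, hence concave, and for the optimal plan $(y^*_{p,q},w^*)$ I have the two facts $\sum_{k\in K}p^k w_{k,0}(y^*_{p,q})=V_T(p,q)$ and, for every other prior $p'\in\Delta(K)$, $\sum_{k\in K}{p'}^k w_{k,0}(y^*_{p,q})\ge V_T(p',q)$, the latter simply because $(y^*_{p,q},w^*)$ stays feasible (though generally not optimal) at $p'$.

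Next I would set $\mu^{*k}:=-w_{k,0}(y^*_{p,q})$ and verify optimality by the usual two-sided argument. For the lower bound, relation (\ref{eq: game value relation 1, T stage}) gives $\tilde V^1_T(\mu,q)\ge V_T(p,q)+p^T\mu$ for every $\mu\in\mathbb{R}^{|K|}$, hence $\tilde V^1_T(\mu,q)-p^T\mu\ge V_T(p,q)$; this reconfirms (\ref{eq: game value relation 2, T stage}) and shows that any $\mu$ attaining the value $V_T(p,q)$ is a minimizer. It therefore suffices to check that $\mu^*$ attains equality. By (\ref{eq: game value relation 1, T stage}) this amounts to showing that the prior $p$ maximizes $p'\mapsto V_T(p',q)+{p'}^T\mu^*$ over $\Delta(K)$. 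Substituting $\mu^{*k}=-w_{k,0}(y^*_{p,q})$ and using the two facts above, for any $p'\in\Delta(K)$ I get $V_T(p',q)+{p'}^T\mu^*=V_T(p',q)-\sum_{k\in K}{p'}^k w_{k,0}(y^*_{p,q})\le 0$, while at $p'=p$ the same expression equals $V_T(p,q)-V_T(p,q)=0$. Thus the maximum is $0$, attained at $p'=p$, so $\tilde V^1_T(\mu^*,q)=V_T(p,q)+p^T\mu^*$ and $\tilde V^1_T(\mu^*,q)-p^T\mu^*=V_T(p,q)$, which establishes (\ref{eq: initial regret mu}).

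The argument for (\ref{eq: initial regret nu}) is identical after interchanging the roles of the two players: I would replace the dual LP by the primal LP (\ref{eq: LP player 1})--(\ref{eq: LP plyaer 1-2}), which expresses $V_T(\cdot,q)$ as a pointwise \emph{maximum} of affine functions of $q$ with a $q$-independent feasible set, and then use (\ref{eq: game value relation 3, T stage})--(\ref{eq: game value relation 4, T stage}). The minimax/maximin orientation flips the two inequalities in the obvious way, and the sign $\nu^{*l}=-u_{l,0}(x^*_{p,q})$ emerges for exactly the same reason, with $(x^*_{p,q},u^*)$ serving as the feasible certificate at competing priors $q'\in\Delta(L)$.

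The step I expect to be most delicate is the equality direction, specifically the claim that $p$ itself is the maximizer in (\ref{eq: game value relation 1, T stage}) when $\mu=\mu^*$. This rests entirely on the $p$-independence of the LP feasible set, which is what lets me reuse the single optimal plan $(y^*_{p,q},w^*)$ as a feasible certificate at every competing prior $p'$; I would state this explicitly, since it is the mechanism that makes $-w_{\cdot,0}(y^*_{p,q})$ a supergradient of the concave map $p\mapsto V_T(p,q)$ and hence the correct minimizer in (\ref{eq: game value relation 2, T stage}). The remaining care is bookkeeping of signs --- tracking that the regret $\mu$ is the negative of the weighted stage-$0$ payoff $w_{\cdot,0}$ --- together with the observation that when $V_T(\cdot,q)$ is nondifferentiable the minimizer need not be unique, so the lemma is to be read as exhibiting \emph{one} optimal $\mu^*$, the one determined by the chosen optimal plan $y^*_{p,q}$.
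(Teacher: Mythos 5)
Your proof is correct and follows essentially the same route as the paper's: both arguments establish the two facts $V_T(p,q)=-p^T\mu^*$ and $V_T(p',q)+{p'}^T\mu^*\le 0$ for all $p'\in\Delta(K)$, deduce $\tilde{V}^1_T(\mu^*,q)=0$, and then read off optimality of $\mu^*$ from (\ref{eq: game value relation 1, T stage})--(\ref{eq: game value relation 2, T stage}). The only cosmetic difference is that you justify the key inequality $V_T(p',q)\le\sum_{k\in K}{p'}^k w_{k,0}(y^*_{p,q})$ by the $p$-independence of the feasible set of the dual LP (\ref{eq: LP player 2})--(\ref{eq: LP player 2 2}), whereas the paper phrases the same certificate argument game-theoretically, fixing player 2's realization plan at $y^*_{p,q}$ and letting player 1 best respond.
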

\begin{proof}
  First, we prove that equation (\ref{eq: 0 game value 0}) is true.
  \begin{align}
  V_T(p,q)=p^T w_{:,0}(y^*_{p,q})=-p^T\mu^*.  \label{eq: 0 game value 0}
  \end{align}
  Equation (\ref{eq: w}) and (\ref{eq: y}) implies that
  \begin{align*}
    w_{k,0}(y^*_{p,q})=&\max_{\sigma(k)\in \Sigma(k)}\sum_{l\in L} q^l \mathbb{E}_{\sigma(k),y^*_l}\left(\sum_{s=1}^T M(k,l,
    a_s,b_s)|k,l\right)
  \end{align*}
  Therefore, we have
  \begin{align*}
    &\sum_{k\in K}p^kw_{k,0}(y^*_{p,q})\\
    =&\max_{\sigma\in \Sigma}\sum_{k\in K}\sum_{l\in L}p^kq^l\mathbb{E}_{\sigma(k),y^*_l}\left(\sum_{s=1}^T M(k,l,
    a_s,b_s)|k,l\right)\\
    =&\max_{\sigma\in \Sigma}\mathbb{E}_{\sigma,y^*}\left(\sum_{s=1}^T M(k,l,
    a_s,b_s)\right)\\
    =&\min_{y\in Y}\max_{\sigma\in \Sigma}\mathbb{E}_{\sigma,y}\left(\sum_{s=1}^T M(k,l,
    a_s,b_s)\right)\\
    =&\min_{\tau\in \mathcal{T}}\max_{\sigma\in \Sigma}\mathbb{E}_{\sigma,\tau}\left(\sum_{s=1}^T M(k,l,
    a_s,b_s)\right)=V_T(p,q)
  \end{align*}
  where the second equality holds because $y^*$ is player 2's security realization plan, and the last equality holds because of the perfect recall in this game \cite{von1996efficient}.

  Next, we show that
  \begin{align}
  \tilde{V}_T^1(\mu^*,q)=0. \label{eq: 0 game value 1}
  \end{align}

  Equation (\ref{eq: game value relation 1, T stage}) implies that $\tilde{V}_T^1(\mu^*,q)=\max_{p'\in \Delta(K)}\{V_T(p',q)+p'^T\mu^*\}\geq V_T(p,q)+p^T\mu^*=0$.

  Meanwhile, for any $p'\in \Delta(K)$, we have
  \begin{align*}
    &V_T(p',q)\\
    =    &\min_{y\in Y}\max_{\sigma\in \Sigma}\mathbb{E}_{p',q,\sigma,y}\left(\sum_{t=1}^T M(k,l,a_t,b_t)\right)\\
    &\leq \max_{\sigma\in\Sigma}\mathbb{E}_{p',q,\sigma,y^*_{p,q}}\left(\sum_{t=1}^T M(k,l,a_t,b_t)\right)\\
  =&\sum_{k\in K}p'^k \max_{\sigma(k)\in \Sigma(k)}\mathbb{E}_{q,\sigma(k),y^*_{p,q}}(\sum_{t=1}^T M(k,l,a_t,b_t)|k)\\
  =&\sum_{k\in K}p'^k w_{k,0}(y^*_{p,q})=-p'^T\mu^*,
  \end{align*}
  which implies that $V_T(p',q)+p'^T\mu^*\leq 0$ for any $p'\in \Delta(K)$. Hence, $\tilde{V}_T^1(\mu^*,q)\leq 0$ according to (\ref{eq: game value relation 1, T stage}). Therefore, equation (\ref{eq: 0 game value 1}) is true.

  Equation (\ref{eq: 0 game value 1}) implies that $\tilde{V}_T^1(\mu^*,q)-p^T\mu^*=-p^T\mu^*=V_T(p,q)$, where the second equality is based on (\ref{eq: 0 game value 0}). According to equation (\ref{eq: game value relation 2, T stage}), we see that $-w_{k,0}(y^*_{p,q})$ is an optimal solution to the optimal problem on the right hand side of (\ref{eq: game value relation 2, T stage}). From the proof of Theorem \ref{theorem: LP primal game T stage}, we see that $w_{:,0}(y^*_{p,q})=w_{:,0}^*$, where $w_{:,0}^*$ is the optimal solution to the linear program (\ref{eq: LP player   2}-\ref{eq: LP player 2 2}).

  Following the same steps, we show that
  \begin{align}
    \tilde{V}_T^2(q,\nu^*)=0,\label{eq: 0 game value 2}
  \end{align}
  and $-u_{l,0}(x^*_{p,q})$ is an optimal solution to the optimal problem on the right hand side of (\ref{eq: game value relation 4, T stage}). Moreover, $u_{:,0}(x^*_{p,q})$ equals to $u_{:,0}^*$ the optimal solution to the linear program (\ref{eq: LP player   1}-\ref{eq: LP plyaer 1-2}), according to the proof of Theorem \ref{theorem: LP primal game T stage}.
\end{proof}

In the primal game, the parameter $\mu^{*k}$ can be seen as player 2' initial regret given player 1's type $k$, i.e. the difference between the expected realized payoff before stage 1, which is $0$, and the expected total payoff using player 2's security strategy if player 1 is of type $k$. Parameter $\nu^{*l}$ is player 1's initial regret given player 2's type $l$, i.e. the difference between the expected realized payoff before stage 1 and the expected total payoff using player 1's security strategy if player 2 is of type $l$.
Now that we have figured out the two special parameters in the dual game, our next step is to study \emph{the security strategies in the dual games}. In type 1 dual game $\tilde{\Gamma}_T^1(\mu,q)$, player 2 keeps track of two variables, the \emph{belief state $q_t\in \Delta(L)$ on his own type}, and his \emph{regret $\mu_t\in \mathbb{R}^{|K|}$} on player 1's type. The \emph{belief on player 2's type} is defined as
\begin{align}
  q_t^l=P(l|k,h_t^A,h_t^B), \forall l\in L, t=1,\ldots,T,
\end{align}
and is updated as follows
\begin{align}
  q_{t+1}^l=&q^{+l}(b_t,z_t,q_t)=\frac{q_t^l z_t^l(b_t)}{\bar{z}_{q_t,z_t}(b_t)}, \forall l\in L,\hbox{with $q_1=q$,} \label{eq: belief state player 2}
\end{align}
where $z_t^l=\tau_t(l,h_t^A,h_t^B)\in \Delta(B)$, and $\bar{z}_{q_t,z_t}(b_t)=\sum_{l\in L}q_t^l z_t^l(b_t).$ The \emph{regret on player 1's type} is defined as
\begin{align*}
  \mu_t^k=\mu^k+\sum_{s=1}^{t-1}\mathbb{E}(M(k,l,a_s,b_s)|k,h_{s+1}^A,h_{s+1}^B),
\end{align*}
for all $k\in K,t=1,\ldots,T,$
and is updated as follows
\begin{align}
  \mu_{t+1}^k=& \mu^{+k}(\mu_t,a_t,b_t,z_t,q_t) \nonumber\\
  =&\mu^k_t+\mathbb{E}(M(k,l,a_t,b_t)|k,h_{t+1}^A,h_{t+1}^B)\nonumber\\
  =&\mu^k_t+\sum_{l\in L}q_{t+1}^l M_{a_t,b_t}^{kl}, \forall k\in K, \hbox{with $\mu_1^k=\mu^k$.} \label{eq: vector payoff player 2}
\end{align}
If $\mu^k=\mu^{*k}$ which takes the form as in (\ref{eq: initial regret mu}), then $\mu_t^k$ in the primal game can be seen as the difference between the expected realized payoff before stage $t$ and the expected total payoff using the security strategy if player 1 is of type $k$.

Player 2's security strategy at stage $t$ in $\tilde{\Gamma}_T^1(\mu,q)$ can be computed based on the backward recursive equation (\ref{eq: dynamic programming, T stage dual game 1}), and depends only on $t$, $\mu_t$ and $q_t$ \cite{sorin2002first}.
\begin{align}
  \tilde{V}^1_{n}(\mu_t,q_t)=&\min_{z\in \Delta(B)^{|L|}}\max_{a\in A} \sum_{b\in B} \bar{z}_{q_t,z}(b)\nonumber\\
   &\tilde{V}^1_{n-1}(\mu^+(\mu_t,a,b,z,q_t),q^+(b,z,q_t)), \label{eq: dynamic programming, T stage dual game 1}
\end{align}
where $n=T+1-t$.

Similarly, in type 2 dual game $\tilde{\Gamma}_T^2(p,\nu)$, player 1 also records two variables, the \emph{belief $p_t\in \Delta(K)$ on player 1's type} and the \emph{regret $\nu_t\in \mathbb{R}^{|L|}$} on player 2's type. The \emph{belief on player 1's type} is defined as
\begin{align}
  p_t^k=P(k|l,h_t^A,h_t^B), \forall k\in K, t=1,\ldots,T,
\end{align}
and is updated as below
\begin{align}
  p_{t+1}^k=&p^{+k}(a_t,r_t,p_t)=\frac{p_t^k r_t^k(a_t)}{\bar{r}_{p_t,r_t}(a_t)}, \forall k\in K, \label{eq: belief state player 1}
\end{align}
with $p_1=p$,
where $r_t^k=\sigma_t(k,h_t^A,h_t^B)\in \Delta(A)$, and $\bar{r}_{p_t,r_t}(a_t)=\sum_{k\in K}p_t^k r_t^k(a_t).$ The \emph{regret on player 2's type} is defined as
\begin{align*}
  \nu_t^l=\nu^l+\sum_{s=1}^{t-1}\mathbb{E}(M(k,l,a_s,b_s)|l,h_{s+1}^A,h_{s+1}^B),
\end{align*}
for all $l\in L, t=1,\ldots,T$, and is updated as below
\begin{align}
  \nu_{t+1}^l=&\nu^{+l}(\nu_t,a_t,b_t,r_t,p_t)\nonumber\\
  =&\nu_t^l+\mathbb{E}(M(k,l,a_t,b_t)|l,h_{t+1}^A,h_{t+1}^B)\nonumber\\
  =&\nu^l+\sum_{k\in K}p_{t+1}^k M_{a_t,b_t}^{kl},\forall l\in L, \hbox{with $\nu_1^l=\nu^l$}. \label{eq: vector payoff player 1}
\end{align}
If $\nu^l=\nu^{*l}$ which takes the form of (\ref{eq: initial regret nu}), then $\nu_t^l$ in the primal game can be seen as the difference between the expected realized payoff before stage $t$ and the expected total payoff using the security strategy if player 2 is of type $l$.

Player 1's security strategy at stage $t$ in $\tilde{\Gamma}_T^2(p,\nu)$ can be computed based on the backward recursive equation (\ref{eq: dynamic programming, T stage dual game 2}), and depends only on $t$, $p_t$ and $\nu_t$ \cite{sorin2002first}.
\begin{align}
  \tilde{V}_{n}^2(p_t,\nu_t)=& \max_{r\in \Delta(A)^{|K|}} \min_{b\in B} \sum_{a\in A} \bar{r}_{p_t,r}(a) \nonumber\\
  &\tilde{V}_{n-1}^2(p^+(a,r,p_t),\nu^+(\nu_t,a,b,r,p_t)),\label{eq: dynamic programming, T stage dual game 2}
\end{align}
where $n=T+1-t$.

From the analysis above, we see that the security strategies of player 1 and 2 in the corresponding dual games depend only on the fixed-sized sufficient statistics, $(t,p_t,\nu_t)$ and $(t,\mu_t,q_t)$, respectively, at stage $t$. Moreover, the sufficient statistics $(t,p_t,\nu_t)$ and $(t,\mu_t,q_t)$ are fully accessible to player 1 and 2, respectively, in the corresponding dual games. Based on the LP formulation of $V_T(p,q)$, we give the LP formulations to compute player 1's security strategy in type 2 dual game $\tilde{\Gamma}_T^2(p,\nu)$ and player 2's security strategy in type 1 dual game $\tilde{\Gamma}_T^1(\mu,q)$ as follows.
\begin{theorem}
\label{theorem: LP dual game T stage}
Consider type 2 dual game $\tilde{\Gamma}_T^2(p,\nu)$. Let $p_t$ and $\nu_t$ be the belief on player 1's type and the regret on player 2's type at stage $t$, respectively. The game value $\tilde{V}_n^2(p_t,\nu_t)$ of $n$ stage type 2 dual game $\tilde{\Gamma}_n^2(p_t,\nu_t)$ satisfies the following LP formulation, where $n=T+1-t$.
\begin{align}
  &\tilde{V}_n^2(p_t,\nu_t)=\max_{x\in X,u\in U,u_{:,0}\in\mathbb{R}^{|L|},\tilde{u}\in\mathbb{R}}\tilde{u} \label{eq: LP player 1 dual game}\\
  s.t.& u_{:,0}+\nu_t\geq \tilde{u}\mathbf{1} \label{eq: LP player 1-0 dual game}\\
  & \sum_{k\in K}{M^{kl}}^Tx_{k,h_1^A,h_1^B}+{u_{l,h_1^A,h_1^B}}^T\mathbf{1} \geq u_{l,0} \mathbf{1},\forall l\in L, \label{eq: LP player 1-1 dual game}\\
  & \sum_{k\in K}{M^{kl}}^Tx_{k,h_{t+1}^A,h_{t+1}^B}+{u_{l,h_{t+1}^A,h_{t+1}^B}}^T\mathbf{1} \geq u_{l,h_{t}^A,h_{t}^B}^{a_{t},b_{t}}\mathbf{1},\nonumber\\
   &\forall t=1,\ldots,n-1,l\in L,h_t^A\in H_t^A,h_t^B\in H_t^B,\label{eq: LP plyaer 1-2 dual game}
\end{align}
where $u_{l,h_n^A,h_n^B}$ is a zero matrix for all $l\in L$, $X$ is a set including all real vectors satisfying (\ref{eq: x constraint 1}-\ref{eq: x constraint 3}) with $x_{k,h_0^A,h_0^B}^{a_0}=p^k_t$, and $U$ is a real space of appropriate dimension. Player 1's security strategy $\tilde{\sigma}^*_t(k,p_t,\nu_t)$ at stage $t$ is
\begin{align}
  \tilde{\sigma}^{*}_t(k,p_t,\nu_t)=\frac{x_{k,h_1^A,h_1^B}^{*}}{p^k_t}. \label{eq: security strategy, player 1, dual game}
\end{align}

Similarly, for type 1 dual game $\tilde{\Gamma}_T^1(\mu,q)$, let $\mu_t$ and $q_t$ be the regret on player 1's type and the belief on player 2's type at stage $t$. The game value $\tilde{V}_n^1(\mu_t,q_t)$ of $n$ stage type 1 dual game $\tilde{\Gamma}_n^1(\mu_t,q_t)$ satisfies the following LP formulation, where $n=T+1-t$.
\begin{align}
  &\tilde{V}_n^1(\mu_t,q_t)=\min_{y\in Y,w\in W,w_{:,0}\in\mathbb{R}^{|K|},\tilde{w}\in\mathbb{R}} \tilde{w} \label{eq: LP player 2 dual game}\\
  s.t.& w_{:,0}+\mu_t \leq \tilde{w}\mathbf{1}, \label{eq: LP player 2-0 dual game}\\
  & \sum_{l\in L} M^{kl} y_{l,h_1^A,h_1^B}+w_{k,h_1^A,h_1^B}\mathbf{1} \leq w_{k,0}\mathbf{1},\forall k\in K, \label{eq: LP player 2-1 dual game}\\
  & \sum_{l\in L} M^{kl} y_{l,h_{t+1}^A,h_{t+1}^B}+w_{k,h_{t+1}^A,h_{t+1}^B}\mathbf{1} \leq w_{k,h_{t}^A,h_{t}^B}^{a_{t},b_{t}}\mathbf{1},\nonumber\\
  &\forall t=1,\ldots,n-1,k\in K,h_t^A\in H_t^A,h_t^B\in H_t^B, \label{eq: LP player 2-2 dual game}
\end{align}
where $w_{k,h_n^A,h_n^B}$ is a zero matrix for all $k\in K$, $Y$ is a set including all real vectors satisfying (\ref{eq: y constraint 1}-\ref{eq: y constraint 3}) with $y_{l,h_0^A,h_0^B}^{b_0}=q^l_t$, and $W$ is a real space of appropriate dimension. Player 2's security strategy $\tilde{\tau}^*_t(l,\mu_t,q_t)$ at stage $t$ is
\begin{align}
  \tilde{\tau}^*_t(l,\mu_t,q_t)=\frac{y_{l,h_1^A,h_1^B}^*}{q^l_t}. \label{eq: security strategy, player 2, dual game}
\end{align}
\end{theorem}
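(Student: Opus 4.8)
The plan is to build the dual-game LP directly from the primal-game LP of Theorem~\ref{theorem: LP primal game T stage} and the value relation (\ref{eq: game value relation 3, T stage}). First I would specialize that relation to the $n$-stage subgame, writing $\tilde{V}_n^2(p_t,\nu_t)=\min_{q\in\Delta(L)}\{V_n(p_t,q)+q^T\nu_t\}$, and then replace $V_n(p_t,q)$ by the primal maximization (\ref{eq: LP player 1}--\ref{eq: LP plyaer 1-2}) taken with the initial normalization $x_{k,h_0^A,h_0^B}^{a_0}=p_t^k$. Because the constraints (\ref{eq: LP player 1-1 dual game}--\ref{eq: LP plyaer 1-2 dual game}) on the realization plan $x$ and on the auxiliary variables $u,u_{:,0}$ carry no dependence on $q$, this yields the saddle problem
\begin{align*}
  \tilde{V}_n^2(p_t,\nu_t)=\min_{q\in\Delta(L)}\ \max_{(x,u,u_{:,0})}\ \sum_{l\in L}q^l\bigl(u_{l,0}+\nu_t^l\bigr),
\end{align*}
with an objective that is bilinear in $q$ and in the plan variables.

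Second, I would exchange the minimization and the maximization by Sion's minimax theorem: $\Delta(L)$ is compact and convex, the feasible set of $(x,u,u_{:,0})$ is a convex polyhedron, and the objective is affine---hence simultaneously quasi-convex, quasi-concave and continuous---in each block. After the exchange, the inner problem $\min_{q\in\Delta(L)}\sum_{l\in L}q^l(u_{l,0}+\nu_t^l)$ is the minimization of a linear function over the simplex, so its optimum equals the pointwise minimum $\min_{l\in L}(u_{l,0}+\nu_t^l)$. Representing this minimum by a scalar $\tilde{u}$ and encoding it through $\tilde{u}\mathbf{1}\leq u_{:,0}+\nu_t$ reproduces exactly (\ref{eq: LP player 1 dual game}--\ref{eq: LP plyaer 1-2 dual game}).

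For the security strategy, I would note that the optimal $x^*$ returned by this LP is player 1's optimal realization plan in the primal game $\Gamma_n(p_t,q^*)$ at the minimizing $q^*$, so the recovery formula (\ref{eq: player 1's security strategy}) of Theorem~\ref{theorem: LP primal game T stage} applies. Evaluating it at the first stage of the $n$-stage subgame, where $h_1^A=h_1^B=\emptyset$ and $x_{k,h_0^A,h_0^B}^{a_0}=p_t^k$, gives $\tilde{\sigma}_t^*(k,p_t,\nu_t)=x_{k,h_1^A,h_1^B}^*/p_t^k$, which is (\ref{eq: security strategy, player 1, dual game}). The type~1 dual game is proved by the mirror argument: start from (\ref{eq: game value relation 1, T stage}), substitute the player-2 minimization (\ref{eq: LP player 2}--\ref{eq: LP player 2 2}) for $V_n(p,q_t)$, swap the outer $\max_{p\in\Delta(K)}$ with the inner minimization, linearize the resulting $\max_{k\in K}(w_{k,0}+\mu_t^k)$ by a scalar $\tilde{w}$ with $w_{:,0}+\mu_t\leq\tilde{w}\mathbf{1}$, and recover (\ref{eq: security strategy, player 2, dual game}) from (\ref{eq: player 2's security strategy}).

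The step I expect to be the main obstacle is justifying the minimax exchange rigorously, since the realization-plan polyhedron is not a priori compact: the scalars $u_{l,0}$ and the intermediate entries of $u$ are bounded only from one side by the constraints. I would dispose of this by using boundedness of the one-stage payoff $M$, which forces $V_n(p_t,q)$ to be finite and the maximizing $u_{:,0}$ to lie in a fixed bounded box; restricting the maximization to this compact convex truncation leaves the value unchanged and lets Sion's theorem apply directly. A smaller point worth verifying is that the normalization $x_{k,h_0^A,h_0^B}^{a_0}=p_t^k$ built into $X$ is precisely what makes the substituted inner maximum equal $V_n(p_t,q)$, so that the swap delivers the stated dual value and not an unnormalized surrogate.
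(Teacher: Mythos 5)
Your proof is correct, but it reaches the LP by a different route than the paper. The paper works bottom--up from the extensive-form definition of the dual game: it writes $\tilde{V}^2_n(p_t,\nu_t)=\max_{x\in X}\min_{q\in\Delta(L)}\min_{\tau\in\mathcal{T}}\sum_l q^l(\nu_t^l+\mathbb{E}(\cdot|l))$ (using that the finite dual game has a value, so the order of optimization can be fixed this way), decomposes the inner minimization over $\tau$ type-by-type to recover $u_{l,0}(x)$, applies LP duality to the remaining minimization of a linear function of $q$ over the simplex to introduce $\tilde{u}$, and then invokes the nested-LP argument from the proof of Theorem~\ref{theorem: LP primal game T stage} to replace $u_{:,0}(x)$ by the constraints (\ref{eq: LP player 1-1 dual game}--\ref{eq: LP plyaer 1-2 dual game}). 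You instead work top--down: you start from the already-available value relation (\ref{eq: game value relation 3, T stage}), substitute the whole primal LP of Theorem~\ref{theorem: LP primal game T stage} for $V_n(p_t,q)$ as a black box, and perform the $\min_q$/$\max_{(x,u,u_{:,0})}$ exchange explicitly via Sion. Both arguments hinge on essentially the same minimax exchange -- the paper buries it in the assertion that the dual game's value equals the stated maxmin, while you surface it as the main technical step -- and both recover the strategy from the realization-plan normalization in the same way. Your approach buys a shorter derivation that reuses Theorem~\ref{theorem: LP primal game T stage} wholesale; the paper's buys a self-contained argument that never needs compactness considerations. On your flagged obstacle: the truncation to a bounded box is a valid fix but is actually unnecessary, since Sion's theorem requires compactness of only one of the two sets and $\Delta(L)$ is compact (alternatively, since the objective is bilinear and the $(x,u,u_{:,0})$-feasible set is a polyhedron on which the supremum is finite for every $q$, standard LP duality gives the exchange directly). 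Your second verification point -- that the normalization $x^{a_0}_{k,h_0^A,h_0^B}=p_t^k$ is what makes the inner maximum equal $V_n(p_t,q)$ -- is indeed the right thing to check and holds by Theorem~\ref{theorem: LP primal game T stage} applied to the $n$-stage game with prior $p_t$.
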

\begin{proof}
First, We have
\begin{align*}
  &\tilde{V}^2_n(p_t,\nu_t)\\
  =&\max_{x\in X}\min_{q\in \Delta(L)}\min_{\tau\in \mathcal{T}}\sum_{l\in L} q^l\left(\nu_t^l+\mathbb{E}(\sum_{s=1}^n M(k,l,a_s,b_s)|l)\right)\\
  =&\max_{x\in X}\min_{q\in \Delta(L)}\sum_{l\in L} q^l\left(\nu_t^l+\min_{\tau(l)\in \mathcal{T}(l)}\mathbb{E}(\sum_{s=1}^n M(k,l,a_s,b_s)|l)\right)\\
  =& \max_{x\in X}\min_{q\in \Delta(L)}\sum_{l\in L} q^l\left(\nu_t^l+u_{l,0}(x)\right).
\end{align*}

Define $\tilde{u}(x)=\min_{q\in \Delta(L)}\sum_{l\in L}q^l(\nu_t^l+u_{l,0}(x))$.
According to the dual theorem, given $x$, we have
\begin{align*}
  \tilde{u}(x)
  =&\max_{\tilde{u}\in \mathbb{R}}\tilde{u}\\
  s.t. & \nu_t+u_{:,0}(x) \geq \tilde{u}\mathbf{1},
\end{align*}
where $u_{:,0}(x)$ satisfies (\ref{eq: u(x)}-\ref{eq: u(x) 2}) with the horizon to be $n$. Therefore, following the same steps in the proof of Theorem \ref{theorem: LP primal game T stage} to show $\hat{u}=u$, we have
\begin{align*}
  \tilde{u}(x)=&\max_{u\in U,u_{:,0}\in\mathbb{R}^{|L|},\tilde{u}\in \mathbb{R}}\tilde{u}\\
  s.t. & \nu_t+u_{:,0} \geq \tilde{u}\mathbf{1},\\
  & \sum_{k\in K}{M^{kl}}^Tx_{k,h_1^A,h_1^B}+{u_{l,h_1^A,h_1^B}}^T\mathbf{1} \geq u_{l,0} \mathbf{1},\forall l\in L, \\
  & \sum_{k\in K}{M^{kl}}^Tx_{k,h_{t+1}^A,h_{t+1}^B}+{u_{l,h_{t+1}^A,h_{t+1}^B}}^T\mathbf{1} \geq u_{l,h_{t}^A,h_{t}^B}^{a_{t},b_{t}}\mathbf{1}, \\
  &\forall t=1,\ldots,n-1,l\in L,h_t^A\in H_t^A,h_t^B\in H_t^B.
\end{align*}
Hence, equation (\ref{eq: LP player 1 dual game}-\ref{eq: LP plyaer 1-2 dual game}) is shown. Player 1's security strategy at stage $t$ in dual game $\tilde{\Gamma}_T^2(p,\nu)$ can be seen as player 1's security strategy at stage $1$ in dual game $\tilde{\Gamma}_n^2(p_t,\nu_t)$. Hence, according to equation (\ref{eq: x}), we have $\tilde{\sigma}_t^*(k,p_t,\nu_t)=x_{k,h_1^A,h_1^B}^*/p_t^k$.

Following the same steps, we show equation (\ref{eq: LP player 2 dual game}-\ref{eq: LP player 2-2 dual game}) is also true, and player 2's security strategy at stage $t$ is as in (\ref{eq: security strategy, player 2,   dual game}).
\end{proof}

Now, let us get back to the primal $T$-stage repeated Bayesian game $\Gamma_T(p,q)$. It was shown in \cite{rosenberg1998duality,sorin2002first,de1996repeated} that if $\nu^*$ is the optimal solution to $\max_{\nu\in\mathbb{R}^{|L|}}\{\tilde{V}_T^2(q,\nu)-q^T\nu\}$, then player 1's security strategy in type 2 dual game $\tilde{\Gamma}^2_T(p,\nu^*)$ is also the player's security strategy in the primal game $\Gamma_T(p,q)$, and that if $\mu^*$ is the optimal solution to $\min_{\mu\in\mathbb{R}^{|K|}}\{\tilde{V}_T^1(\mu,q)-p^T\mu\}$, then player s's security strategy in type 1 dual game $\tilde{\Gamma}^1_T(\mu^*,q)$ is also the player's security strategy in the primal game $\Gamma_T(p,q)$. Since Lemma \ref{lemma: optimal solution T stage} shows that $\nu^*$ and $\mu^*$ are the regrets on player 2 and 1's type, respectively, we have the following corollary.
\begin{corollary}
\label{theorem: security strategy relation between primal and dual games}
Consider a $T$-stage repeated Bayesian game $\Gamma_T(p,q)$ and its dual games $\tilde{\Gamma}_T^1(\mu,q)$ and $\tilde{\Gamma}_T^2(p,\nu)$. Player 1's security strategy $\tilde{\sigma}^*\in \Sigma$, which depends only on $t$, $p_t$ and $\nu_t$ at stage $t$, in type 2 dual game $\tilde{\Gamma}^2_T(p,\nu^*)$ is also player 1's security strategy in the primal game $\Gamma_T(p,q)$, where $\nu^*$ is given in (\ref{eq: initial regret nu}).

Similarly, player 2's security strategy $\tilde{\tau}^*\in \mathcal{T}$, which depends only on $t$, $\mu_t$ and $q_t$ at stage $t$, in type 1 dual game $\tilde{\Gamma}_T^1(\mu^*,q)$ is also player 2's security strategy in the primal game $\Gamma_T(p,q)$, where $\mu^*$ is given in (\ref{eq: initial regret mu}).
\end{corollary}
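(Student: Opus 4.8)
The plan is to assemble this corollary directly from the three ingredients already established, since no new analytical machinery is required; the statement is a synthesis rather than a fresh argument. The central point is that the cited duality result of \cite{rosenberg1998duality,sorin2002first,de1996repeated} already guarantees that a player's security strategy in a dual game carrying the \emph{optimal} regret parameter coincides with that player's security strategy in the primal game. What remains is only to identify that optimal parameter explicitly and to record the fixed-sized state on which the resulting strategy depends.

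First I would invoke the cited result in the precise form recalled just before the corollary: if $\nu^*$ solves $\max_{\nu\in\mathbb{R}^{|L|}}\{\tilde{V}_T^2(q,\nu)-q^T\nu\}$, then player 1's security strategy in $\tilde{\Gamma}_T^2(p,\nu^*)$ is a security strategy in $\Gamma_T(p,q)$, and symmetrically player 2's security strategy in $\tilde{\Gamma}_T^1(\mu^*,q)$ is a security strategy in $\Gamma_T(p,q)$ when $\mu^*$ minimizes $\min_{\mu\in\mathbb{R}^{|K|}}\{\tilde{V}_T^1(\mu,q)-p^T\mu\}$. This linkage is exactly what the value relations (\ref{eq: game value relation 2, T stage}) and (\ref{eq: game value relation 4, T stage}) express. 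Next I would substitute the explicit optimizers supplied by Lemma \ref{lemma: optimal solution T stage}, namely $\nu^{*l}=-u_{l,0}(x^*_{p,q})$ as in (\ref{eq: initial regret nu}) and $\mu^{*k}=-w_{k,0}(y^*_{p,q})$ as in (\ref{eq: initial regret mu}). The one point that deserves explicit verification is that the $\nu^*$ named in the cited result is literally the object produced by the lemma, i.e. that both refer to the maximizer of the same program $\max_\nu\{\tilde{V}_T^2(q,\nu)-q^T\nu\}$; since Lemma \ref{lemma: optimal solution T stage} is stated exactly for this optimization problem, the identification is immediate, and the reading of $\nu^*$ and $\mu^*$ as initial regrets on the opponent's type then follows from the discussion accompanying (\ref{eq: vector payoff player 1}) and (\ref{eq: vector payoff player 2}).

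Finally I would record the sufficient-statistics dependence, which is the content that distinguishes the corollary's phrasing from a bare citation. The backward recursion (\ref{eq: dynamic programming, T stage dual game 2}) together with its linear-program realization in Theorem \ref{theorem: LP dual game T stage} shows that player 1's optimal play in $\tilde{\Gamma}_T^2(p,\nu^*)$ at stage $t$ is a function of the triple $(t,p_t,\nu_t)$ alone, where $p_t$ and $\nu_t$ evolve by (\ref{eq: belief state player 1}) and (\ref{eq: vector payoff player 1}) from the initial data $p_1=p$, $\nu_1=\nu^*$; the symmetric assertion for player 2 uses (\ref{eq: dynamic programming, T stage dual game 1}), (\ref{eq: belief state player 2}) and (\ref{eq: vector payoff player 2}). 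Combining these three steps gives the corollary. I do not expect a genuine obstacle here, as no inequality or fixed-point argument is introduced; the only thing I would double-check is that the strategy extracted from the finite-horizon recursion is specified at every reachable history, not merely along the optimal path, so that it is a bona fide element of $\Sigma$ (respectively $\mathcal{T}$) that can be deployed unchanged in the primal game.
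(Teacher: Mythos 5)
Your proposal matches the paper's own (implicit) argument exactly: the paper derives this corollary by combining the cited duality result of Rosenberg/Sorin/De Meyer with Lemma \ref{lemma: optimal solution T stage}'s identification of $\mu^*$ and $\nu^*$, and with the preceding discussion that the dual-game security strategies depend only on $(t,p_t,\nu_t)$ and $(t,\mu_t,q_t)$ via the recursions (\ref{eq: dynamic programming, T stage dual game 1}) and (\ref{eq: dynamic programming, T stage dual game 2}). Your additional check that the extracted strategy is a bona fide element of $\Sigma$ (resp.\ $\mathcal{T}$) is a sensible refinement but does not change the route.
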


According to Corollary \ref{theorem: security strategy relation between primal and dual games}, we can compute player 1's security strategy in the following way. First, compute the initial regret, $\nu^*$, on player 2's type. Stage by stage, update $p_t$ and $\nu_t$, and compute the security strategy based on $p_t$, $\nu_t$ and $t$ in the dual game. Player 2's security strategy is computed in the same way.
\begin{algorithm}{Player 1's security strategy based on fixed-sized sufficient statistics}\hfill{}\\
\label{algorithm: player 1's strategy T stage}
 \begin{enumerate}
   \item Initialization
        \begin{itemize}
          \item Compute $u_{:,0}^*$ based on LP (\ref{eq: LP player   1}-\ref{eq: LP plyaer 1-2}).
          \item Set $t=1$, $p_t=p$, and $\nu_t=-u_{:,0}^*$.
        \end{itemize}
   \item Compute player 1's security strategy $\tilde{\sigma}_t^*$ at stage $t$ according to (\ref{eq: security strategy,   player 1, dual game}) based on LP (\ref{eq: LP player 1 dual game}-\ref{eq: LP plyaer 1-2 dual game}).
   \item Choose an action in $A$ according to $\tilde{\sigma}_t^*$, and announce the action publicly. Meanwhile, read player 2's current action.
   \item If $t=T$, then go to step 6. Otherwise, update $p_{t+1}$ and $\nu_{t+1}$ according to (\ref{eq: belief state player 1}) and (\ref{eq: vector payoff player   1}), respectively.
   \item Update $t=t+1$ and go to step 2.
   \item End.
 \end{enumerate}
\end{algorithm}
\begin{algorithm}{Player 2's security strategy based on fixed-sized sufficient statistics}\hfill{}\\
\label{algorithm: player 2's algorithm T stage}
  \begin{enumerate}
    \item Initialization
        \begin{itemize}
          \item Compute $w_{:,0}^*$ based on LP (\ref{eq: LP player   2}-\ref{eq: LP player 2 2}).
          \item Set $t=1$, $\mu_t=-w_{:,0}^*$, and $q_t=q$.
        \end{itemize}
    \item Compute Player 2's security strategy $\tilde{\tau}_t^*$ at stage $t$ according to (\ref{eq: security strategy, player 2,   dual game}) based on LP (\ref{eq: LP player 2 dual game}-\ref{eq: LP player 2-2 dual game}).
    \item Choose an action in $B$ according to $\tilde{\tau}_t^*$, and announce it publicly. Meanwhile, read player 1's current action.
    \item If $t=T$, then go to step 6. Otherwise, update $q_{t+1}$ and $\mu_{t+1}$ according to (\ref{eq: belief state player 2}) and (\ref{eq: vector payoff   player 2}), respectively.
    \item Update $t=t+1$ and go to step 2.
    \item End.
  \end{enumerate}
\end{algorithm}

\section{$\lambda$-Discounted Repeated Bayesian Games}
A two-player zero-sum $\lambda$-discounted repeated Bayesian game, which is simply called discounted game or discounted primal game in the rest of this paper,  is specified by the same seven-tuple $(K,L,A,B,M,p_0,q_0)$ and played in the same way as in a two-player zero-sum $T$-stage repeated Bayesian game. The payoff of player 1 at stage $t$ is $\lambda(1-\lambda)^{t-1}M(k,l,a_t,b_t)$, where $\lambda\in (0,1)$, and the game is played for infinite horizon. Correspondingly, the strategy spaces $\Sigma$ and $\mathcal{T}$ are defined for infinite horizon. The total payoff of the discounted game with initial probability $p_0,q_0$ and strategies $\sigma$ and $\tau$ is defined as
\begin{align*}
  \gamma_{\lambda}(p_0,q_0,\sigma,\tau)=\mathbb{E}_{p_0,q_0,\sigma,\tau}\left(\sum_{t=1}^\infty \lambda(1-\lambda)^{t-1} M(k,l,a_t,b_t)\right).
\end{align*}

The discounted game $\Gamma_\lambda(p_0,q_0)$ is defined as a two-player zero-sum repeated Bayesian game equipped with initial distribution $p_0$ and $q_0$, strategy spaces $\Sigma$ and $\mathcal{T}$, and payoff function $\gamma_\lambda(p_0,q_0,\sigma,\tau)$. The security strategies $\sigma^*$ and $\tau^*$, and security levels $\underline{V}_\lambda(p_0,q_0)$ and $\overline{V}_\lambda(p_0,q_0)$ are defined in the same way as in a $T$-stage repeated Bayesian game. Since $\gamma_\lambda(p_0,q_0,\sigma,\tau)$ is bilinear over $\sigma$ and $\tau$, the discounted game $\Gamma_\lambda(p_0,q_0)$ has a value $V_\lambda(p_0,q_0)$ according to Sion's minmax Theorem \cite{sion1958general}, i.e. $V_\lambda(p_0,q_0)=\underline{V}_\lambda(p_0,q_0)=\overline{V}_\lambda(p_0,q_0)$.

\subsection{Dual games, security strategies, and sufficient statistics}

A discounted game is played for infinite stages, and the history action space is infinite, too. It is not practical to design behavior strategies that directly depends on history actions, and it is necessary to find a sufficient statistics for decision making. A candidate sufficient statistics in the discounted game $\Gamma_\lambda(p,q)$ is the belief state pair $(p_t,q_t)$ \cite{sorin2002first,rosenberg1998duality}. The belief state pair is, unfortunately, not fully available to either player after the first stage, since $(p_t,q_t)$ depends on both players' strategies according to (\ref{eq: belief state player 1}) and (\ref{eq: belief state player 2}). The objective in this section is to find, for every player, the fully available sufficient statistics and the corresponding security strategy that depends on the sufficient statistics. We will use the same technique as that in the $T$-stage game to find the fully available sufficient statistics and the corresponding security strategies. Let us start from the dual games of the discounted game.

A discounted game $\Gamma_\lambda(p,q)$ also has two dual games. The discounted type 1 dual game $\tilde{\Gamma}_\lambda^1(\mu,q)$ is with respect to the first parameter $p$, where $\mu\in \mathbb{R}^{|K|}$ is the initial regret with respect to player 1's type. The discounted type 1 dual game $\tilde{\Gamma}_\lambda^1(\mu,q)$ is played the same as in the $T$-stage type 1 dual game $\tilde{\Gamma}_T^1(\mu,q)$, except that the discounted game is played for infinite horizon. Let $p$ be player 1's strategy to choose his own type. Player 1's payoff is
\begin{align*}
&\tilde{\gamma}^1_\lambda(\mu,q,p,\sigma,\tau)\\
=&\mathbb{E}_{p,q,\sigma,\tau}\left(\mu^k+\sum_{t=1}^\infty \lambda(1-\lambda)^{t-1}M(k,l,a_t,b_t)\right).
\end{align*}
The discounted type 2 dual game $\tilde{\Gamma}_\lambda^2(p,\nu)$ is defined with respect to the second parameter $q$, where $\nu\in \mathbb{R}^{|L|}$ is the initial regret with respect to player 2's type. The discounted type 2 dual game $\tilde{\Gamma}_\lambda^2(p,\nu)$ is played the same as in the $T$-stage type 2 dual game $\tilde{\Gamma}_T^2(p,\nu)$, except that the discounted game is played for infinite horizon. Let $q$ be player 2's strategy to choose his type. Player 1's payoff is
\begin{align*}
&\tilde{\gamma}^2_\lambda(p,\nu,q,\sigma,\tau)\\
=&\mathbb{E}_{p,q,\sigma,\tau}\left(\nu^l+\sum_{t=1}^\infty \lambda(1-\lambda)^{t-1}M(k,l,a_t,b_t)\right).
\end{align*}

Both dual games, $\tilde{\Gamma}_\lambda^1(\mu,q)$ and $\tilde{\Gamma}_\lambda^2(p,\nu)$, have values denoted by $\tilde{V}^1_\lambda(\mu,q)$ and $\tilde{V}^2_\lambda(p,\nu)$, respectively. The game values of the discounted dual games are related to the game value of the discounted primal game in the following way \cite{sorin2002first}.
\begin{align}
  \tilde{V}^1_\lambda(\mu,q)=&\max_{p\in \Delta(K)}\{V_\lambda(p,q)+p^T\mu\} \label{eq: game value relations 1, discounted game}\\
  V_\lambda(p,q)=& \min_{\mu\in\mathbb{R}^{|K|}} \{\tilde{V}^1_\lambda(\mu,p)-p^T\mu\} \label{eq: game value relations 2, discounted game} \\
  \tilde{V}^2_\lambda(p,\nu)=& \min_{q\in \Delta(L)} \{ V_\lambda(p,q)+q^T\nu\} \label{eq: game value relations 3, discounted game}\\
  V_\lambda(p,q)=& \max_{\nu\in\mathbb{R}^{|L|}} \{ \tilde{V}^2_\lambda(p,\nu)-q^T\nu\} \label{eq: game value relations 4, discounted game}
\end{align}

Let $\mu^*$ and $\nu^*$ be the optimal solution to the optimal problem on the right hand side of (\ref{eq: game value   relations 2, discounted game}) and (\ref{eq: game value   relations 4, discounted game}), respectively. Player 2's security strategy in discounted type 1 dual game $\tilde{\Gamma}_\lambda^1(\mu^*,q)$ is also his security strategy in the discounted primal game $\Gamma_\lambda(p,q)$ \cite{sorin2002first,de1996repeated,rosenberg1998duality}. Player 1's security strategy in discounted type 2 dual game $\tilde{\Gamma}_\lambda^2(p,\nu^*)$ is his security strategy in the discounted primal game $\Gamma_\lambda(p,q)$ \cite{sorin2002first}. The following lemma tells us what $\mu^*$ and $\nu^*$ are. The proof is the same as the proof of Lemma \ref{lemma: optimal solution T stage}
\begin{lemma}
  \label{lemma: optimal solution discounted game}
Consider a discounted game $\Gamma_\lambda(p,q)$. Let $\sigma^*_{p,q}$ and $\tau^*_{p,q}$ be player 1 and 2's security strategies, and $x^*_{p,q}$ and $y_{p,q}^*$ be the corresponding optimal realization plans of player 1 and 2. Define
\begin{align*}
  u_{l,0;\lambda}(x)=&\min_{\tau(l)\in \mathcal{T}(l)} \mathbb{E}\left(\sum_{t=1}^\infty \lambda(1-\lambda)^{t-1} M(k,l,a_t,b_t)|l\right),\\
  w_{k,0;\lambda}(y)=&\max_{\sigma(k)\in \Sigma(k)}\mathbb{E}\left(\sum_{t=1}^\infty \lambda(1-\lambda)^{t-1} M(k,l,a_t,b_t)|k\right),
\end{align*}
where $x$ and $y$ are player 1 and 2's realization plans.
The optimal solution $\mu^*$ to the optimal problem $\min_{\mu_\in\mathbb{R}^{|K|}}\{\tilde{V}^1_\lambda(\mu,q)-p^T\mu\}$ is
\begin{align}
\mu^{*k}=-w_{k,0;\lambda}(y^*_{p,q}), \forall k\in K. \label{eq: initial regret player 1 discounted}
\end{align}

The optimal solution $\nu^*$ to the optimal problem $\max_{\nu\in\mathbb{R}^{|L|}} \{ \tilde{V}^2_\lambda(p,\nu)-q^T\nu\}$ is
\begin{align}
\nu^{*l}=-u_{l,0;\lambda}(x^*_{p,q}), \forall l\in L. \label{eq: initial regret player 2 discounted}
\end{align}
\end{lemma}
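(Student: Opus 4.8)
The plan is to follow the proof of Lemma~\ref{lemma: optimal solution T stage} essentially verbatim in structure, replacing the finite sum $\sum_{t=1}^T M(k,l,a_t,b_t)$ by the discounted sum $\sum_{t=1}^\infty \lambda(1-\lambda)^{t-1} M(k,l,a_t,b_t)$ and using the discounted game-value relations (\ref{eq: game value relations 1, discounted game})--(\ref{eq: game value relations 4, discounted game}) in place of the $T$-stage ones. I will prove the claim for $\mu^*$; the argument for $\nu^*$ is symmetric, exchanging the roles of the two players, replacing $w_{k,0;\lambda}$ by $u_{l,0;\lambda}$, and using (\ref{eq: game value relations 4, discounted game}) instead of (\ref{eq: game value relations 2, discounted game}).

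First I would establish the discounted analogue of identity (\ref{eq: 0 game value 0}), namely
\begin{align*}
V_\lambda(p,q) = p^T w_{:,0;\lambda}(y^*_{p,q}) = -p^T\mu^*.
\end{align*}
Starting from the definition of $w_{k,0;\lambda}$, I would write $\sum_{k\in K} p^k w_{k,0;\lambda}(y^*_{p,q})$ as $\max_{\sigma\in\Sigma} \mathbb{E}_{p,q,\sigma,y^*_{p,q}}(\sum_{t=1}^\infty \lambda(1-\lambda)^{t-1} M)$, then use that $y^*_{p,q}$ is player~2's security (minimizing) realization plan to turn this into $\min_y \max_\sigma \mathbb{E}_{\sigma,y}(\cdots)$, and finally invoke perfect recall to replace minimization over realization plans by minimization over behavior strategies, which yields $V_\lambda(p,q)$. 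Since $\mu^{*k} = -w_{k,0;\lambda}(y^*_{p,q})$ by definition, this chain delivers the stated identity.

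Second I would show $\tilde{V}_\lambda^1(\mu^*,q) = 0$. The lower bound $\tilde{V}_\lambda^1(\mu^*,q) \geq V_\lambda(p,q) + p^T\mu^* = 0$ follows from (\ref{eq: game value relations 1, discounted game}) by evaluating the maximand at $p'=p$. For the upper bound, for any $p'\in\Delta(K)$ I would bound $V_\lambda(p',q) = \min_\tau \max_\sigma \mathbb{E}_{p',q,\sigma,\tau}(\cdots) \leq \max_\sigma \mathbb{E}_{p',q,\sigma,y^*_{p,q}}(\cdots) = \sum_{k\in K} p'^k w_{k,0;\lambda}(y^*_{p,q}) = -p'^T\mu^*$, where the inequality comes from fixing player~2's plan at the (generally suboptimal for $p'$) security plan $y^*_{p,q}$. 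Hence $V_\lambda(p',q)+p'^T\mu^* \leq 0$ for every $p'$, so (\ref{eq: game value relations 1, discounted game}) forces $\tilde{V}_\lambda^1(\mu^*,q)\leq 0$. Combining with the first step, $\tilde{V}_\lambda^1(\mu^*,q) - p^T\mu^* = -p^T\mu^* = V_\lambda(p,q)$, and (\ref{eq: game value relations 2, discounted game}) then certifies $\mu^*$ as an optimal solution.

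The main obstacle, relative to the purely finite-horizon Lemma~\ref{lemma: optimal solution T stage}, is that each interchange and attainment step must now be justified in an infinite-horizon, infinite-dimensional setting rather than through finite LP duality. Concretely, I must ensure the discounted payoff functionals are well defined (the factor $\lambda(1-\lambda)^{t-1}$ makes the series absolutely convergent and uniformly bounded by $\|M\|_{\sup}$, so this part is routine), that the minimizing realization plan $y^*_{p,q}$ exists and attains the value, and that the minimax equality $\max_\sigma\min_\tau = \min_\tau\max_\sigma$ holds. The last two are precisely what the existence of the value $V_\lambda(p,q)$ via Sion's minimax theorem \cite{sion1958general} and the security-strategy/realization-plan correspondence under perfect recall provide, both already established in this section; once these are invoked, the remaining algebra is identical to the $T$-stage case.
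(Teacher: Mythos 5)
Your proposal is correct and follows exactly the route the paper intends: the paper itself states that the proof of this lemma "is the same as the proof of Lemma \ref{lemma: optimal solution T stage}," and your adaptation --- establishing $V_\lambda(p,q)=-p^T\mu^*$, then $\tilde{V}^1_\lambda(\mu^*,q)=0$ via the two bounds from (\ref{eq: game value relations 1, discounted game}), and concluding optimality from (\ref{eq: game value relations 2, discounted game}) --- is precisely that argument with the discounted payoff substituted. Your added remarks on absolute convergence of the discounted series, existence of the value via Sion's theorem, and the realization-plan correspondence under perfect recall are appropriate justifications for the infinite-horizon steps the paper leaves implicit.
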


Now that we've found the special initial regrets in the dual games, our next step is to study players' security strategies in the dual games. With a little abuse of notation, in discounted games, $\mu_t$ and $\nu_t$ are called the \emph{anti-discounted regret} on player 1 and 2's types, respectively. In discounted type 1 dual game $\tilde{\Gamma}_\lambda^1(\mu,q)$, the anti-discounted regret $\mu_t$ on player 1's type is defined as
\begin{align*}
  \mu_t^k=\frac{\mu^k+\sum_{s=1}^{t-1}\lambda(1-\lambda)^{s-1} \mathbb{E}(M(k,l,a_s,b_s)|k,h_{s+1}^A,h_{s+1}^B)}{(1-\lambda)^{t-1}},
\end{align*}
for all $k\in K, t=1,2,\ldots$, and is updated as follows
\begin{align}
  \mu_{t+1}^k=\mu^+(\mu_t,a_t,b_t,z_t,q_t)=\frac{\mu_t^k+\lambda \sum_{l\in L} q_{t+1}^lM^{kl}_{a_t,b_t}}{1-\lambda} ,  \label{eq: mu discounted game}
\end{align}
for all $k\in K,$ with $\mu_1=\mu$,
where $q_t$ is the belief on player 2's type defined as in (\ref{eq: belief state player 2}), and updated as in (\ref{eq: belief state player 2}).

In discounted type 2 dual game $\tilde{\Gamma}_\lambda^2(p,\nu)$, the \emph{anti-discounted regret $\nu_t$ on player 2' type} is defined as
\begin{align*}
  \nu_t^l=\frac{\nu^l+\sum_{s=1}^{t-1}\lambda(1-\lambda)^{s-1} \mathbb{E}(M(k,l,a_s,b_s)|l,h_{s+1}^A,h_{s+1}^B)}{(1-\lambda)^{t-1}},
\end{align*}
for all $l\in L, t=1,2,\ldots$, and is updated as follows
\begin{align}
  \nu_{t+1}^l=&\nu^+(\nu_t,a_t,b_t,r_t,p_t)=\frac{\nu_t^l+\lambda \sum_{k\in K} p_{t+1}^k M^{kl}_{a_t,b_t}} {1-\lambda}, \label{eq: nu discounted game}
\end{align}
for all $l\in L$, with $\nu_1=\nu$,
where $p_{t}$ is the belief on player 1's type defined as in (\ref{eq: belief state player 1}), and updated as in (\ref{eq: belief state player 1}).

Player 2's security strategy in discounted type 1 dual game can be found by solving equation (\ref{eq: dynamic pragraming,   discounted dual game 1}), and depends only on $\mu_t$ and $q_t$ at stage $t$. Player 1's security strategy in discounted type 2 dual game can be computed by solving equation (\ref{eq: dynamic pragraming,   discounted dual game 2}), and depends only on $q_t$ and $\nu_t$ at stage $t$. The desirable property of $(\mu_t,q_t)$ and $(p_t,\nu_t)$ is that they are fully available to player 2 and $1$, respectively.
\begin{align}
  \tilde{V}^1_\lambda(\mu,q)=&\min_{z\in\Delta(B)^{|L|}}\max_{a\in A}(1-\lambda)\sum_{b\in B}\bar{z}_{q,z}(b)\nonumber \\
  &\tilde{V}^1_\lambda(\mu^+(\mu,a,b,z,q),q^+(b,z,q)).\label{eq: dynamic pragraming, discounted dual game 1} \\
  \tilde{V}^2_\lambda(p,\nu)=&\max_{r\in\Delta(A)^{|K|}} \min_{b\in B} (1-\lambda) \sum_{a\in A} \bar{r}_{p,r}(a)\nonumber\\
   &\tilde{V}^2_\lambda(p^+(a,r,p),\nu^+(\nu,a,b,r,p)).\label{eq: dynamic pragraming, discounted dual game 2}
\end{align}

Finally, we are ready to investigate players' sufficient statistics and the corresponding security strategies in the discounted primal game $\Gamma_\lambda(p,q)$.
\begin{corollary}{\cite{sorin2002first}}
  \label{theorem: security strategy and sufficient statitics in discounted games}
Consider a discounted game $\Gamma_\lambda(p,q)$ and its dual games $\tilde{\Gamma}_\lambda^1(\mu,q)$ and $\tilde{\Gamma}_\lambda^2(p,\nu)$. Let $\nu^*$ take the form of (\ref{eq: initial regret player 2 discounted}). Player 1's security strategy $\tilde{\sigma}^*$, which depends only on $(p_t,\nu_t)$ at stage $t$, in discounted type 2 dual game $\tilde{\Gamma}_\lambda^2(p,\nu^*)$ is also player 1's security strategy in the discounted primal game $\Gamma_\lambda(p,q)$.

Similarly, let $\mu^*$ take the form of (\ref{eq: initial regret player 1 discounted}). Player 2's security strategy $\tilde{\tau}^*$, which depends only on $(\mu_t,q_t)$ at stage $t$, in discounted type 1 dual game $\tilde{\Gamma}_\lambda^1(\mu^*,q)$ is also player 2's security strategy in the discounted primal game $\Gamma_\lambda(p,q)$.
\end{corollary}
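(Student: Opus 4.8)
The plan is to assemble the corollary from three components already in hand for the discounted game, following verbatim the argument used for the finite-horizon Corollary~\ref{theorem: security strategy relation between primal and dual games}. I would treat only player 1's claim in detail, since player 2's follows by the symmetric exchange of the two dual games. First I would invoke the primal--dual correspondence stated immediately before Lemma~\ref{lemma: optimal solution discounted game}: if $\nu^*$ solves the outer problem $\max_{\nu\in\mathbb{R}^{|L|}}\{\tilde{V}^2_\lambda(p,\nu)-q^T\nu\}$ in (\ref{eq: game value relations 4, discounted game}), then player 1's security strategy in the discounted type 2 dual game $\tilde{\Gamma}_\lambda^2(p,\nu^*)$ is already known from \cite{sorin2002first,de1996repeated,rosenberg1998duality} to be a security strategy in the discounted primal game $\Gamma_\lambda(p,q)$. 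This reduces the corollary to two tasks: pinning down the correct $\nu^*$, and verifying the claimed dependence on the sufficient statistics.

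For the first task I would apply Lemma~\ref{lemma: optimal solution discounted game}, which identifies the optimizer of that outer problem as $\nu^{*l}=-u_{l,0;\lambda}(x^*_{p,q})$ for every $l\in L$, exactly the form (\ref{eq: initial regret player 2 discounted}) appearing in the statement; combined with the previous paragraph this shows the security strategy of player 1 in $\tilde{\Gamma}_\lambda^2(p,\nu^*)$ with this specific $\nu^*$ is indeed a primal security strategy. For the second task I would read off the dynamic programming equation (\ref{eq: dynamic pragraming, discounted dual game 2}): because the discounted value $\tilde{V}^2_\lambda$ is stationary and the recursion propagates the pair $(p_t,\nu_t)$ through the updates (\ref{eq: belief state player 1}) and (\ref{eq: nu discounted game}), the optimal $r$ at stage $t$ is a function of the current pair $(p_t,\nu_t)$ alone, which is precisely the fully-available sufficient statistic. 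Player 2's statement then follows by the identical three-step assembly, using (\ref{eq: initial regret player 1 discounted}) in place of (\ref{eq: initial regret player 2 discounted}) and reading the recursion (\ref{eq: dynamic pragraming, discounted dual game 1}) to obtain dependence on $(\mu_t,q_t)$.

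The main obstacle, relative to the finite-horizon corollary, is purely the infinite-horizon bookkeeping rather than any new game-theoretic idea: one must ensure that the discounted dual value $\tilde{V}^2_\lambda(p,\nu)$ is well-defined and finite, that Sion's minmax theorem legitimately yields the stationary value function solving (\ref{eq: dynamic pragraming, discounted dual game 2}), and that the \emph{anti-discounted} regret $\nu_t$ (rather than the plain regret of the $T$-stage case) is the right quantity to carry so that the recursion closes. The rescaling factor $(1-\lambda)^{-(t-1)}$ built into the definition of $\nu_t$ is exactly what renders the update (\ref{eq: nu discounted game}) stationary, and I would flag this as the single place where the discounted argument genuinely departs from its finite-horizon analogue; once that rescaling is in place, the assembly above goes through unchanged.
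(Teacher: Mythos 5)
Your proposal is correct and matches the paper's treatment: the paper gives no separate proof for this corollary, instead assembling it exactly as you do from the cited primal--dual security-strategy correspondence of \cite{sorin2002first,de1996repeated,rosenberg1998duality}, Lemma~\ref{lemma: optimal solution discounted game} identifying $\mu^*$ and $\nu^*$, and the recursions (\ref{eq: dynamic pragraming, discounted dual game 1})--(\ref{eq: dynamic pragraming, discounted dual game 2}) for the dependence on $(\mu_t,q_t)$ and $(p_t,\nu_t)$. Your additional remark on the anti-discounted rescaling making the update stationary is a correct observation consistent with the paper's definitions of $\mu_t$ and $\nu_t$.
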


\subsection{Approximating the initial regret states $\mu^*$ and $\nu^*$}
To compute players' security strategies in the discounted primal game, the first thing is to compute the initial regrets, $\mu^*=-w_{:,0;\lambda}(y^*)$ and $\nu^*=-u_{:,0;\lambda}(x^*)$, which is a non-convex problem in the variables \cite{sandholm2010state}. Therefore, we consider using the game value of a $\lambda$-discounted $T$-stage game to approximate the game value of the discounted game with infinite horizon, and further find approximated $\mu^*$ and $\nu^*$.

A $\lambda$-discounted $T$-stage repeated Bayesian game $\Gamma_{\lambda,T}(p,q)$ is a truncated version of the $\lambda$-discounted repeated Bayesian game $\Gamma_\lambda(p,q)$ in which the time horizon is $T$. We denote the payoff and the game value of the truncated discounted game as $\gamma_{\lambda,T}(p,q,\sigma,\tau)$ and $V_{\lambda,T}(p,q)$, respectively. In game $\Gamma_{\lambda,T}(p,q)$, we define anti-discounted weighted future security payoffs $u_{l,h_t^A,h_t^B;\lambda,T}^{a_t,b_t}$ and $w_{k,h_t^A,h_t^B;\lambda,T}^{a_t,b_t}$ for $t=0,\ldots,T-1$ as follows
\begin{align*}
  &u_{l,h_t^A,h_t^B;\lambda,T}^{a_t,b_t}(x)=(1-\lambda)^{-t}\min_{\tau_{t+1:T}(l)\in \mathcal{T}_{t+1:T}(l)} \sum_{k\in K} x_{k,h_t^A,h_t^B}^{a_t} \\
  &\mathbb{E}\left(\sum_{s=t+1}^T \lambda(1-\lambda)^{s-1}M(k,l,a_s,b_s)|k,l,h_{t+1}^A,h_{t+1}^B\right),\\
  &w_{k,h_t^A,h_t^B;\lambda,T}^{a_t,b_t}(y)=(1-\lambda)^{-t}\max_{\sigma_{t+1:T}(k)\in \Sigma_{t+1:T}(k)} \sum_{l\in L} y_{l,h_t^A,h_t^B}^{b_t} \\
  &\mathbb{E}\left(\sum_{s=t+1}^T \lambda(1-\lambda)^{s-1} M(k,l,a_s,b_s)|k,l,h_{t+1}^A,h_{t+1}^B\right),
\end{align*}
where $h_{t+1}^A=(h_t^A,a_t)$ and $h_{t+1}^B=(h_t^B,b_t)$, and the pairs here indicate concatenation.

Notice that $u_{l,h_0^A,h_0^B;\lambda,T}^{a_0,b_0}(x)$ and $w_{k,h_0^A,h_0^B;\lambda,T}^{a_0,b_0}(y)$, also denoted as $u_{l,0;\lambda,T}(x)$ and $w_{k,0;\lambda,T}(y)$, are truncated versions of $u_{l,0;\lambda}(x)$ and $w_{k,0;\lambda}(y)$, respectively. We can use $u_{l,0;\lambda,T}(x^\star)$ and $w_{k,0;\lambda,T}(y^\star)$ to approximate $u_{l,0;\lambda}(x^*)$ and $w_{k,0;\lambda}(y^*)$, and hence $\nu^*$ and $\mu^*$. Here, $x^*$ and $y^*$ are player 1 and 2's optimal realization plan in game $\Gamma_\lambda(p,q)$, and $x^\star$ and $y^\star$ are player 1 and 2's security strategies in game $\Gamma_{\lambda,T}(p,q)$. The following theorem provides linear programs to compute $u_{l,0;\lambda,T}(x^\star)$ and $w_{k,0;\lambda,T}(y^\star)$.
\begin{theorem}
\label{theorem: LP for parameter approximation}
Consider a $\lambda$-discounted $T$-stage repeated Bayesian game $\Gamma_{\lambda,T}(p,q)$. Its game value $V_{\lambda,T}(p,q)$ satisfies
\begin{align}
&V_{\lambda,T}(p,q)= \max_{x\in X,u_{\lambda,T}\in U,u_{:,0;\lambda,T}\in\mathbb{R}^{|L|}}\sum q^l u_{l,0;\lambda,T} \label{eq: LP discounted game 1} \\
s.t. & \lambda\sum_{k\in K}{M^{kl}}^T x_{k,h_1^A,h_1^B}+(1-\lambda){u_{l,h_1^A,h_1^B;\lambda,T}}^T\mathbf{1}\nonumber\\
 &\geq u_{l,0;\lambda,T}\mathbf{1}, \forall l\in L,\label{eq: LP discounted game 1-1}\\
& \lambda\sum_{k\in K}{M^{kl}}^T x_{k,h_{t+1}^A,h_{t+1}^B}+(1-\lambda){u_{l,h_{t+1}^A,h_{t+1}^B;\lambda,T}}^T\mathbf{1} \nonumber\\
&\geq u_{l,h_t^A,h_t^B;\lambda,T}^{a_t,b_t}\mathbf{1}, \forall t=1,\ldots,T-1, l\in L,\nonumber\\
& \forall h_{t+1}\in H_{t+1}^A,h_{t+1}^B\in H_{t+1}^B, \label{eq: LP discounted game 1-2}
\end{align}
where $h_{t+1}^A=(h_t^A,a_t)$, $h_{t+1}^B=(h_t^B,b_t)$, $X$ is a set including all real vectors satisfying (\ref{eq: x constraint 1}-\ref{eq: x constraint 3}), and $U$ is a real space of an appropriate dimension. The optimal solution $x^\star_{\lambda,T}$ is player 1's optimal realization plan in game $\Gamma_{\lambda,T}(p,q)$, and its anti-discounted weighted future security payoff at stage $0$ is the optimal solution $u^\star_{:,0;\lambda,T}$, i.e. $u_{:,0;\lambda,T}(x^\star)=u^\star_{:,0;\lambda,T}$.

Dually, $V_{\lambda,T}(p,q)$ also satisfies
\begin{align}
&V_{\lambda,T}(p,q)=\min_{y\in Y,w_{\lambda,T}\in W,w_{:,0;\lambda,T}\in\mathbb{R}^{|K|}}\sum_{k\in K} p^k w_{k,0;\lambda,T} \label{eq: LP discounted game player 2}\\
  s.t. & \lambda\sum_{l\in L} M^{kl} y_{l,h_1^A,h_1^B}+(1-\lambda)w_{k,h_1^A,h_1^B;\lambda,T}\mathbf{1}\nonumber\\
   &\leq w_{k,0;\lambda,T}\mathbf{1},\forall k\in K, \label{eq: LP discounted game 2-1}\\
  & \lambda \sum_{l\in L} M^{kl} y_{l,h_{t+1}^A,h_{t+1}^B}+ (1-\lambda) w_{k,h_{t+1}^A,h_{t+1}^B;\lambda,T}\mathbf{1} \nonumber \\
  &\leq w_{k,h_{t}^A,h_{t}^B;\lambda,T}^{a_{t},b_{t}}\mathbf{1},\forall t=1,\ldots,T-1,k\in K,\nonumber\\
&\forall h_{t+1}^A\in H_{t+1}^A,h_{t+1}^B\in H_{t+1}^B, \label{eq: LP discounted game 2-2}
\end{align}
where $h_{t+1}^A=(h_t^A,a_t)$, $h_{t+1}^B=(h_t^B,b_t)$, $Y$ is a set including all real vectors satisfying (\ref{eq: y constraint 1}-\ref{eq: y constraint 3}), and $W$ is a real space of an appropriate dimension. The optimal solution $y^\star$ is player 2's optimal realization plan in game $\Gamma_{\lambda,T}(p,q)$, and its anti-discounted weighted future security payoff at stage $0$ is the optimal solution $w^\star_{:,0;\lambda,T}$, i.e. $w_{:,0;\lambda,T}(y^\star)=w^\star_{:,0;\lambda,T}$.
\end{theorem}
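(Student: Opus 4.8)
The plan is to mirror the proof of Theorem~\ref{theorem: LP primal game T stage}, inserting the discount weights $\lambda$ and $(1-\lambda)$ at the appropriate places. The starting observation is that, by the definition of the anti-discounted weighted future security payoff at stage $0$,
\begin{align*}
V_{\lambda,T}(p,q)=\max_{x\in X}\sum_{l\in L}q^l u_{l,0;\lambda,T}(x),
\end{align*}
since $u_{l,0;\lambda,T}(x)$ already carries the inner minimization over player~2's type-$l$ strategies and $x_{k,h_0^A,h_0^B}^{a_0}=p^k$ by (\ref{eq: x}). Thus it suffices to show that $u_{l,0;\lambda,T}(x)$ admits the LP representation whose constraints are (\ref{eq: LP discounted game 1-1}-\ref{eq: LP discounted game 1-2}), and then to read off the optimal realization plan and the optimal auxiliary variables.

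First I would establish a backward recursion for $u_{l,h_t^A,h_t^B;\lambda,T}^{a_t,b_t}(x)$ analogous to (\ref{eq: u recursive}). Peeling off the stage-$(t+1)$ payoff, which carries weight $\lambda(1-\lambda)^{t}$, and collecting the remaining terms from $t+2$ to $T$ as the stage-$(t+1)$ payoff scaled by $(1-\lambda)^{t+1}$, the normalizing factor $(1-\lambda)^{-t}$ produces
\begin{align*}
u_{l,h_t^A,h_t^B;\lambda,T}^{a_t,b_t}(x)=\min_{\tau_{t+1}(l,h_{t+1}^A,h_{t+1}^B)\in\Delta(B)}\Bigl(\lambda\sum_{k\in K}x_{k,h_{t+1}^A,h_{t+1}^B}^T M^{kl}+(1-\lambda)\mathbf{1}^T u_{l,h_{t+1}^A,h_{t+1}^B;\lambda,T}(x)\Bigr)\tau_{t+1}(l,h_{t+1}^A,h_{t+1}^B),
\end{align*}
with $u_{l,h_T^A,h_T^B;\lambda,T}$ a zero matrix. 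The base case $t=T-1$ follows directly, since $(1-\lambda)^{-(T-1)}\lambda(1-\lambda)^{T-1}=\lambda$ and the terminal payoff matrix vanishes; the inductive step proceeds exactly as in Lemma~\ref{lemma: weighted future security payoff}, the only difference being the extra factors $\lambda$ and $(1-\lambda)$ generated by the discounting.

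Next I would dualize the inner minimization over the simplex $\Delta(B)$. By LP duality, the minimum over $\tau_{t+1}(l,\cdot,\cdot)$ of that linear functional equals the maximum of a scalar $u_{l,h_t^A,h_t^B;\lambda,T}^{a_t,b_t}$ subject to the constraint that the bracketed vector dominates $u_{l,h_t^A,h_t^B;\lambda,T}^{a_t,b_t}\mathbf{1}$ componentwise, which is precisely (\ref{eq: LP discounted game 1-2}). Repeating the nested-LP flattening argument from the proof of Theorem~\ref{theorem: LP primal game T stage} — showing that the nested maximization collapses into a single joint maximization over all auxiliary variables $u_{\lambda,T}\in U$ — yields the single linear program (\ref{eq: LP discounted game 1}-\ref{eq: LP discounted game 1-2}). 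The optimal $x^\star$ is then player~1's optimal realization plan in $\Gamma_{\lambda,T}(p,q)$, and the optimal scalar variables are exactly $u_{:,0;\lambda,T}(x^\star)$.

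The dual statement (\ref{eq: LP discounted game player 2}-\ref{eq: LP discounted game 2-2}), together with $w_{:,0;\lambda,T}(y^\star)=w^\star_{:,0;\lambda,T}$, follows by the symmetric argument, replacing the inner minimization by an inner maximization over $\Delta(A)$ and reversing the inequalities. I expect the main obstacle to be the bookkeeping in the recursion: correctly tracking the powers of $(1-\lambda)$ absorbed by the normalization $(1-\lambda)^{-t}$ so that the per-stage constraint emerges with the clean weights $\lambda$ on the current payoff and $(1-\lambda)$ on the future term, rather than a stage-dependent factor. Once the recursion is pinned down, the duality and flattening steps are routine adaptations of the finite-horizon proof.
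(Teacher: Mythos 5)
Your proposal is correct and follows essentially the same route as the paper: the paper's proof likewise first establishes the discounted backward recursion (its equation (\ref{eq: u recursive discount}), identical to the one you derive, with the $(1-\lambda)^{-t}$ normalization yielding the stage-independent weights $\lambda$ and $1-\lambda$) by repeating the steps of Lemma~\ref{lemma: weighted future security payoff}, and then invokes the dualization and nested-LP flattening argument of Theorem~\ref{theorem: LP primal game T stage} verbatim. Your write-up is in fact more explicit than the paper's about the base case and the bookkeeping of the powers of $(1-\lambda)$, but the substance is the same.
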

\begin{proof}
Following the same steps as in the proof of Lemma \ref{lemma: weighted future security payoff}, we have for all $t=0,\ldots,T-1$,
\begin{align}
 &u_{l,h_t^A,h_t^B;\lambda,T}^{a_t,b_t}(x)= \min_{\tau_{t+1}(l,h_{t+1}^A,h_{t+1}^B)\in \Delta(B)}\left(\lambda \sum_{k\in K}x_{k,h_{t+1}^A,h_{t+1}^B}^T\right.\nonumber\\
 &\left.M^{kl}+(1-\lambda)\mathbf{1}^Tu_{l,h_{t+1}^A,h_{t+1}^B;\lambda,T}(x)\right) \tau_{t+1}(l,h_{t+1}^A,h_{t+1}^B),\label{eq: u recursive discount} \\
 &w_{k,h_t^A,h_t^B;\lambda,T}^{a_t,b_t}(y)= \max_{\sigma_{t+1}(k,h_{t+1}^A,h_{t+1}^B)\in \Delta(A)}\sigma_{t+1}(k,h_{t+1}^A,h_{t+1}^B)^T\nonumber\\
 &\left(\lambda \sum_{l\in L}M^{kl}y_{l,h_{t+1}^A,h_{t+1}^B}+(1-\lambda )w_{k,h_{t+1}^A,h_{t+1}^B;\lambda,T}(y)\mathbf{1}\right)\label{eq: w recursive discount}
\end{align}

Following the same steps as in the proof of Theorem \ref{theorem: LP primal game T stage}, we show Theorem \ref{theorem: LP for parameter approximation} is true.
\end{proof}

With $u^\star(:,0;\lambda,T)$ and $w^\star(:,0;\lambda,T)$ computed based on (\ref{eq: LP discounted game 1}-\ref{eq: LP discounted game 1-2}) and (\ref{eq: LP discounted game player 2}-\ref{eq: LP discounted game 2-2}), according to Lemma \ref{lemma: optimal solution discounted game}, we approximate $\mu^*$ and $\nu^*$ as
\begin{align}
  \mu^\dag=&-w^\star(:,0;\lambda,T), \hbox{and} \label{eq: approximated mu}\\
  \nu^\dag=&-u^\star(:,0;\lambda,T), \label{eq: approximated nu}
\end{align}
respectively.

\subsection{Approximating the security strategies $\tilde{\sigma}^*$ and $\tilde{\tau}^*$ in dual games}
Now that we have constructed an LP to compute the approximated initial regrets $\mu^\dag$ and $\nu^\dag$ in the dual games, the next step is to compute the security strategy in a discounted dual game, which will serve as the security strategy of the corresponding player in the discounted primal game.

Computing the security strategies and the game values in dual games $\tilde{\Gamma}_\lambda^1(\mu^*,q)$ and $\tilde{\Gamma}_\lambda^2(p,\nu^*)$ is non-convex \cite{sandholm2010state}. Therefore, we use the game values of truncated discounted dual games to approximate the game value of the discounted dual games, and then compute approximated security strategies based on the approximated game value.


A $\lambda$-discounted $T$-stage type 1 dual game $\tilde{\Gamma}^1_{\lambda,T}(\mu,q)$ is a truncated discounted type 1 dual game $\tilde{V}^1_\lambda(\mu,q)$ with time horizon to be $T$ stages. Following the same step as in the proof of Proposition 4.22 in \cite{sorin2002first}, the game value $\tilde{V}_{\lambda,T+1}^1(\mu,q)$ of the $\lambda$-discounted $T+1$-stage type 1 dual game $\tilde{\Gamma}^1_{\lambda,T+1}(\mu,q)$ satisfies
\begin{align}
  &\tilde{V}_{\lambda,T+1}^1(\mu,q)=\min_{z\in\Delta(B)^{|L|}}\max_{a\in A} (1-\lambda)\sum_{b\in B} \bar{z}_{q,z}(b)\nonumber \\ &\tilde{V}_{\lambda,T}^1(\mu^+(\mu,a,b,z,q),q^+(b,z,q)), \label{eq: dynamic programming, truncated discounted dual game 1}
\end{align}
with $\tilde{V}_{\lambda,0}^1(\mu,q)=\max \{\mu\}$.
Moreover, since $\tilde{\Gamma}^1_{\lambda,T}(\mu,q)$ is a dual game of $\Gamma_{\lambda,T}(p,q)$ with respect to the first parameter $p$, their game values satisfy
\begin{align}
  \tilde{V}_{\lambda,T}^1(\mu,q)=&\max_{p\in \Delta(K)} \{V_{\lambda,T}(p,q)+p^T\mu\}, \label{eq: game value relation 1, truncated discounted game}\\
  V_{\lambda,T}(p,q)=&\min_{\mu\in\mathbb{R}^{|K|}} \{\tilde{V}^1_{\lambda,T}(\mu,q)-p^T\mu\}.\label{eq: game value relation 2, truncated discounted game}
\end{align}

Similarly, a type 2 $\lambda$-discounted $T$-stage dual game $\tilde{\Gamma}^2_{\lambda,T}(p,\nu)$ is the truncated version of discounted type 2 dual game with time horizon $T$. The game value $\tilde{V}^2_{\lambda,T+1}(p,\nu)$ of the truncated discounted type 2 dual game $\tilde{\Gamma}^2_{\lambda,T+1}(p,\nu)$ satisfies
\begin{align}
  &\tilde{V}_{\lambda,T+1}^2(p,\nu)=\max_{r\in\Delta(A)^{|K|}}\min_{b\in B}(1-\lambda) \sum_{a\in A} \bar{r}_{p,r}(a)\nonumber\\
  &\tilde{V}_{\lambda,T}^2(p^+(a,r,p),\nu^+(\nu,a,b,r,p)), \label{eq: dynamic programming, truncated discounted dual game 2}
\end{align}
with $\tilde{V}_{\lambda,0}(p,\nu)=\min\{\nu\}$. The truncated discounted type 2 dual game $\tilde{\Gamma}^2_{\lambda,T}(p,\nu)$ is the dual game of the truncated discounted game $\Gamma_{\lambda,T}(p,q)$ with respect to the second parameter $q$, and hence their game values satisfy
\begin{align}
  \tilde{V}_{\lambda,T}^2(p,\nu)=&\min_{q\in\Delta(L)} \{V_{\lambda,T}(p,q)+q^T\nu\},\label{eq: game value relation 3, truncated discounted game}\\
  V_{\lambda,T}(p,q)=&\max_{\nu\in\mathbb{R}^{|L|}} \{\tilde{V}_{\lambda,T}^2(p,\nu)-q^T\nu\}.\label{eq: game value relation 4, truncated discounted game}
\end{align}

Based on the relations between the game values of the discounted game, the truncated discounted game $\Gamma_{\lambda,T}(p,q)$, and their dual games, we have the following lemma.
\begin{lemma}
  \label{lemma: supreme norm equality}
Consider a discounted game $\Gamma_\lambda(p,q)$ and its dual games $\tilde{\Gamma}_\lambda^1(\mu,q)$ and $\tilde{\Gamma}_\lambda^2(p,\nu)$, and a $T$-stage discounted game $\Gamma_{\lambda,T}(p,q)$ and its dual games $\tilde{\Gamma}_{\lambda,T}^1(\mu,q)$ and $\tilde{\Gamma}_{\lambda,T}^2(p,\nu)$. Their game values satisfy
\begin{align}
  \|V_\lambda-V_{\lambda,T}\|_{\sup}=\|\tilde{V}_\lambda^1-\tilde{V}_{\lambda,T}^1\|_{\sup}=\|\tilde{V}_\lambda^2-\tilde{V}_{\lambda,T}^2\|_{\sup} \label{eq: game value equalities}
\end{align}
\end{lemma}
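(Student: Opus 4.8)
The plan is to exploit the fact that, via equations (\ref{eq: game value relations 1, discounted game}--\ref{eq: game value relations 4, discounted game}) for the infinite-horizon games and (\ref{eq: game value relation 1, truncated discounted game}--\ref{eq: game value relation 4, truncated discounted game}) for the truncated games, each dual value is a Fenchel-type transform of the corresponding primal value (a partial Legendre transform in the regret variable, with the opponent's belief held as a parameter). Such transforms are non-expansive in the supremum norm, and in fact their inverses are given by the very same equations, so applying the transform cannot increase the supremum distance in either direction. I would prove the two equalities separately, treating type 1 and type 2 symmetrically, and for each I would establish the two matching inequalities.

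First I would prove $\|\tilde{V}^1_\lambda-\tilde{V}^1_{\lambda,T}\|_{\sup}\le\|V_\lambda-V_{\lambda,T}\|_{\sup}$. Fixing $(\mu,q)$, both $\tilde{V}^1_\lambda(\mu,q)=\max_{p}\{V_\lambda(p,q)+p^T\mu\}$ and $\tilde{V}^1_{\lambda,T}(\mu,q)=\max_{p}\{V_{\lambda,T}(p,q)+p^T\mu\}$ are maxima, over the same compact simplex $\Delta(K)$, of expressions differing only in the primal value inside. The elementary bound $|\max_p f(p)-\max_p g(p)|\le\sup_p|f(p)-g(p)|$ then gives $|\tilde{V}^1_\lambda(\mu,q)-\tilde{V}^1_{\lambda,T}(\mu,q)|\le\sup_{p\in\Delta(K)}|V_\lambda(p,q)-V_{\lambda,T}(p,q)|\le\|V_\lambda-V_{\lambda,T}\|_{\sup}$, and taking the supremum over $(\mu,q)$ yields the claim. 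For the reverse inequality I would use the inverse transforms (\ref{eq: game value relations 2, discounted game}) and (\ref{eq: game value relation 2, truncated discounted game}): fixing $(p,q)$, both $V_\lambda(p,q)$ and $V_{\lambda,T}(p,q)$ are obtained as $\min_\mu\{\tilde{V}^1(\mu,q)-p^T\mu\}$ from the respective dual value, so the analogous bound $|\inf_\mu f-\inf_\mu g|\le\sup_\mu|f-g|$ gives $|V_\lambda(p,q)-V_{\lambda,T}(p,q)|\le\sup_{\mu}|\tilde{V}^1_\lambda(\mu,q)-\tilde{V}^1_{\lambda,T}(\mu,q)|\le\|\tilde{V}^1_\lambda-\tilde{V}^1_{\lambda,T}\|_{\sup}$. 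Taking the supremum over $(p,q)$ closes the loop, giving $\|V_\lambda-V_{\lambda,T}\|_{\sup}=\|\tilde{V}^1_\lambda-\tilde{V}^1_{\lambda,T}\|_{\sup}$. The identical argument with $p\leftrightarrow q$, $\mu\leftrightarrow\nu$, $\max\leftrightarrow\min$, together with (\ref{eq: game value relations 3, discounted game}--\ref{eq: game value relations 4, discounted game}) and (\ref{eq: game value relation 3, truncated discounted game}--\ref{eq: game value relation 4, truncated discounted game}), establishes the type 2 equality.

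The main obstacle is purely a domain issue: the dual values live on the noncompact set $\mathbb{R}^{|K|}\times\Delta(L)$ (respectively $\Delta(K)\times\mathbb{R}^{|L|}$), so the reverse inequality involves a minimization over an unbounded $\mu$-domain where the minimizer need not a priori be attained. I would therefore verify the $\inf$-version of the non-expansiveness inequality directly by an $\varepsilon$-argument rather than invoking attainment: for any $\varepsilon>0$ choose $\mu_0$ nearly minimizing $g$, deduce $\inf f\le f(\mu_0)\le g(\mu_0)+\sup_\mu|f-g|<\inf g+\sup_\mu|f-g|+\varepsilon$, and let $\varepsilon\to0$ (the symmetric bound follows by swapping $f$ and $g$). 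Finiteness of all the suprema involved is not an issue either, since the first inequality already bounds $\|\tilde{V}^1_\lambda-\tilde{V}^1_{\lambda,T}\|_{\sup}$ by a supremum over the compact simplex product, which is finite. No further estimates are needed.
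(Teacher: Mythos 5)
Your proposal is correct and follows essentially the same route as the paper: both directions of each equality are obtained from the conjugacy relations (\ref{eq: game value relations 1, discounted game})--(\ref{eq: game value relations 4, discounted game}) and their truncated counterparts, combined with the non-expansiveness of $\sup$/$\inf$ in the supremum norm. Your $\varepsilon$-argument for the unbounded $\mu$-domain is a minor technical refinement over the paper, which simply takes the minimizers as attained, but the underlying argument is identical.
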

\begin{proof}
First, we show that $\|V_\lambda-V_{\lambda,T}\|_{\sup}\leq\|\tilde{V}_\lambda^1-\tilde{V}_{\lambda,T}^1\|_{\sup}$. According to equation (\ref{eq: game value   relations 2, discounted game}) and (\ref{eq:   game value relation 2, truncated discounted game}), we have
\begin{align*}
  &|V_\lambda(p,q)-V_{\lambda,T}(p,q)|\\
  =&|\min_{\mu\in\mathbb{R}}\{\tilde{V}_{\lambda}^1(\mu,q)-p^T\mu\} -\min_{\mu\in\mathbb{R}}\{\tilde{V}_{\lambda,T}^1(\mu,q)-p^T\mu\}|.
\end{align*}
Let $\mu^*$ and $\mu^{\star}$ be the optimal solutions to the optimal problem $\min_{\mu\in\mathbb{R}}\{\tilde{V}_{\lambda}^1(\mu,q)-p^T\mu\}$ and $\min_{\mu\in\mathbb{R}}\{\tilde{V}_{\lambda,T}^1(\mu,q)-p^T\mu\}$, respectively. If $\min_{\mu\in\mathbb{R}}\{\tilde{V}_{\lambda}^1(\mu,q)-p^T\mu\} \geq \min_{\mu\in\mathbb{R}}\{\tilde{V}_{\lambda,T}^1(\mu,q)-p^T\mu\}$, then we have $|V_\lambda(p,q)-V_{\lambda,T}(p,q)|\leq |\tilde{V}_{\lambda}^1(\mu^\star,q)-\tilde{V}_{\lambda,T}^1(\mu^\star,q)|$. Otherwise, it is true that $|V_\lambda(p,q)-V_{\lambda,T}(p,q)|\leq |\tilde{V}_{\lambda}^1(\mu^*,q)-\tilde{V}_{\lambda,T}^1(\mu^*,q)|$. Therefore, we have, for any $p\in\Delta(K)$ and $q\in \Delta(L)$, $|V_\lambda(p,q)-V_{\lambda,T}(p,q)| \leq \|\tilde{V}_{\lambda}^1(\mu,q)-\tilde{V}_{\lambda,T}^1(\mu,q)\|_{\sup}$, which implies that $\|V_\lambda-V_{\lambda,T}\|_{\sup}\leq\|\tilde{V}_\lambda^1-\tilde{V}_{\lambda,T}^1\|_{\sup}$.

Next, we show that $\|V_\lambda-V_{\lambda,T}\|_{\sup}\geq\|\tilde{V}_\lambda^1-\tilde{V}_{\lambda,T}^1\|_{\sup}$. According to equation (\ref{eq: game value   relations 1, discounted game}) and (\ref{eq: game   value relation 1, truncated discounted game}), we have
\begin{align*}
 &|\tilde{V}_\lambda^1(\mu,q)-\tilde{V}_{\lambda,T}^1(\mu,q)|\\
 =&|\max_{p\in \Delta(K)}\{V_\lambda(p,q)+p^T\mu\}-\max_{p\in \Delta(K)}\{V_{\lambda,T}(p,q)+p^T\mu\}|.
\end{align*}
Let $p^*$ and $p^\star$ be the optimal solutions to the optimal problems $\max_{p\in \Delta(K)}\{V_\lambda(p,q)+p^T\mu\}$ and $\max_{p\in \Delta(K)}\{V_{\lambda,T}(p,q)+p^T\mu\}$, respectively. If $\max_{p\in \Delta(K)}\{V_\lambda(p,q)+p^T\mu\}\geq\max_{p\in \Delta(K)}\{V_{\lambda,T}(p,q)+p^T\mu\}$, then we have $|\tilde{V}_\lambda^1(\mu,q)-\tilde{V}_{\lambda,T}^1(\mu,q)| \leq |V_\lambda(p^*,q)-V_{\lambda,T}(p^*,q)|.$ Otherwise, it is true that $|\tilde{V}_\lambda^1(\mu,q)-\tilde{V}_{\lambda,T}^1(\mu,q)| \leq |V_\lambda(p^\star,q)-V_{\lambda,T}(p^\star,q)|.$ Therefore, we conclude that for any $\mu\in\mathbb{R}^{|K|}$ and $q\in \Delta(L)$, $|\tilde{V}_\lambda^1(\mu,q)-\tilde{V}_{\lambda,T}^1(\mu,q)| \leq \|V_\lambda(p,q)-V_{\lambda,T}(p,q)\|_{\sup}$, which implies that $\|\tilde{V}_\lambda^1-\tilde{V}_{\lambda,T}^1\|\leq \|V_\lambda(p,q)-V_{\lambda,T}(p,q)\|_{\sup}$.

Therefore, we prove the first equality of (\ref{eq: game value equalities}). Following the same steps, we have $\|V_\lambda-V_{\lambda,T}\|_{\sup}=\|\tilde{V}_\lambda^2-\tilde{V}_{\lambda,T}^2\|_{\sup}$.
\end{proof}

When we use $\tilde{V}_{\lambda,T}^1(\mu,q)$ and $\tilde{V}_{\lambda,T}^2(p,\nu)$ to approximate $\tilde{V}_\lambda^1(\mu,q)$ and $\tilde{V}_\lambda^2(p,\nu)$, respectively, we are interested in how fast the approximations converge to the real game values. To this purpose, we define two operators $\tilde{F}^1$ and $\tilde{F}^2$ as
\begin{align}
  \tilde{F}^{1,\tilde{V}^1}_z(\mu,q)=&\max_{a\in A}(1-\lambda)\sum_{b\in B} \bar{z}_{q,z}(b)\nonumber\\
  &\tilde{V}^1(\mu^+(\mu,a,b,z,q),q^+(b,z,q)), \label{eq: tilde F 1}\\
  \tilde{F}^{2,\tilde{V}^2}_r(p,\nu)=&\min_{b\in B}(1-\lambda)\sum_{a\in A} \bar{r}_{p,r}(a)\nonumber\\
  &\tilde{V}^2(p^+(a,r,p),\nu^+(\nu,a,b,r,p)). \label{eq: tilde F 2}
\end{align}
The two operators $\tilde{F}^1$ and $\tilde{F}^2$ are contraction mappings.
\begin{lemma}
  \label{lemma: contraction mappings}
Given any $z\in \Delta(K)$, $r\in \Delta(L)$ and $\lambda\in (0,1)$, the operators $\tilde{F}^1$ and $\tilde{F}^2$ defined in (\ref{eq: tilde F 1}) and (\ref{eq: tilde F 2}) are contraction mappings with contraction constant $1-\lambda$, i.e.
\begin{align}
  \|\tilde{F}^{1,\tilde{V}^1_1}_z-\tilde{F}^{1,\tilde{V}^1_2}_z\|_{\sup}\leq (1-\lambda)\|\tilde{V}_1^1-\tilde{V}_2^1\|_{\sup}, \label{eq: contraction F 1}\\
  \|\tilde{F}^{2,\tilde{V}^2_1}_r-\tilde{F}^{2,\tilde{V}^2_2}_r\|_{\sup}\leq (1-\lambda)\|\tilde{V}_1^2-\tilde{V}_2^2\|_{\sup}, \label{eq: contraction F 2}
\end{align}
where $\tilde{V}^1_{1,2}:\mathbb{R}^{|K|}\times\Delta(L)\rightarrow \mathbb{R}$ and $\tilde{V}^2_{1,2}:\Delta(K)\times \mathbb{R}^{|L|} \rightarrow \mathbb{R}$.
\end{lemma}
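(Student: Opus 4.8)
The plan is to establish both inequalities by the standard estimate for contraction operators built from max/min of discounted convex combinations; I will argue $\tilde{F}^1$ in detail and obtain $\tilde{F}^2$ by the symmetric argument with $\min$ in place of $\max$. First I would fix an arbitrary point $(\mu,q)\in\mathbb{R}^{|K|}\times\Delta(L)$ together with the fixed $z$, and apply the elementary inequality $|\max_a f(a)-\max_a g(a)|\le \max_a|f(a)-g(a)|$ to the two operators $\tilde{F}^{1,\tilde{V}^1_1}_z(\mu,q)$ and $\tilde{F}^{1,\tilde{V}^1_2}_z(\mu,q)$ defined in (\ref{eq: tilde F 1}). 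This reduces the task to bounding, uniformly in $a\in A$, the difference of the inner expressions $(1-\lambda)\sum_{b\in B}\bar{z}_{q,z}(b)\,\tilde{V}^1_i(\mu^+(\mu,a,b,z,q),q^+(b,z,q))$ for $i=1,2$.

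The crucial observation is that the update maps $\mu^+$ and $q^+$ depend only on the data $(\mu,a,b,z,q)$ and not on the value function, so for each fixed pair $(a,b)$ both $\tilde{V}^1_1$ and $\tilde{V}^1_2$ are evaluated at the \emph{same} argument $(\mu^+,q^+)$. Hence the inner difference collapses to $(1-\lambda)\sum_{b\in B}\bar{z}_{q,z}(b)\bigl(\tilde{V}^1_1(\mu^+,q^+)-\tilde{V}^1_2(\mu^+,q^+)\bigr)$, and each summand is bounded in absolute value by $\|\tilde{V}^1_1-\tilde{V}^1_2\|_{\sup}$ by definition of the supremum norm. I would then use that the weights $\bar{z}_{q,z}(b)=\sum_{l\in L}q^l z^l(b)$ form a probability distribution over $b\in B$: they are nonnegative and sum to one because $q\in\Delta(L)$ and $z^l\in\Delta(B)$. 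Consequently the convex combination does not amplify the pointwise bound, and the inner difference is at most $(1-\lambda)\|\tilde{V}^1_1-\tilde{V}^1_2\|_{\sup}$ for every $a$.

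Taking the maximum over $a$ and then the supremum over $(\mu,q)$ yields (\ref{eq: contraction F 1}), and the identical computation with $\min_b$ replacing $\max_a$ and the weights $\bar{r}_{p,r}(a)=\sum_{k\in K}p^k r^k(a)$ replacing $\bar{z}_{q,z}(b)$ in (\ref{eq: tilde F 2}) gives (\ref{eq: contraction F 2}). Since every step is an elementary estimate, I do not anticipate a genuine obstacle; the only point requiring care is recording that the two value functions are evaluated at a common point, so that the discount factor $(1-\lambda)$ pulled out of the operator is \emph{exactly} the contraction constant and nothing larger creeps in from the convex averaging.
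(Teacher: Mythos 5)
Your proposal is correct and follows essentially the same route as the paper: the paper proves the elementary bound $|\max_a f(a)-\max_a g(a)|\le\max_a|f(a)-g(a)|$ inline via a case analysis on which of the two maxima is larger (evaluating both operators at the maximizer of the larger one), and then, exactly as you do, uses that the weights $\bar{z}_{q,z}(b)$ sum to one together with the supremum-norm bound on the common evaluation point $(\mu^+,q^+)$. The only cosmetic difference is that you invoke the max-difference inequality as a known fact while the paper re-derives it.
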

\begin{proof}
Let $a^*$ and $a^\star$ be the optimal solutions to the optimal problems $\max_{a\in A}(1-\lambda)\sum_{b\in B} \bar{z}_{q,z}\tilde{V}^1_1(\mu^+(\mu,a,b,z,q),q^+(b,z,q))$ and $\max_{a\in A}(1-\lambda)\sum_{b\in B} \bar{z}_{q,z}\tilde{V}^1_2(\mu^+(\mu,a,b,z,q),q^+(b,z,q))$.

If $\tilde{F}^{1,\tilde{V}^1_1}_z(\mu,q)\geq\tilde{F}^{1,\tilde{V}^1_2}_z(\mu,q)$, we have
\begin{align*}
  &|\tilde{F}^{1,\tilde{V}^1_1}_z(\mu,q)-\tilde{F}^{1,\tilde{V}^1_2}_z(\mu,q)|\\
  \leq&(1-\lambda)\sum_{b\in B} \bar{z}_{q,z}(b)|\tilde{V}^1_1(\mu^+(\mu,a^*,b,z,q),q^+(b,z,q))\\
  &-\tilde{V}^1_2(\mu^+(\mu,a^*,b,z,q),q^+(b,z,q))|\\
  \leq & (1-\lambda)\sum_{b\in B}\bar{z}_{q,z}(b) \|\tilde{V}^1_1-\tilde{V}^1_2\|_{\sup}\\
  =& (1-\lambda)\|\tilde{V}^1_1-\tilde{V}^1_2\|_{\sup}.
  \end{align*}

Otherwise, we have
\begin{align*}
  &|\tilde{F}^{1,\tilde{V}^1_1}_z(\mu,q)-\tilde{F}^{1,\tilde{V}^1_2}_z(\mu,q)|\\
 \leq&(1-\lambda)\sum_{b\in B} \bar{z}_{q,z}(b)|\tilde{V}^1_1(\mu^+(\mu,a^\star,b,z,q),q^+(b,z,q))\\
 &-\tilde{V}^1_2(\mu^+(\mu,a^\star,b,z,q),q^+(b,z,q))|\\
  \leq & (1-\lambda)\sum_{b\in B}\bar{z}_{q,z}(b) \|\tilde{V}^1_1-\tilde{V}^1_2\|_{\sup}= (1-\lambda)\|\tilde{V}^1_1-\tilde{V}^1_2\|_{\sup}.
\end{align*}

Hence, for any $\mu\in\mathbb{R}^{|K|}$ and $q\in \Delta(L)$, $|\tilde{F}^{1,\tilde{V}^1_1}_z(\mu,q)-\tilde{F}^{1,\tilde{V}^1_2}_z(\mu,q)|\leq (1-\lambda)\|\tilde{V}^1_1-\tilde{V}^1_2\|_{\sup}$, which implies equation (\ref{eq: contraction F 1}). Equation (\ref{eq: contraction F 2}) is shown following the same steps.
\end{proof}

Lemma \ref{lemma: contraction mappings} further implies that our game value approximations $\tilde{V}_{\lambda,T}^1(\mu,q)$ and $\tilde{V}^2_{\lambda,T}(p,\nu)$ converge to the real game values $\tilde{V}_\lambda^1(\mu,q)$ and $\tilde{V}^2_\lambda(p,\nu)$ exponentially fast over $T$.
\begin{theorem}
  \label{theorem: exponential fast convergence}
Consider the $\lambda$-discounted repeated Bayesian dual games $\tilde{\Gamma}_\lambda^1(\mu,q)$ and $\tilde{\Gamma}_\lambda^2(p,\nu)$, and their game values $\tilde{V}_\lambda^1(\mu,q)$ and $\tilde{V}_\lambda^2(p,\nu)$. The game values $\tilde{V}_{\lambda,T}^1(\mu,q)$ and $\tilde{V}_{\lambda,T}^2(p,\nu)$ of $\lambda$-discounted $T$-stage dual games $\tilde{\Gamma}_{\lambda,T}^1(\mu,q)$ and $\tilde{\Gamma}_{\lambda,T}^2(p,\nu)$ converge to $\tilde{V}_\lambda^1(\mu,q)$ and $\tilde{V}_\lambda^2(p,\nu)$ exponentially fast with respect to the time horizon $T$ with convergence rate $1-\lambda$, i.e.
\begin{align}
  \|\tilde{V}_\lambda^1-\tilde{V}_{\lambda,T+1}^1\|_{\sup}\leq & (1-\lambda)\|\tilde{V}_\lambda^1-\tilde{V}_{\lambda,T}^1\|_{\sup},\label{eq: convergence rate dual game 1}\\
  \|\tilde{V}_\lambda^2-\tilde{V}_{\lambda,T+1}^2\|_{\sup}\leq & (1-\lambda)\|\tilde{V}_\lambda^2-\tilde{V}_{\lambda,T}^2\|_{\sup}.\label{eq: convergence rate dual game 2}
\end{align}
\end{theorem}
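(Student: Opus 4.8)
The plan is to exploit the two dynamic programming equations already established for the dual games together with the contraction property in Lemma~\ref{lemma: contraction mappings}. First I would observe that the infinite-horizon dual value $\tilde{V}_\lambda^1$ satisfies the fixed-point equation (\ref{eq: dynamic pragraming, discounted dual game 1}), which in the notation of (\ref{eq: tilde F 1}) reads $\tilde{V}_\lambda^1(\mu,q)=\min_{z\in\Delta(B)^{|L|}}\tilde{F}^{1,\tilde{V}_\lambda^1}_z(\mu,q)$, while the truncated value obeys the one-step recursion (\ref{eq: dynamic programming, truncated discounted dual game 1}), i.e. $\tilde{V}_{\lambda,T+1}^1(\mu,q)=\min_{z\in\Delta(B)^{|L|}}\tilde{F}^{1,\tilde{V}_{\lambda,T}^1}_z(\mu,q)$. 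The two values are therefore the images of $\tilde{V}_\lambda^1$ and $\tilde{V}_{\lambda,T}^1$ under the \emph{same} operator $\min_z\tilde{F}^1_z(\cdot)$, so the claimed inequality is exactly the statement that this operator is a $(1-\lambda)$-contraction in the supremum norm.

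Next I would bound the pointwise difference by the same case analysis used in the proof of Lemma~\ref{lemma: contraction mappings}. Fix $(\mu,q)$ and let $z^*$ and $z^\star$ be minimizers of $\tilde{F}^{1,\tilde{V}_\lambda^1}_z(\mu,q)$ and $\tilde{F}^{1,\tilde{V}_{\lambda,T}^1}_z(\mu,q)$, respectively. If $\tilde{V}_\lambda^1(\mu,q)\ge\tilde{V}_{\lambda,T+1}^1(\mu,q)$, then substituting the (possibly suboptimal) choice $z^\star$ into the first minimization gives $0\le\tilde{V}_\lambda^1(\mu,q)-\tilde{V}_{\lambda,T+1}^1(\mu,q)\le\tilde{F}^{1,\tilde{V}_\lambda^1}_{z^\star}(\mu,q)-\tilde{F}^{1,\tilde{V}_{\lambda,T}^1}_{z^\star}(\mu,q)$; symmetrically, if the reverse inequality holds, using $z^*$ in the second minimization yields the analogous bound with $z^*$ in place of $z^\star$. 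In either case the right-hand side is at most $\|\tilde{F}^{1,\tilde{V}_\lambda^1}_z-\tilde{F}^{1,\tilde{V}_{\lambda,T}^1}_z\|_{\sup}$ for the relevant $z$, which by (\ref{eq: contraction F 1}) is bounded by $(1-\lambda)\|\tilde{V}_\lambda^1-\tilde{V}_{\lambda,T}^1\|_{\sup}$.

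Taking the supremum over all $(\mu,q)$ then yields (\ref{eq: convergence rate dual game 1}), and the identical argument applied to (\ref{eq: dynamic pragraming, discounted dual game 2}), (\ref{eq: dynamic programming, truncated discounted dual game 2}), and (\ref{eq: contraction F 2}) gives (\ref{eq: convergence rate dual game 2}). Iterating these contractions down to $\tilde{V}_{\lambda,0}^1$ would further give the explicit geometric decay $\|\tilde{V}_\lambda^1-\tilde{V}_{\lambda,T}^1\|_{\sup}\le(1-\lambda)^{T}\,\|\tilde{V}_\lambda^1-\tilde{V}_{\lambda,0}^1\|_{\sup}$.

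The only delicate point is the very first step: one must be certain that $\tilde{V}_\lambda^1$ is genuinely a fixed point of $\min_z\tilde{F}^1_z$ and that the truncated values arise from it by a single application of the identical operator, so that the two recursions differ solely in the input value function. Granting the dynamic programming equations (\ref{eq: dynamic pragraming, discounted dual game 1}) and (\ref{eq: dynamic programming, truncated discounted dual game 1}) stated earlier, this matching is immediate, and the proof reduces entirely to the contraction estimate of Lemma~\ref{lemma: contraction mappings} with no further analytic difficulty.
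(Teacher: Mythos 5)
Your proposal is correct and follows essentially the same route as the paper's own proof: it rewrites $\tilde{V}_\lambda^1$ and $\tilde{V}_{\lambda,T+1}^1$ as minima of $\tilde{F}^{1,\tilde{V}_\lambda^1}_z$ and $\tilde{F}^{1,\tilde{V}_{\lambda,T}^1}_z$ via the dynamic programming equations, bounds the difference of minima by the difference at a common (sub)optimal $z$ through the same two-case argument, and invokes the contraction estimate of Lemma~\ref{lemma: contraction mappings}. The added remark on iterating to obtain geometric decay is a harmless strengthening beyond what the paper states.
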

\begin{proof}
Equation (\ref{eq: dynamic pragraming,   discounted dual game 1}) and (\ref{eq: dynamic programming,   truncated discounted dual game 1}) and the definition of $\tilde{F}^1$ in (\ref{eq: tilde F 1}) imply that
\begin{align*}
  &|\tilde{V}^1_\lambda(\mu,q)-\tilde{V}^1_{\lambda,T+1}(\mu,q)|\\
  =&|\min_{z\in \Delta(L)} \tilde{F}^{1,\tilde{V}^1_\lambda}_z(\mu,q)-\min_{z\in \Delta(L)}\tilde{F}^{1,\tilde{V}^1_{\lambda,T}}_z(\mu,q)|.
\end{align*}

Let $z^*$ and $z^\star$ be the optimal solutions to the optimal problems $\min_{z\in \Delta(L)} \tilde{F}^{1,\tilde{V}^1_\lambda}_z(\mu,q)$ and $\min_{z\in \Delta(L)}\tilde{F}^{1,\tilde{V}^1_{\lambda,T}}_z(\mu,q)$, respectively. If $\min_{z\in \Delta(L)} \tilde{F}^{1,\tilde{V}^1_\lambda}_z(\mu,q)\geq\min_{z\in \Delta(L)}\tilde{F}^{1,\tilde{V}^1_{\lambda,T}}_z(\mu,q)$, according to equation (\ref{eq: contraction F 1}), we have
\begin{align*}
  &|\tilde{V}^1_\lambda(\mu,q)-\tilde{V}^1_{\lambda,T+1}(\mu,q)| \leq |\tilde{F}^{1,\tilde{V}^1_\lambda}_{z^\star}(\mu,q)-\tilde{F}^{1,\tilde{V}^1_{\lambda,T}}_{z^\star}(\mu,q)|\\
  \leq & \|\tilde{F}^{1,\tilde{V}^1_\lambda}_{z^\star}-\tilde{F}^{1,\tilde{V}^1_{\lambda,T}}_{z^\star}\|_{\sup}\leq (1-\lambda)\|\tilde{V}^1_\lambda-\tilde{V}^1_{\lambda,T}\|_{\sup}.
\end{align*}
Otherwise, we have
\begin{align*}
  &|\tilde{V}^1_\lambda(\mu,q)-\tilde{V}^1_{\lambda,T+1}(\mu,q)| \leq |\tilde{F}^{1,\tilde{V}^1_\lambda}_{z^*}(\mu,q)-\tilde{F}^{1,\tilde{V}^1_{\lambda,T}}_{z^*}(\mu,q)|\\
  \leq & \|\tilde{F}^{1,\tilde{V}^1_\lambda}_{z^*}-\tilde{F}^{1,\tilde{V}^1_{\lambda,T}}_{z^*}\|_{\sup}\leq (1-\lambda)\|\tilde{V}^1_\lambda-\tilde{V}^1_{\lambda,T}\|_{\sup}.
\end{align*}

Hence, for any $\mu\in\mathbb{R}^{|K|}$ and $q\in\Delta(L)$, $|\tilde{V}^1_\lambda(\mu,q)-\tilde{V}^1_{\lambda,T+1}(\mu,q)|\leq (1-\lambda)\|\tilde{V}^1_\lambda-\tilde{V}^1_{\lambda,T}\|_{\sup}$, which implies (\ref{eq: convergence rate dual game   1}). Equation (\ref{eq: convergence rate dual game   2}) is shown following the same steps.
\end{proof}

With the approximated values $\tilde{V}_{\lambda,T}^1(\mu,q)$ and $\tilde{V}_{\lambda,T}^2(p,\nu)$, we can use them in equation (\ref{eq: dynamic pragraming,   discounted dual game 1}) and (\ref{eq: dynamic pragraming,   discounted dual game 2}), and derive player 1 and 2's approximated security strategies $\tilde{\sigma}^\dag$ and $\tilde{\tau}^\dag$ as
\begin{align}
  \tilde{\sigma}^\dag(:,p,\nu)=&\argmax_{r\in\Delta(A)^{|K|}} \min_{b\in B} (1-\lambda) \sum_{a\in A} \bar{r}_{p,r}(a)\nonumber\\ &\tilde{V}^2_{\lambda,T}(p^+(a,r,p),\nu^+(\nu,a,b,r,p)), \label{eq: approximated security strategy, player 1}
\end{align}
\begin{align}
  \tilde{\tau}^\dag(:,\mu,q)=& \argmin_{z\in\Delta(B)^{|L|}}\max_{a\in A}(1-\lambda)\sum_{b\in B}\bar{z}_{q,z}(b)\nonumber\\
   &\tilde{V}^1_{\lambda,T}(\mu^+(\mu,a,b,z,q),q^+(b,z,q)). \label{eq: approximated security strategy, player 2}
\end{align}

Comparing equation (\ref{eq: approximated security strategy, player 1}) and (\ref{eq: approximated security strategy, player 2}) with equation (\ref{eq: dynamic programming, truncated discounted dual game 2}) and (\ref{eq: dynamic programming, truncated discounted dual game 1}), we see that the approximated security strategy of player 1 in discounted type 2 dual game is the security strategy of player 1 at stage 1 in $T+1$-stage discounted type 2 dual game, and the approximated security strategy of player 2 in discounted type 1 dual game is the security strategy of player 2 at stage 1 in $T+1$-stage discounted type 1 dual game. Following the same steps as in the proof of Theorem \ref{theorem: LP dual game T stage}, we provide LP formulations to compute the approximated security strategies $\tilde{\sigma}^\dag$ and $\tilde{\tau}^\dag$ in the following theorem.
\begin{theorem}
  \label{theorem: LP dual game discounted}
Consider a $T+1$-stage $\lambda$-discounted dual game $\tilde{\Gamma}_{\lambda,T+1}^2(p,\nu)$. Its game value $\tilde{V}_{\lambda,T+1}^2(p,\nu)$ satisfies
\begin{align}
&\tilde{V}_{\lambda,T+1}^2(p,\nu)= \max_{x\in X,u_{\lambda,T+1}\in U,u_{:,0;\lambda,T+1}\in\mathbb{R}^{|L|},\tilde{u}\in \mathbb{R}}\tilde{u} \label{eq: LP discounted dual game 1} \\
s.t. & \nu+u_{:,0;\lambda,T+1}\geq \tilde{u}\mathbf{1} \label{eq: LP discounted dual game 1-0}\\
& \lambda\sum_{k\in K}{M^{kl}}^T x_{k,h_1^A,h_1^B}+(1-\lambda){u_{l,h_1^A,h_1^B;\lambda,T+1}}^T\mathbf{1}\nonumber\\
 &\geq u_{l,0;\lambda,T+1}\mathbf{1}, \forall l\in L,\label{eq: LP discounted dual game 1-1}\\
& \lambda\sum_{k\in K}{M^{kl}}^T x_{k,h_{t+1}^A,h_{t+1}^B}+(1-\lambda){u_{l,h_{t+1}^A,h_{t+1}^B;\lambda,T+1}}^T\mathbf{1} \nonumber\\
&\geq u_{l,h_t^A,h_t^B;\lambda,T+1}^{a_t,b_t}\mathbf{1}, \forall t=1,\ldots,T, l\in L,h_{t+1}\in H_{t+1}^A,\nonumber\\
&\forall h_{t+1}^B\in H_{t+1}^B,\label{eq: LP discounted dual game 1-2}
\end{align}
where $X$ is a set including all real vectors satisfying (\ref{eq: x constraint 1}-\ref{eq: x constraint 3}) with $x_{:,h_0^A,h_0^B}^{a_0}=p$, and $U$ is an appropriately dimensional real space. The approximated security strategy $\tilde{\sigma}^\dag(:,p,\nu)$ of player 1 in the discounted type 2 dual game $\tilde{\Gamma}^2_\lambda(p,\nu)$ is his security strategy at stage $1$ in the $T+1$-stage discounted type 2 dual game $\tilde{\Gamma}_{\lambda,T+1}^2(p,\nu)$, and is computed as below
\begin{align}
  \tilde{\sigma}^{\dag}(k,p,\nu)=\frac{x^\star_{k,h_1^A,h_1^B}}{p^k}, \forall k\in K. \label{eq: approximated strategy form, player 1}
\end{align}

Similarly, the game value $\tilde{V}_{\lambda,T+1}^1(\mu,q)$ of a $T+1$-stage discounted type 1 dual game $\tilde{\Gamma}_{\lambda,T+1}^1(\mu,q)$ satisfies
\begin{align}
&\tilde{V}_{\lambda,T+1}^1(\mu,q)= \min_{\substack{y\in Y,w_{\lambda,T+1}\in W,\\w_{:,0;\lambda,T+1}\in\mathbb{R}^{|K|},\tilde{w}\in\mathbb{R}}}\tilde{w} \label{eq: LP discounted dual game player 2}\\
s.t. & \mu+w_{:,0;\lambda,T+1}\leq \tilde{w}\mathbf{1} \label{eq: LP discounted dual game player 2-0}\\
&\lambda\sum_{l\in L} M^{kl} y_{l,h_1^A,h_1^B}+(1-\lambda)w_{k,h_1^A,h_1^B;\lambda,T+1}\mathbf{1}\nonumber\\
&\leq w_{k,0;\lambda,T+1}\mathbf{1},\forall k\in K, \label{eq: LP discounted dual game 2-1}\\
  & \lambda \sum_{l\in L} M^{kl} y_{l,h_{t+1}^A,h_{t+1}^B}+ (1-\lambda) w_{k,h_{t+1}^A,h_{t+1}^B;\lambda,T+1}\mathbf{1}\nonumber\\
   &\leq w_{k,h_{t}^A,h_{t}^B;\lambda,T+1}^{a_{t},b_{t}}\mathbf{1},\forall t=1,\ldots,T,k\in K,h_{t+1}^A\in H_{t+1}^A,\nonumber\\
  &\forall h_{t+1}^B\in H_{t+1}^B,\label{eq: LP discounted dual game 2-2}
\end{align}
where $Y$ is a set including all real vectors satisfying (\ref{eq: y constraint 1}-\ref{eq: y constraint 3}) with $y_{:,h_0^A,h_0^B}^{b_0}=q$, and $W$ is an appropriately dimensional real space. The approximated security strategy $\tilde{\tau}^\dag(:,\mu,q)$ of player 2 in the discounted type 1 dual game $\tilde{\Gamma}^1_\lambda(\mu,q)$ is his security strategy at stage $1$ in $T+1$-stage discounted type 1 dual game $\tilde{\Gamma}^1_{\lambda,T+1}(\mu,q)$, and is computed as below.
\begin{align}
  \tilde{\tau}^\dag(l,\mu,q)=\frac{y^\star_{l,h_1^A,h_1^B}}{q^l}, \forall l\in L. \label{eq: approximated strategy form, player 2}
\end{align}
\end{theorem}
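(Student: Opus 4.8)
The plan is to mirror the proof of Theorem~\ref{theorem: LP dual game T stage}, replacing the undiscounted backward recursion by its anti-discounted counterpart from Theorem~\ref{theorem: LP for parameter approximation} and carrying the horizon as $T+1$. First I would write the value of the $(T+1)$-stage discounted type~2 dual game as a saddle expression over player~1's realization plan $x$ and player~2's choice of type and actions. Since player~2 picks $l$ according to $q\in\Delta(L)$ and then plays to minimize, and since the payoff realized given $l$, anti-discounted to stage $0$, is by construction exactly $u_{l,0;\lambda,T+1}(x)$, this yields
\begin{align*}
  \tilde{V}_{\lambda,T+1}^2(p,\nu)=\max_{x\in X}\min_{q\in\Delta(L)}\sum_{l\in L}q^l\left(\nu^l+u_{l,0;\lambda,T+1}(x)\right),
\end{align*}
where $X$ carries the initialization $x_{:,h_0^A,h_0^B}^{a_0}=p$.

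Next I would dualize the inner minimization. For fixed $x$, minimizing the linear form $\sum_{l}q^l(\nu^l+u_{l,0;\lambda,T+1}(x))$ over the simplex $\Delta(L)$ is an LP whose dual introduces a single scalar $\tilde{u}$ together with the constraint $\nu+u_{:,0;\lambda,T+1}(x)\geq\tilde{u}\mathbf{1}$, so the inner value equals $\max\{\tilde{u}:\nu+u_{:,0;\lambda,T+1}(x)\geq\tilde{u}\mathbf{1}\}$. I would then substitute the LP characterization of $u_{:,0;\lambda,T+1}(x)$ supplied by Theorem~\ref{theorem: LP for parameter approximation}, i.e.\ constraints~(\ref{eq: LP discounted game 1-1})--(\ref{eq: LP discounted game 1-2}) run over $t=1,\ldots,T$, which rest on the anti-discounted recursion~(\ref{eq: u recursive discount}) and on the nesting identity $\hat{u}=u$ already verified in the proof of Theorem~\ref{theorem: LP primal game T stage}. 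Merging the maximizations over $x$ and over the auxiliary payoff variables collapses the two-level problem into the single LP~(\ref{eq: LP discounted dual game 1})--(\ref{eq: LP discounted dual game 1-2}).

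For the strategy, I would invoke the observation established just before the theorem, namely that the approximated strategy $\tilde{\sigma}^\dag$ defined in~(\ref{eq: approximated security strategy, player 1}) through the truncated value $\tilde{V}_{\lambda,T}^2$ is precisely the stage-$1$ optimizer of the dynamic-programming recursion~(\ref{eq: dynamic programming, truncated discounted dual game 2}) for $\tilde{V}_{\lambda,T+1}^2$, that is, player~1's security behavioral strategy at the first stage of the $(T+1)$-stage dual game. Because the optimal $x^\star$ of the LP is a realization plan obeying~(\ref{eq: x}) with $x_{:,h_0^A,h_0^B}^{a_0}=p$, the realization-plan change of variables recovers this stage-$1$ strategy as $x_{k,h_1^A,h_1^B}^{\star}/p^k$, which is exactly~(\ref{eq: approximated strategy form, player 1}). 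The dual assertion for $\tilde{V}_{\lambda,T+1}^1$ and $\tilde{\tau}^\dag$ then follows by the symmetric argument with $\max$ and $\min$ interchanged, $w$ in place of $u$, and the $\mu$-constraint $\mu+w_{:,0;\lambda,T+1}\leq\tilde{w}\mathbf{1}$.

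I expect the only delicate point to be the bookkeeping of the $\lambda$ and $(1-\lambda)$ weights through the nested maximization, together with confirming that the identity $\hat{u}=u$ transfers verbatim to the anti-discounted setting; since Theorem~\ref{theorem: LP for parameter approximation} already packages exactly this discounted LP form of $u_{:,0;\lambda,T+1}$, the remaining work reduces to the LP-duality step over $\Delta(L)$ and the realization-plan extraction, both of which are routine.
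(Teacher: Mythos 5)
Your proposal is correct and follows essentially the same route as the paper, which itself proves this theorem by remarking that one repeats the proof of Theorem~\ref{theorem: LP dual game T stage} with the anti-discounted recursion (\ref{eq: u recursive discount}) from Theorem~\ref{theorem: LP for parameter approximation} in place of (\ref{eq: u recursive}): expressing $\tilde{V}_{\lambda,T+1}^2(p,\nu)$ as $\max_{x}\min_{q}\sum_l q^l(\nu^l+u_{l,0;\lambda,T+1}(x))$, dualizing the inner simplex minimization to obtain the scalar $\tilde{u}$ and constraint (\ref{eq: LP discounted dual game 1-0}), nesting the LP characterization of $u_{:,0;\lambda,T+1}$, and extracting the stage-$1$ behavioral strategy from the realization plan. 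The identification of $\tilde{\sigma}^\dag$ with the stage-$1$ security strategy of the $(T+1)$-stage dual game via (\ref{eq: dynamic programming, truncated discounted dual game 2}) is likewise exactly the observation the paper makes immediately before the theorem statement.
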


Corollary \ref{theorem: security strategy and sufficient statitics in discounted games} says that player 1 and 2's security strategies in discounted primal game $\Gamma_\lambda(p,q)$ are their security strategies in dual game $\tilde{\Gamma}_\lambda^2(p,\nu^*)$ and $\tilde{\Gamma}_\lambda^1(\mu^*,q)$, respectively, where $\mu^*$ and $\nu^*$ are the solutions to the optimal problems on the right hand side of (\ref{eq: game value   relations 2, discounted game}) and (\ref{eq: game value   relations 4, discounted game}).
Now, we know how to construct LP formulations to approximate the initial regrets, and players' security strategies in the dual games. We give the algorithms to compute the approximated security strategies for both players as below.
\begin{algorithm}{Player 1's approximated security strategy in discounted game $\Gamma_\lambda(p,q)$ \hfill{}}
\label{algorithm: player 1 discounted game}
\begin{enumerate}
  \item Initialization
  \begin{itemize}
    \item Set $T$, and read parameters: $k$, $M$, $p$ and $q$.
    \item Given $(p,q)$, compute $u_{:,0;\lambda,T}^\star$ according to the LP (\ref{eq: LP discounted game 1}-\ref{eq: LP discounted game 1-2}).
    \item Set $t=1$, $p_1=p$ and $\nu_1=-u_{:,0;\lambda,T}^\star$.
  \end{itemize}
  \item Compute player 1's approximated security strategy $\tilde{\sigma}^\dag(:,p_t,\nu_t)$ according to (\ref{eq:   approximated strategy form, player 1}) based on the LP (\ref{eq: LP discounted dual game 1}-\ref{eq: LP discounted dual game 1-2}) with $p=p_t$ and $\nu=\nu_t$.
  \item Choose an action in $A$ according to $\tilde{\sigma}^\dag(k,p_t,\nu_t)$, and announce it publicly. Read player 2's action $b_t$.
  \item Update $p_{t+1}$ and $\nu_{t+1}$ according to (\ref{eq: belief state player 1}) and (\ref{eq: nu discounted game}), respectively.
  \item Update $t=t+1$ and go to step $2$.
\end{enumerate}
\end{algorithm}

\begin{algorithm}{Player 2's approximated security strategy in discounted game $\Gamma_\lambda(p,q)$ \hfill{}}
\label{algorithm: player 2 discounted game}
\begin{enumerate}
  \item Initialization
    \begin{itemize}
      \item Set $T$, and read parameters: $l$, $M$, $p$ and $q$.
      \item Given $(p,q)$, compute $w_{:,0;\lambda,T}^\star$ according to the LP (\ref{eq: LP discounted game player 2}-\ref{eq: LP discounted game 2-2}).
      \item Set $t=1$, $q_1=q$ and $\mu_1=-w_{:,0;\lambda,T}^\star$.
    \end{itemize}
  \item Compute player 2's approximated security strategy $\tilde{\tau}^\dag(:,\mu_t,q_t)$ according to (\ref{eq: approximated   strategy form, player 2}) based on the LP (\ref{eq: LP discounted dual game player 2}-\ref{eq: LP discounted dual game 2-2}) with $\mu=\mu_t$ and $q=q_t$.
  \item Choose an action in $B$ according to $\tilde{\tau}^\dag(l,\mu_t,q_t)$, and announce it to the public. Read player 1's action $a_t$.
  \item Update $q_{t+1}$ and $\mu_{t+1}$ according to (\ref{eq: belief state player 2}) and (\ref{eq: mu   discounted game}), respectively.
  \item Update $t=t+1$ and go to step $2$.
\end{enumerate}
\end{algorithm}

\subsection{Performance analysis of the approximated security strategies}
With player 1 and 2's approximated security strategies $\tilde{\sigma}^\dag$ and $\tilde{\tau}^\dag$ described in Algorithm \ref{algorithm: player 1 discounted game} and \ref{algorithm: player 2 discounted game}, we are interested in their worst case payoffs $J^{\tilde{\sigma}^\dag}$ and $J^{\tilde{\tau}^\dag}$. Given player 1's strategy $\sigma\in \Sigma$, its worst case payoff in discounted game $\Gamma_\lambda(p,q)$ is defined as
\begin{align}
  J^\sigma(p,q)=\min_{\tau\in \mathcal{T}} \gamma_\lambda(p,q,\sigma,\tau).
\end{align}
Similarly, given player 2's strategy $\tau\in \mathcal{T}$, its worst case payoff in discounted game $\Gamma_\lambda(p,q)$ is defined as
\begin{align}
  J^\tau(p,q)=\max_{\sigma\in \Sigma}\gamma_\lambda(p,q,\sigma,\tau).
\end{align}
Because players' approximated security strategies in game $\Gamma_\lambda(p,q)$ are derived from the approximated security strategies in its dual games, their worst case payoffs in $\Gamma_\lambda(p,q)$ are highly related to the worst case payoffs in the dual games. We define player 1's worst case payoff $\tilde{J}^{2,\sigma}$ in dual game $\tilde{\Gamma}^2_\lambda(p,\nu)$ and player 2's worst case payoff $\tilde{J}^{1,\tau}$ in dual game $\tilde{\Gamma}^1_\lambda(\mu,q)$ as
\begin{align}
  \tilde{J}^{2,\sigma}(p,\nu)=&\min_{q\in \Delta(L)}\min_{\tau\in \mathcal{T}}\tilde{\gamma}^2_\lambda(p,\nu,q,\sigma,\tau),\label{eq: tilde J 2}\\
  \tilde{J}^{1,\tau}(\mu,q)=&\max_{p\in \Delta(K)}\max_{\sigma\in \Sigma} \tilde{\gamma}^1_\lambda(\mu,q,p,\sigma,\tau). \label{eq: tilde J 1}
\end{align}
Following the same steps as in the proof of (\ref{eq: game value   relations 1, discounted game}-\ref{eq: game value   relations 4, discounted game}) in \cite{de1996repeated,sorin2002first}, we can show the relations between $J^\sigma(p,q)$ and $\tilde{J}^{2,\sigma}(p,\nu)$, and between $J^\tau(p,q)$ and $\tilde{J}^{1,\tau}(\mu,q)$ as below.
\begin{align}
  \tilde{J}^{2,\sigma}(p,\nu)=\min_{q\in \Delta(L)}\{J^\sigma(p,q)+q^T\nu\},\label{eq: worst case payoff relation 1}\\
  J^\sigma(p,q)=\max_{\nu\in\mathbb{R}^{|L|}}\{\tilde{J}^{2,\sigma}(p,\nu)-q^T\nu\}, \label{eq: worst case payoff relation 2}\\
  \tilde{J}^{1,\tau}(\mu,q)=\max_{p\in \Delta(K)}\{ J^\tau(p,q)+p^T\mu\},\label{eq: worst case payoff relation 3}\\
  J^\tau(p,q)=\min_{\mu\in\mathbb{R}^{|K|}}\{\tilde{J}^{1,\tau}(\mu,q)-p^T\mu\}. \label{eq: worst case payoff relation 4}
\end{align}

The worst case payoffs $\tilde{J}^{2,\sigma}$ and $\tilde{J}^{1,\tau}$ satisfy recursive formulas if $\sigma$ and $\tau$ are stationary strategies, i.e. $\sigma$ only depends on $p_t$ and $\nu_t$, and $\tau$ only depends on $\mu_t$ and $q_t$.
\begin{lemma}
  \label{lemma: worst case payoff recursive formula}
Let $\sigma$ be player 1's stationary strategy that depends only on $p_t$ and $\nu_t$ in the discounted type 2 dual game $\tilde{\Gamma}_\lambda^2(p,\nu)$. Its worst case payoff $\tilde{J}^{2,\sigma}(p,\nu)$ satisfies \begin{align}
\tilde{J}^{2,\sigma}(p,\nu)=\tilde{F}^{2,\tilde{J}^{2,\sigma}}_{\sigma(:,p,\nu)}(p,\nu). \label{eq: recursive tilde J 2}
\end{align}

Similarly, let $\tau$ be player 2's stationary strategy that depends only on $\mu_t$ and $q_t$ in the discounted type 1 dual game $\tilde{\Gamma}_\lambda^1(\mu,q)$. Its worst case payoff $\tilde{J}^{1,\tau}(\mu,q)$ satisfies \begin{align}
  \tilde{J}^{1,\tau}(\mu,q)=\tilde{F}^{1,\tilde{J}^{1,\tau}}_{\tau(:,\mu,q)}(\mu,q).    \label{eq: recursive tilde J1}
\end{align}
\end{lemma}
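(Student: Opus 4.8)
The plan is to prove the first identity (\ref{eq: recursive tilde J 2}); the second, (\ref{eq: recursive tilde J1}), then follows verbatim after exchanging the roles of the two players. The key observation is that $\tilde{F}^{2,\cdot}_{\sigma(:,p,\nu)}$, defined in (\ref{eq: tilde F 2}), is exactly the one-stage \emph{policy-evaluation} operator for the control problem faced by the adversary in the discounted type 2 dual game once player 1's stationary strategy $\sigma$ is frozen, the state being $(p_t,\nu_t)$. Thus (\ref{eq: recursive tilde J 2}) asserts that the worst-case payoff of a fixed stationary policy is a fixed point of its own evaluation operator, and I would establish it directly from the definition (\ref{eq: tilde J 2}) by a one-stage decomposition of the infinite discounted sum.

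First I would fix $r=\sigma(:,p,\nu)$ and split the payoff $\nu^l+\sum_{t\ge 1}\lambda(1-\lambda)^{t-1}M(k,l,a_t,b_t)$ into its stage-$1$ contribution and the tail from stage $2$ onward, writing the tail as $(1-\lambda)\sum_{s\ge1}\lambda(1-\lambda)^{s-1}M(k,l,a_{s+1},b_{s+1})$. The crucial algebraic step is the anti-discounting identity built into the update (\ref{eq: nu discounted game}): the initial regret $\nu^l$ together with the discounted stage-$1$ payoff $\lambda\sum_{k}p^{+k}M^{kl}_{a_1,b_1}$ equals precisely $(1-\lambda)\nu^{+l}$, where $p^{+}=p^{+}(a_1,r,p)$ and $\nu^{+}=\nu^{+}(\nu,a_1,b_1,r,p)$. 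Hence, in expectation, the whole payoff factors as $(1-\lambda)$ times a payoff of the same form evaluated at $(p^+,\nu^+)$. Because $\sigma$ is stationary and the pair $(p_t,\nu_t)$ evolves as a controlled Markov chain, the continuation — after the factor $(1-\lambda)$ is extracted — is a faithful copy of the dual game begun at $(p^+,\nu^+)$ with player 1 again using $\sigma$; averaging over the announced action $a_1$ with weights $\bar{r}_{p,r}(a_1)$, its worst-case value is $\sum_{a}\bar{r}_{p,r}(a)\,\tilde{J}^{2,\sigma}(p^+(a,r,p),\nu^+(\nu,a,b_1,r,p))$.

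It remains to show that the adversary's combined minimization over the type distribution $q$ and over $\tau$ in (\ref{eq: tilde J 2}) decomposes across stages and collapses, at stage $1$, to the single minimization $\min_{b\in B}$ appearing in (\ref{eq: tilde F 2}). Here I would invoke the Fenchel-type duality that already underlies the dual game, namely relation (\ref{eq: worst case payoff relation 1}), $\tilde{J}^{2,\sigma}(p,\nu)=\min_{q\in\Delta(L)}\{J^\sigma(p,q)+q^T\nu\}$, and its twin (\ref{eq: worst case payoff relation 2}): the regret vector $\nu$ is precisely the dual variable conjugate to the type distribution $q$, so the adversary's choice of type is already absorbed into $\nu$ and the one-stage response reduces to a pure action $b$. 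Carrying out this step is exactly the computation by which the dual-value recursion (\ref{eq: dynamic pragraming, discounted dual game 2}) is obtained in \cite{sorin2002first}, with the single change that the outer maximization over $r$ is deleted and $r$ is set to $\sigma(:,p,\nu)$. Combining the three ingredients yields $\tilde{J}^{2,\sigma}(p,\nu)=\min_{b\in B}(1-\lambda)\sum_{a\in A}\bar{r}_{p,r}(a)\,\tilde{J}^{2,\sigma}(p^+(a,r,p),\nu^+(\nu,a,b,r,p))=\tilde{F}^{2,\tilde{J}^{2,\sigma}}_{\sigma(:,p,\nu)}(p,\nu)$.

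The main obstacle is this last minimization step: one must justify that the adversary loses nothing by replying with a type-independent pure action at stage $1$, the type information being carried forward in $\nu$. This is the genuinely game-theoretic content — the rest is bookkeeping of discount factors — and it is where the conjugacy (\ref{eq: worst case payoff relation 1})--(\ref{eq: worst case payoff relation 2}) and the Markov structure of $(p_t,\nu_t)$ must be used together. As a consistency check, and to characterize $\tilde{J}^{2,\sigma}$ uniquely, I would note that by Lemma \ref{lemma: contraction mappings}, in particular (\ref{eq: contraction F 2}), the map $\tilde{V}\mapsto\tilde{F}^{2,\tilde{V}}_{\sigma(:,\cdot,\cdot)}$ is a $(1-\lambda)$-contraction in the supremum norm, so $\tilde{J}^{2,\sigma}$ is its unique fixed point.
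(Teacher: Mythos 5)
Your proposal is correct and follows essentially the same route as the paper's proof: a one-stage decomposition combined with the anti-discounting identity $(1-\lambda)\nu^{+l}=\nu^{l}+\lambda\sum_{k}p^{+k}M^{kl}_{a,b}$, followed by reparametrizing the adversary's stage-one choice $(q,z)$ as $(\bar{z},q^{+})$ so that the minimization over posteriors is absorbed into $\tilde{J}^{2,\sigma}(p^{+},\nu^{+})$ via (\ref{eq: worst case payoff relation 1}) and the remaining linear minimization over $\bar{z}\in\Delta(B)$ collapses to $\min_{b\in B}$. The only cosmetic difference is that the paper first writes the Bellman equation for the primal worst-case payoff $J^{\sigma}(p,q)$ and then dualizes, whereas you decompose directly in the dual game; the key computation is identical.
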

\begin{proof}
According to Bellman's principle, we have
\begin{align*}
&J^\sigma(p,q)=\min_{z\in \Delta(B)^{|L|}}\left(\lambda\sum_{l\in L,k\in K}p^kq^l{r^k}^TM^{kl}z^l+(1-\lambda)\right.\\
&\left.\sum_{a\in A,b\in B}\bar{r}_{p,r}(a)\bar{z}_{q,z}(b)J^\sigma(p^+(a,p,r),q^+(b,q,z))\right),
\end{align*}
where $r^k=\sigma(k,p,\nu)$.
From equation (\ref{eq: worst case payoff   relation 1}), we derive that
\begin{align*}
&\tilde{J}^{2,\sigma}(p,\nu)=\min_{q\in \Delta(L),z\in \Delta(B)^{|L|}}\left(\sum_{b\in B,l\in L}q^lz^l(b)\nu^l\right.\\
\end{align*}
\begin{align*}
&\left.+\lambda\sum_{l\in L,k\in K,a\in A,b\in B}p^kq^lr^k(a)M^{kl}_{ab}z^l(b)+(1-\lambda)\right.\\
&\left.\sum_{a\in A,b\in B}\bar{r}_{p,r}(a)\bar{z}_{q,z}(b)J^\sigma(p^+(a,p,r),q^+(b,q,z)) \right)\\
&=(1-\lambda)\min_{q\in\Delta(L),z\in \Delta(B)^{|L|}}\sum_{b\in B}\bar{z}_{q,z}(b)\sum_{a\in A}\bar{r}_{p,r}(a)\\
&\left(\sum_{l\in L}q^{+l}(b,q,z)\frac{\nu^l+\lambda\sum_{k\in K}p^{+k}(a,p,r)M^{kl}_{ab}}{1-\lambda}\right.\\
&\left.+J^\sigma(p^+(a,p,r),q^+(b,q,z))\right).
\end{align*}
Since $q^lz^l(b)=q^{+l}(b,q,z)\bar{z}_{q,z}(b)$ for any $l\in L$ and $b\in B$, the minimum function taken with respect to $q\in\Delta(L),z\in \Delta(B)^{|L|}$ is the same as the minimum function taken with respect to $q^+\in \Delta(L)^{|B|},\bar{z}\in\Delta(B)$. Thus, we have
\begin{align*}
  &\tilde{J}^{2,\sigma}(p,\nu)\\
  =&(1-\lambda)\min_{q^+\in \Delta(L)^{|B|},\bar{z}\in\Delta(B)}\sum_{b\in B}\bar{z}(b)\sum_{a\in A}\bar{r}_{p,r}(a)\left(\sum_{l\in L}q^{+l}(b) \right.\\
&\left. \frac{\nu^l+\lambda\sum_{k\in K}p^{+k}(a,p,r)M^{kl}_{ab}}{1-\lambda}+J^\sigma(p^+(a,p,r),q^+(b))\right)\\
=&(1-\lambda)\min_{\bar{z}\in\Delta(B)}\sum_{b\in B}\bar{z}(b)\sum_{a\in A}\bar{r}_{p,r}(a)\\
&\tilde{J}^{2,\sigma}(p^+(a,p,r),\nu^+(\nu,a,b,r,p))\\
=&(1-\lambda)\min_{b\in B}\sum_{a\in A}\bar{r}_{p,r}(a)\tilde{J}^{2,\sigma}(p^+(a,p,r),\nu^+(\nu,a,b,r,p))\\
=&\tilde{F}^{2,\tilde{J}^{2,\sigma}}_{\sigma(:,p,\nu)}(p,\nu).
\end{align*}
The second equality is derived from (\ref{eq: worst case   payoff relation 1}).

Following the same steps, we can show that $\tilde{J}^{1,\tau}(\mu,q)=\tilde{F}^{1,\tilde{J}^{1,\tau}}_{\tau(:,\mu,q)}(\mu,q)$.
\end{proof}

Based on Lemma \ref{lemma: worst case payoff recursive formula}, we are ready to analyze the performance difference between players' approximated security strategies and their security strategies.
\begin{theorem}
Consider a discounted game $\Gamma_\lambda(p,q)$. If player 2 uses $\tilde{\sigma}^\dag$ defined in (\ref{eq: approximated security   strategy, player 1}) as his strategy, and follows Algorithm \ref{algorithm: player 1 discounted game} to take actions, then his worst case payoff $J^{\tilde{\sigma}^\dag}(p,q)$ satisfies
\begin{align}
  \|J^{\tilde{\sigma}^\dag}-V_\lambda\|_{\sup}\leq \frac{2(1-\lambda)}{\lambda}\|V_{\lambda|T}-V_\lambda\|_{\sup}.\label{eq: performance difference 1}
\end{align}

If player 2 uses $\tilde{\tau}^\dag$ defined in (\ref{eq: approximated   security strategy, player 2}) as his strategy, and follows Algorithm \ref{algorithm: player 2 discounted game} to take actions, then his worst case payoff $J^{\tilde{\tau}^\dag}(p,q)$ satisfies
\begin{align}
  \|J^{\tilde{\tau}^\dag}-V_\lambda\|_{\sup}\leq \frac{2(1-\lambda)}{\lambda}\|V_{\lambda|T}-V_\lambda\|_{\sup}. \label{eq: performance difference 2}
\end{align}
\end{theorem}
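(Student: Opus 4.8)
The plan is to establish the bound first for player~1's strategy $\tilde\sigma^\dag$ in the type~2 dual game and then transfer it to the primal game; the estimate for $\tilde\tau^\dag$ follows by the symmetric argument. Abbreviate $V^*=\tilde V_\lambda^2$ and $\hat V=\tilde V_{\lambda,T}^2$, and recall from (\ref{eq: approximated security strategy, player 1}) that $\tilde\sigma^\dag(:,p,\nu)=\argmax_{r}\tilde F^{2,\hat V}_r(p,\nu)$ is greedy with respect to the truncated value $\hat V$, whereas the true value obeys the Bellman equation $V^*=\max_r\tilde F^{2,V^*}_r$ by (\ref{eq: dynamic pragraming, discounted dual game 2}), with maximizer the dual security strategy $\tilde\sigma^*$, so that $V^*=\tilde F^{2,V^*}_{\tilde\sigma^*}$. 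Since $\tilde\sigma^\dag$ depends only on $(p_t,\nu_t)$ it is stationary, so Lemma~\ref{lemma: worst case payoff recursive formula} applies and its dual worst-case payoff is the fixed point $\tilde J^{2,\tilde\sigma^\dag}=\tilde F^{2,\tilde J^{2,\tilde\sigma^\dag}}_{\tilde\sigma^\dag}$.

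The core step is to bound $\|\tilde J^{2,\tilde\sigma^\dag}-V^*\|_{\sup}$ by the standard greedy-policy-loss argument. First I would control the one-sided gap $0\le V^*-\tilde F^{2,V^*}_{\tilde\sigma^\dag}$, where nonnegativity is immediate from $V^*=\max_r\tilde F^{2,V^*}_r$. Splitting $V^*-\tilde F^{2,V^*}_{\tilde\sigma^\dag}=(V^*-\tilde F^{2,\hat V}_{\tilde\sigma^\dag})+(\tilde F^{2,\hat V}_{\tilde\sigma^\dag}-\tilde F^{2,V^*}_{\tilde\sigma^\dag})$, the second summand is at most $(1-\lambda)\|\hat V-V^*\|_{\sup}$ by the contraction estimate (\ref{eq: contraction F 2}). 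For the first summand, using that $\tilde\sigma^\dag$ maximizes $\tilde F^{2,\hat V}_r$, so $\tilde F^{2,\hat V}_{\tilde\sigma^\dag}\ge\tilde F^{2,\hat V}_{\tilde\sigma^*}$, together with $V^*=\tilde F^{2,V^*}_{\tilde\sigma^*}$ and (\ref{eq: contraction F 2}) again, I get $V^*-\tilde F^{2,\hat V}_{\tilde\sigma^\dag}\le \tilde F^{2,V^*}_{\tilde\sigma^*}-\tilde F^{2,\hat V}_{\tilde\sigma^*}\le(1-\lambda)\|V^*-\hat V\|_{\sup}$. Hence $\|V^*-\tilde F^{2,V^*}_{\tilde\sigma^\dag}\|_{\sup}\le 2(1-\lambda)\|V^*-\hat V\|_{\sup}$. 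Then writing $\tilde J^{2,\tilde\sigma^\dag}-V^*=\tilde F^{2,\tilde J^{2,\tilde\sigma^\dag}}_{\tilde\sigma^\dag}-V^*$ and inserting $\pm\tilde F^{2,V^*}_{\tilde\sigma^\dag}$, the contraction bound gives $\|\tilde J^{2,\tilde\sigma^\dag}-V^*\|_{\sup}\le(1-\lambda)\|\tilde J^{2,\tilde\sigma^\dag}-V^*\|_{\sup}+2(1-\lambda)\|V^*-\hat V\|_{\sup}$, which I would solve to obtain $\|\tilde J^{2,\tilde\sigma^\dag}-V^*\|_{\sup}\le\frac{2(1-\lambda)}{\lambda}\|V^*-\hat V\|_{\sup}$.

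It remains to pass from the dual game back to the primal game. Using Lemma~\ref{lemma: supreme norm equality} I would replace $\|\tilde V_\lambda^2-\tilde V_{\lambda,T}^2\|_{\sup}$ by $\|V_\lambda-V_{\lambda,T}\|_{\sup}$. For the worst-case payoff, I would apply the Fenchel-type identities $J^{\tilde\sigma^\dag}(p,q)=\max_\nu\{\tilde J^{2,\tilde\sigma^\dag}(p,\nu)-q^T\nu\}$ from (\ref{eq: worst case payoff relation 2}) and $V_\lambda(p,q)=\max_\nu\{\tilde V_\lambda^2(p,\nu)-q^T\nu\}$ from (\ref{eq: game value relations 4, discounted game}); since the two maxima share the common linear term $-q^T\nu$, the same comparison-of-optimizers argument used in Lemma~\ref{lemma: supreme norm equality} yields $\|J^{\tilde\sigma^\dag}-V_\lambda\|_{\sup}\le\|\tilde J^{2,\tilde\sigma^\dag}-\tilde V_\lambda^2\|_{\sup}$. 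Chaining the three inequalities delivers (\ref{eq: performance difference 1}), and (\ref{eq: performance difference 2}) follows verbatim after replacing $\tilde F^2,\tilde V_\lambda^2,\tilde\sigma^\dag$ by $\tilde F^1,\tilde V_\lambda^1,\tilde\tau^\dag$, using the second claim of Lemma~\ref{lemma: worst case payoff recursive formula} and relations (\ref{eq: worst case payoff relation 4}) and (\ref{eq: game value relations 2, discounted game}).

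The step I expect to be the main obstacle is the two-term bound on $V^*-\tilde F^{2,V^*}_{\tilde\sigma^\dag}$: one must keep careful track of the directions of the inequalities, because $\tilde F^2$ bundles an inner minimization over $b$ with an outer maximization over $r$, and $\tilde\sigma^\dag$ is exactly optimal for $\hat V$ but only near-optimal for $V^*$. Securing both the nonnegativity and the matching upper bound $2(1-\lambda)\|V^*-\hat V\|_{\sup}$ with the correct constant is precisely what produces the factor $2(1-\lambda)/\lambda$.
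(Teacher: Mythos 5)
Your proof is correct and reaches the paper's bound by the same overall architecture --- reduce the primal gap to the dual gap via the Fenchel relations (\ref{eq: worst case payoff relation 2}) and (\ref{eq: game value relations 4, discounted game}), use the fixed-point equation of Lemma \ref{lemma: worst case payoff recursive formula} and the contraction of Lemma \ref{lemma: contraction mappings} to get a recursion of the form $\|\tilde{J}^{2,\tilde{\sigma}^\dag}-\tilde{V}^2_\lambda\|_{\sup}\leq(1-\lambda)\|\tilde{J}^{2,\tilde{\sigma}^\dag}-\tilde{V}^2_\lambda\|_{\sup}+2(1-\lambda)\|\tilde{V}^2_{\lambda,T}-\tilde{V}^2_\lambda\|_{\sup}$, and convert back with Lemma \ref{lemma: supreme norm equality} --- but your middle estimate is organized differently. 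The paper inserts $\tilde{V}^2_{\lambda,T+1}$ as the intermediate quantity, exploits the identity $\tilde{V}^2_{\lambda,T+1}=\tilde{F}^{2,\tilde{V}^2_{\lambda,T}}_{\tilde{\sigma}^\dag}$ (since $\tilde{\sigma}^\dag$ is exactly the first-stage optimal strategy of the $(T+1)$-stage truncated dual game), and bounds $|\tilde{V}^2_{\lambda,T+1}-\tilde{V}^2_\lambda|$ by Theorem \ref{theorem: exponential fast convergence}; it never needs an optimal stationary strategy $\tilde{\sigma}^*$ for the infinite-horizon dual game. You instead insert $\tilde{F}^{2,\tilde{V}^2_\lambda}_{\tilde{\sigma}^\dag}$ and run the classical greedy-policy-loss argument, which requires the Bellman equation (\ref{eq: dynamic pragraming, discounted dual game 2}) together with an attained maximizer $\tilde{\sigma}^*$ satisfying $\tilde{V}^2_\lambda=\tilde{F}^{2,\tilde{V}^2_\lambda}_{\tilde{\sigma}^*}$ (attainment is unproblematic here by compactness of $\Delta(A)^{|K|}$, and the equation is a cited result in the paper, but it is an extra ingredient the paper's proof avoids); in exchange you do not need Theorem \ref{theorem: exponential fast convergence} at all, and you obtain the clean side fact $0\leq \tilde{V}^2_\lambda-\tilde{F}^{2,\tilde{V}^2_\lambda}_{\tilde{\sigma}^\dag}\leq 2(1-\lambda)\|\tilde{V}^2_\lambda-\tilde{V}^2_{\lambda,T}\|_{\sup}$, which quantifies the suboptimality of the greedy policy independently of the fixed-point step. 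Both routes track the inequality directions correctly through the inner $\min_{b}$ / outer $\max_{r}$ structure and yield the identical constant $2(1-\lambda)/\lambda$.
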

\begin{proof}
According to equation (\ref{eq: worst case   payoff relation 2}) and (\ref{eq: game value   relations 2, discounted game}), we have $|J^{\tilde{\sigma}^\dag}(p,q)-V_\lambda(p,q)|=|\max_{\nu\in\mathbb{R}^{|L|}}\{\tilde{J}^{2,\tilde{\sigma}^\dag}(p,\nu)-q^T\nu\} -\max_{\nu\in\mathbb{R}^{|L|}}\{\tilde{V}_\lambda^2(p,\nu)-q^T\nu\}|$. Let $\nu^*$ be the solution to the optimal problem $\max_{\nu\in\mathbb{R}^{|L|}}\{\tilde{V}_\lambda^2(p,\nu)-q^T\nu\}$. Since $J^{\tilde{\sigma}^\dag}(p,q)\leq V_\lambda(p,q)$, we have
\begin{align}
  &|J^{\tilde{\sigma}^\dag}(p,q)-V_\lambda(p,q)| \leq |\tilde{J}^{2,\tilde{\sigma}^\dag}(p,\nu^*)-\tilde{V}_\lambda^2(p,\nu^*)|\nonumber\\
   \leq& \|\tilde{J}^{2,\tilde{\sigma}^\dag}-\tilde{V}_\lambda^2\|_{\sup}, \forall p\in\Delta(K),q\in\Delta(L) \label{eq: temp result 1}
 \end{align}

According to equation (\ref{eq: recursive tilde J 2}), (\ref{eq: convergence rate dual game   2}) and (\ref{eq: contraction F 2}), we have for any $p\in\Delta(K)$ and any $\nu\in\mathbb{R}^{|L|}$,
\begin{align*}
&|\tilde{J}^{2,\tilde{\sigma}^\dag}(p,\nu)-\tilde{V}_\lambda^2(p,\nu)|\\
\leq & |\tilde{J}^{2,\tilde{\sigma}^\dag}(p,\nu)-\tilde{V}^2_{\lambda,T+1}(p,\nu)|+|\tilde{V}^2_{\lambda,T+1}(p,\nu)-\tilde{V}_\lambda^2(p,\nu)|\\
\leq & |\tilde{F}^{2,\tilde{J}^{2,\tilde{\sigma}^\dag}}_{\tilde{\sigma}^\dag(:,p,\nu)}(p,\nu)-\tilde{F}^{2,\tilde{V}^2_{\lambda,T}}_{\tilde{\sigma}^\dag(:,p,\nu)}(p,\nu)|+(1-\lambda)|\tilde{V}^2_{\lambda,T}-\tilde{V}^2_\lambda\|_{\sup}\\
\leq & (1-\lambda)\|\tilde{J}^{2,\tilde{\sigma}^\dag}-\tilde{V}^2_{\lambda,T}\|_{\sup}+(1-\lambda)\|\tilde{V}^2_{\lambda,T}-\tilde{V}^2_\lambda\|_{\sup}.
\end{align*}
Thus, we have $\|\tilde{J}^{2,\tilde{\sigma}^\dag}-\tilde{V}_\lambda^2\|_{\sup} \leq (1-\lambda)\|\tilde{J}^{2,\tilde{\sigma}^\dag}-\tilde{V}^2_{\lambda,T}\|_{\sup}+(1-\lambda)|\tilde{V}^2_{\lambda,T}-\tilde{V}^2_\lambda\|_{\sup}\leq (1-\lambda)\|\tilde{J}^{2,\tilde{\sigma}^\dag}-\tilde{V}^2_{\lambda}\|_{\sup}+2(1-\lambda)|\tilde{V}^2_{\lambda,T}-\tilde{V}^2_\lambda\|_{\sup}$, which implies that
\begin{align*}
  \|\tilde{J}^{2,\tilde{\sigma}^\dag}-\tilde{V}_\lambda^2\|_{\sup} \leq \frac{2(1-\lambda)}{\lambda}\|\tilde{V}^2_{\lambda,T}-\tilde{V}^2_\lambda\|_{\sup}.
\end{align*}

After applying the above inequality to (\ref{eq: temp result 1}), we have for any $p\in \Delta(K)$ and $q\in \Delta(L)$, $|J^{\tilde{\sigma}^\dag}(p,q)-V_\lambda(p,q)|\leq \frac{2(1-\lambda)}{\lambda}\|\tilde{V}^2_{\lambda,T}-\tilde{V}^2_\lambda\|_{\sup}$, which implies that $\|J^{\tilde{\sigma}^\dag}-V_\lambda\|_{\sup}\leq \frac{2(1-\lambda)}{\lambda}\|\tilde{V}^2_{\lambda,T}-\tilde{V}^2_\lambda\|_{\sup}$. According to equation (\ref{eq: game value equalities}), equation (\ref{eq: performance difference 1}) is shown.

With the same technique, equation (\ref{eq: performance difference 2}) can be shown to be true.
\end{proof}

\section{Case Study: Jamming in Underwater Sensor Networks}
The jamming in underwater sensor networks is originally modelled as a two-player zero-sum one-shot Bayesian game in \cite{vadori2015jamming}. We adopt the game model in \cite{vadori2015jamming}, and extend it to a repeated Bayesian game with uncertainties on both the sensors' positions and the jammer's position.

Let us assume that there are two sensors in the network which send data to a sink node through a shared spectrum at $[10,40]$ kHz. The distance from a sensor to the sink node is either $1$ km or $5$ km. The shared spectrum is divided into two channels, $\mathcal{B}_1=[10,25]$ kHz and $\mathcal{B}_2=[25,40]$ kHz. Generally speaking, channel 1 works much better for a sensor far away, and almost the same as channel 2 for a sensor close by. The sensors need to coordinate with each other to use the two channels to transfer as much data as possible to the sink node in the presence of a jammer. The jammer's distance from the sink node is $0.5$ km or $2$ km. While the jammer doesn't know the sensors' positions, the sensors don't know the jammer's position either. For every time period, the jammer can only generate noises in one channel, which can be detected by the sensors. At the same time, the jammer can also observe whether a channel is used by a far-away sensor or a close-by sensor. The jammer's goal is to minimize the data transmitted through the two channels.

The sensors (player 1) have three types according to their position distribution, which are $[1\ 1]$ (type 1), $[1\ 5]$ (type 2), and $[5,5]$ (type 3). We consider $[1\ 5]$ and $[5\ 1]$ as one type. The initial distribution over the three types is $p_0=[0.5\ 0.3\ 0.2]$. When playing the game, they have two choices, sensor 1 uses channel 1 while sensor 2 uses channel 2 (action 1) or sensor 1 uses channel 2 while sensor 2 uses channel 1 (action 2). The jammer (player 2) has two types according to his position, which are $0.5$ (type 1) and $2$ (type 2), and the initial distribution over the two types is $q_0=[0.5\ 0.5]$. His actions are jamming channel 1 (action 1) or channel 2 (action 2). Suppose both the sensors and the jammer transmit with constant power $95$ dB re $\nu$Pa. A channel's capacity can be computed based on the Shannon-Hartley theorem with the average under water signal-to-noise ratio described in \cite{baldo2008cognitive,vadori2015jamming}. The payoff matrices, whose element is the total channel capacity measured by bit/s given both players' types and actions, are given in Table \ref{Table: payoff matrice}.
\begin{table}
\caption{Total channel capacity}
\label{Table: payoff matrice}
\center
\begin{tabular}{|c|cc|cc|}
  \hline
  \backslashbox{k}{l}  & \multicolumn{2}{|c}{1(0.5 km)} & \multicolumn{2}{|c|}{2 (2 km)} \\ \hline
  \multirow{2}{*}{1 ([1 1] km)} & 108.89 & 113.78 & 122.30 & 154.40 \\
   & 108.89 & 113.78 & 122.30 & 154.40 \\ \hline
  \multirow{2}{*}{2 ([1 5] km)} & 11.48 & 107.38 & 24.89 & 107.42\\
   & 99.04 & 20.15 & 100.26 & 60.77 \\ \hline
  \multirow{2}{*}{3 ([5 5] km)} & 1.64 & 13.75 & 2.85 &13.79 \\
   & 1.64 & 13.75 & 2.85 &13.79 \\
  \hline
\end{tabular}
\end{table}

We first consider a two-stage Bayesian repeated game between the sensors and the jammer. Based on the linear program (\ref{eq: LP   player 1}-\ref{eq: LP plyaer 1-2}), we compute the sensors' security strategy shown in Table \ref{table: sensors' security strategy in sequence form} with a security level to be $162.49$ bit/s. According to the linear program (\ref{eq: LP   player 2}-\ref{eq: LP player 2 2}), the jammer's security strategy is computed, and given in Table \ref{table: jammer's security strategy}. The jammer's security level is $162.49$ bit/s which meets the sensors' security level. We then use the players' security strategies in Table \ref{table: sensors' security strategy in sequence form} and \ref{table: sensors' security strategy in sequence form} in the two-stage under water jamming game. The jamming game was run for $100$ times for each experiment, and we did the experiment for $30$ times. The total channel capacity in the jamming game varies from $142.12$ bit/s to $185.23$ bit/s with an average capacity to be $162.79$ bit/s, which is very close to the game value computed according to (\ref{eq: LP   player 1}-\ref{eq: LP plyaer 1-2}) and (\ref{eq: LP   player 2}-\ref{eq: LP player 2 2}).

Next, we would like to use security strategies based on fixed-sized sufficient statistics in the jamming game, and see whether we can still achieve the game value. First of all, we need to verify Theorem \ref{theorem: security strategy relation between primal and dual games}. According to Lemma \ref{lemma: optimal solution T stage} and linear program (\ref{eq: LP   player 2}-\ref{eq: LP player 2 2}) and (\ref{eq: LP   player 1}-\ref{eq: LP plyaer 1-2}), the initial regret $\mu^*$ in type 2 dual game $\tilde{\Gamma}^2_T(p_0,\mu^*)$ is $[-145.45\ -179.53]$, and the initial regret $\nu^*$ in type 1 dual game $\tilde{\Gamma}^1_T(\nu^*,q_0)$ is$[-234.77\ -141.44\ -13.38]$. Player 1's security strategy in dual game $\tilde{\Gamma}^2_T(p_0,\mu^*)$ is computed according to the linear program (\ref{eq: LP player 1 dual game}-\ref{eq: LP player 1-1 dual game}), and given in Table \ref{table: sensor's security strategy in dual game}. We see that player 1's security strategy in dual game $\tilde{\Gamma}^2_T(p_0,\mu^*)$ is different from but very close to player 1's security strategy in the primal game $\Gamma_T(p_0,q_0)$. The security level of $\tilde{\sigma}^*$ in the primal game is $162.49$ (checked by building a linear program the same as (\ref{eq: LP   player 1}-\ref{eq: LP plyaer 1-2}) with $x$ fixed), the game value of the primal game. Therefore, $\tilde{\sigma}^*$ is player 1's another security strategy. Player 2's security strategy in the dual game $\tilde{\Gamma}^1_T(\nu^*,q_0)$ is computed according to linear program (\ref{eq: LP player 2 dual game}-\ref{eq: LP player 2-2 dual game}), and given in Table \ref{table: jammer's security strategy in dual game}, which matches player 2's security strategy in the primal game $\Gamma_T(p_0,q_0)$. We then run the two-stage under water jamming game using security strategies based on fixed sized sufficient statistics, and followed Algorithm \ref{algorithm: player 1's strategy T stage} and \ref{algorithm: player 2's algorithm T stage} to take actions. For each experiment, the two-stage under water jamming game was run for $100$ times, and we did $30$ experiments. The channel capacity varies from $144.38$ to $180.31$ bit/s, with an average capacity to be $162.32$ bit/s, which is almost the same as the game value $162.49$ bit/s.
\begin{table}
  \caption{$\sigma^{1*}_t(k,h_t^A,h_t^B)$ in $\Gamma_T(p_0,q_0)$}
  \label{table: sensors' security strategy in sequence form}
  \center
\begin{tabular}{|c|c|c|c|c|c|}
  \hline
  \backslashbox{k}{$h_t^A,h_t^B$}& $\emptyset,\emptyset$ & 1,1 & 1,2 & 2,1 & 2,2 \\ \hline
  1 & 0.43& 0.5 &0.5& 0.5& 0.5 \\ \hline
  2& 0.18 & 0 &0.23 &0 &0.39 \\ \hline
  3& 0.44 & 0.5 & 0.5 & 0.5& 0.5 \\ \hline
\end{tabular}
\caption{$\tau^{1*}_t(l,h_t^A,h_t^B)$ in $\Gamma_T(p_0,q_0)$}
\label{table: jammer's security strategy}
\begin{tabular}{|c|c|c|c|c|c|}
  \hline
  \backslashbox{l}{$h_t^A,h_t^B$}& $\emptyset,\emptyset$ & 1,1 & 1,2 & 2,1 & 2,2 \\ \hline
  1 & 0.068& 0 & 0.5 & 0 &0.5 \\ \hline
  2& 1& 1& $\backslash$ & 1 & $\backslash$\\ \hline
\end{tabular}
\end{table}

\begin{table}
\caption{$\tilde{\sigma}^{1*}_t(k,h_t^A,h_t^B)$ in $\tilde{\Gamma}^2_T(p_0,\mu^*)$}
\label{table: sensor's security strategy in dual game}
\center
\begin{tabular}{|c|c|c|c|c|c|}
  \hline
  \backslashbox{k}{$h_t^A,h_t^B$}& $\emptyset,\emptyset$ & 1,1 & 1,2 & 2,1 & 2,2 \\ \hline
  1 & 0.33& 0.5 &0.5& 0.5& 0.5 \\ \hline
  2& 0.18 & 0 &0.25 &0.068 &0.38 \\ \hline
  3& 0.45 & 0.5 & 0.5 & 0.5& 0.5 \\ \hline
\end{tabular}
\caption{$\tilde{\tau}^{1*}_t(l,h_t^A,h_t^B)$ in $\tilde{\Gamma}^1_T(\nu^*,q_0)$}
\label{table: jammer's security strategy in dual game}
\begin{tabular}{|c|c|c|c|c|c|}
  \hline
  \backslashbox{l}{$h_t^A,h_t^B$}& $\emptyset,\emptyset$ & 1,1 & 1,2 & 2,1 & 2,2 \\ \hline
  1 & 0.068& 0 & 0.5 & 0 &0.5 \\ \hline
  2& 1& 1& $\backslash$ & 1 & $\backslash$\\ \hline
\end{tabular}
\end{table}

Finally, we test Algorithm \ref{algorithm: player 1 discounted game} and \ref{algorithm: player 2 discounted game} in the discounted under water jamming game with discount constant $\lambda=0.7$ to see whether the outcome satisfies our anticipation. In the algorithms, we set $T=3$, and $V_{\lambda,3}=78.28$ bit/s. First, we found that the highest game value of a $3$-stage discounted game occurs at $p_0=[1\ 0\ 0]$ and $q_0=[0\ 1]$, and $\|V_{\lambda,3}\|_{\sup}=118.99$ bit/s. Second, we found an upper bound on $\|V_\lambda(p_0,q_0)\|_{\sup}$. According to equation (\ref{eq: game value equalities}) and (\ref{eq: convergence rate dual   game 1}), we have $\|V_\lambda-V_{\lambda,3}\|_{\sup}\leq (1-\lambda)^3\|V_\lambda\|_{\sup}$, which implies that $\|V_\lambda\|_{\sup}\leq 1/(1-(1-\lambda)^3) \|V_{\lambda,3}\|_{\sup}=122.29$ bit/s. Third, we derive a lower bound on the security level of the sensors' approximated security strategy. According to equation (\ref{eq: performance difference   1}), (\ref{eq: game value equalities}) and (\ref{eq: convergence rate dual   game 2}), we have $J^{\tilde{\sigma}^\dag}(p_0,q_0)\geq V_{\lambda,3}(p_0,q_0)-2(1-\lambda)^4/\lambda\|V_\lambda\|_{\sup}\geq 75.44$ bit/s. Finally, we get an upper bound on the security level of the jammer's approximated security strategy. According to equation (\ref{eq: performance difference   2}) , (\ref{eq: game value equalities}) and (\ref{eq: convergence rate dual   game 1}), we have $J^{\tilde{\tau}^\dag}(p_0,q_0)\leq V_\lambda(p_0,q_0)+2(1-\lambda)^4/\lambda \|V_\lambda\|_{\sup}\leq V_{\lambda,3}(p_0,q_0)+(1-\lambda)^3\sum_{t=1}^\infty \lambda(1-\lambda)^{t-1} 154.4 +2(1-\lambda)^4/\lambda \|V_\lambda\|_{\sup}\leq 85.28$ bit/s. Therefore, our anticipated channel capacity in the discounted under water jamming game is between $75.44$ and $85.28$ bit/s. Now, we run the discounted under water jamming game (10 stages) for $100$ times. For each run, we truncate the infinite horizon discounted game to $10$ stages, since the total channel capacity for the truncated stages is less than $10^{-3}$ bit/s. The average channel capacity is $82.15$ bit/s, which is within our anticipation, and verifies our main results in the discounted games.

\section{Conclusion and future work}
This paper studies two-player zero-sum repeated Bayesian games, and provides LP formulations to compute players' security strategies in finite horizon case and approximated security strategies in discounted infinite horizon case with performance guarantee. In both cases, strategies based on fixed-sized sufficient statistic are provided. The fixed-sized sufficient statistics for each player consists of the belief over his own type and the regret with respect to the other player's type. We are interested in extending the results to two player zero-sum stochastic Bayesian games in the future. A main difference between a repeated Bayesian game and a stochastic Bayesian game is that there may not exist a Nash Equilibrium in a stochastic Bayesian game if the transition matrix depends on players' actions \cite{rosenberg2004stochastic}. Because of the difference, some results in repeated Bayesian games may not holds in stochastic Bayesian games, and hence further study is necessary.

\bibliography{}

\end{document}